\documentclass[11pt]{article}
\usepackage{graphicx}
\usepackage{tabularx}
\usepackage{url}
\usepackage{amsthm}
\usepackage{amsmath}
\usepackage{amsfonts}

\theoremstyle{plain}
\newtheorem{theorem}{Theorem}
\newtheorem{lemma}{Lemma}
\newtheorem{proposition}{Proposition}

\theoremstyle{definition}

\theoremstyle{remark}

\textheight=21cm
\topmargin=-1cm
\oddsidemargin=0.7cm
\textwidth=15.5cm
\date{}

\begin{document}

\title{Time and space complexity of deterministic and nondeterministic decision trees}
\author{Mikhail Moshkov\thanks{Computer, Electrical and Mathematical Sciences and Engineering Division,
King Abdullah University of Science and Technology (KAUST),
Thuwal 23955-6900, Saudi Arabia. Email: mikhail.moshkov@kaust.edu.sa.
}}
\maketitle

\begin{abstract}
In this paper, we study arbitrary infinite binary information systems each of which consists of an infinite set called universe and an infinite set of two-valued functions (attributes) defined on the universe. We consider the notion of a problem over information system which is described by a finite number of attributes and a mapping corresponding a decision to each tuple of attribute values. As algorithms for problem solving, we use deterministic and nondeterministic decision trees. As time and space complexity, we study the depth and the number of nodes in the decision trees. In the worst case, with the growth of the number of attributes in the problem description, (i) the minimum depth of deterministic decision trees grows either almost as logarithm or linearly, (ii) the minimum depth of nondeterministic decision trees either is bounded from above by a constant or grows linearly, (iii) the minimum number of nodes in deterministic decision trees has either  polynomial or exponential growth, and (iv) the minimum number of nodes in nondeterministic decision trees has either  polynomial or exponential growth. Based on these results, we divide the set of all infinite binary information systems  into five complexity classes, and study for each class issues related to time-space trade-off for decision trees.
\end{abstract}

{\it Keywords}: deterministic decision trees, nondeterministic decision trees, time complexity, space complexity, complexity classes, time-space trade-off.

\section{Introduction} \label{S1}

In this paper, we divide the set of all infinite binary information systems  into five complexity classes depending on the worst case time and space complexity of deterministic an nondeterministic decision trees, and study for each class issues related to time-space trade-off for decision trees.

General information system \cite{Pawlak81}  consists of
a universe (a set of objects) and a set of
attributes (functions with finite image) defined on the universe. An information system is called infinite,
if its set of attributes is infinite. Otherwise, it is called finite. An information system is called binary if each its attribute has values from the set $\{0,1\}$.

Any problem over an information system is described by
a finite number of attributes that divide the universe into domains in which these
attributes have fixed values. A decision is attached to each domain. For a
given object from the universe, it is required to find the decision attached
to the domain containing this object.

As algorithms solving these problems,
deterministic and nondeterministic decision trees are considered.
Deterministic decision trees
are widely used as classifiers to predict decisions for new objects, as a
means of knowledge representation, and as algorithms to solve problems of fault diagnosis,
computational geometry, combinatorial optimization, etc. \cite{Breiman84,Moshkov05,Rokach07}.
Nondeterministic decision trees are less known. They are closely
related to systems of true decision rules that cover all objects from the universe.
As time complexity of a decision tree, we consider its depth -- the maximum number of nodes labeled with attributes in a path from the root to a terminal node. As space complexity of a decision tree, we consider its number of nodes.

Both theoretical and experimental investigations of time complexity of nondeterministic decision trees are mainly related to decision trees for Boolean functions \cite{AbouEisha19,Blum87,Moshkov95a}. Note that, for a Boolean function, the minimum depth of a nondeterministic decision tree is equal to its certificate complexity \cite{Buhrman02}.

The most part of results on deterministic decision trees is obtained for finite information systems.
The results related to infinite information systems were achieved initially in the study of deterministic linear
and algebraic decision trees. In this case, the universe is a subset of $n$-dimensional real space and each attribute is of the kind $\text{sign} f$, where  $f$ is a linear form with $n$ variables
for linear decision trees, and  $f$ is a polynomial with $n$ variables for algebraic
decision trees.

In \cite{Dobkin78,Dobkin79,Moravek72}, the lower bounds close to $n \log_2t$ were obtained
for the minimum depth of linear decision trees, where $t$ is the number of attributes in the problem description. Lower bounds on the minimum depth of algebraic
decision trees  were obtained later \cite{Ben-Or83,Grigoriev95,Steele82,Yao92,Yao94} as well as
lower bounds on the minimum number of nodes in the algebraic decision trees \cite{Grigoriev98}.

In \cite{Dobkin76}, the upper bound
$(3 \cdot 2^{n-2}+n-2)(\log_2 t +1)$
for the minimum depth
of linear decision trees was obtained  for $n \ge 2$. The paper \cite{Moshkov82} contains the upper bound
$(2(n + 2)^3 \log_2(t + 2n+ 2))/ \log_2(n + 2)$.  Similar upper
bound was obtained in \cite{Heide83}.

In our view, the problems of complexity of decision trees over arbitrary infinite information systems were not considered prior to \cite{Moshkov94a,Moshkov94b} for deterministic and prior to \cite{Moshkov95,Moshkov96} for nondeterministic decision trees.

We developed two approaches to the study of deterministic and nondeterministic decision trees over arbitrary information systems: local, when the decision trees solving a problem can use only attributes from the problem description, and global, when the decision trees solving a problem can use arbitrary attributes from the considered information system \cite{Alsolami20,Moshkov94b,Moshkov96,Moshkov03,Moshkov05a,Moshkov05,Moshkov20,Moshkov11}.

Based on the obtained results we can describe possible types of behavior of four functions $h_U^d,h_U^a,L_U^d,L_U^a$ that characterize worst case time and space complexity of deterministic and nondeterministic decision trees over an infinite binary information system $U$.

The function $h_U^d$ characterizes the growth in the worst case of the minimum depth of a deterministic decision tree solving a problem with the growth of the number of attributes in the problem description. The function $h_U^d$ is either bounded from below by logarithm and bounded from above by logarithm to the power $1+\varepsilon$, where $\varepsilon$ is an arbitrary positive real number, or grows linearly.

The function $h_U^a$ characterizes the growth in the worst case of the minimum depth of a nondeterministic decision tree solving a problem with the growth of the number of attributes in the problem description. The function $h_U^a$ is either bounded from above by a constant or grows linearly.

The function $L_U^d$ characterizes the growth in the worst case of the minimum number of nodes in a deterministic decision tree solving a problem with the growth of the number of attributes in the problem description. The function $L_U^d$ has either polynomial or exponential growth.

The function $L_U^a$ characterizes the growth in the worst case of the minimum number of nodes in a nondeterministic decision tree solving a problem with the growth of the number of attributes in the problem description. The function $L_U^a$ has either polynomial or exponential growth.

We see that each of the functions $h_U^d,h_U^a,L_U^d,L_U^a$ has two types of behavior. Thus,  the tuple $(h_U^d,h_U^a,L_U^d,L_U^a)$ can have (a priori) 16 types of behavior. However (and this is one of the main results of the paper), the tuple $(h_U^d,h_U^a,L_U^d,L_U^a)$ can have only five types of behavior. All these types are enumerated in the paper and each type is illustrated by an example. Similar result without proofs and with weaker bounds on the functions $h_U^d,h_U^a,L_U^d,L_U^a$ was announced in  \cite{Moshkov00}.

There are five complexity classes of infinite binary information systems corresponding to the five possible types of  the tuple $(h_U^d,h_U^a,L_U^d,L_U^a)$. For each class, we study joint behavior of time and space complexity of decision trees.

A pair of functions $(\varphi ,\psi )$ is called a boundary $d$-pair of the information system $U$ if, for any problem over $U$, there exists a deterministic decision tree over $U$ which solves this problem and for which the depth is at most $\varphi (n)$ and the number of nodes is at most $ \psi (n)$, where $n$ is the number of attributes in the problem description. A boundary $d$-pair $(\varphi ,\psi )$ of the information system $U$ is called optimal if, for any boundary $d$-pair $(\varphi ^{\prime },\psi ^{\prime })$ of $U$, the inequalities $\varphi ^{\prime }(n)\geq \varphi (n)$ and $\psi ^{\prime}(n)\geq \psi (n)$ hold for any natural $n$. An information system $U$ is called $d$-reachable if the pair $(h_{U}^{d},L_{U}^{d})$ is boundary (and, consequently, optimal boundary) $d$-pair of the system $U$. For nondeterministic decision trees, the notions of a boundary $a$-pair of an information system, an optimal $a$-pair, and $a$-reachable information system are  defined in a similar way. For deterministic decision trees, the best situation is when the considered information system is $d$-reachable: for any boundary $d$-pair $(\varphi ,\psi )$ for an information system $U$ and any natural $n$, $\varphi (n) \ge h_{U}^{d}(n)$ and  $\psi (n) \ge L_{U}^{d}(n)$.
For nondeterministic decision trees, the best situation is when the information system is $a$-reachable.

For four out of the five complexity classes, all information systems from the class are $d$-reachable. One class contains both information systems that are $d$-reachable and information systems that are not $d$-reachable. For each information system $U$ that is not $d$-reachable, we find a nontrivial boundary $d$-pair which is enough close to the pair $(h_{U}^{d},L_{U}^{d})$.
For two out of the five complexity classes, all information systems from the class are $a$-reachable. For the rest three classes, all information systems from the class are not $a$-reachable. For some information systems $U$ that are not $a$-reachable, we find nontrivial boundary $a$-pairs which are enough close to  $(h_{U}^{a},L_{U}^{a})$. For the rest of information systems $U$ that are not $a$-reachable, the pair $(n,L_{U}^{a}(n))$ is the optimal boundary $a$-pair. Note that for these information systems, the function $h_{U}^{a}$ is bounded from above by a constant.

The obtained results are related to time-space trade-off for deterministic and nondeterministic decision trees. For any information system $U$ for each problem, there exists a deterministic decision tree solving this problem which depth  is at most $h_{U}^{d}(n)$, and there exists a deterministic decision tree solving this problem for which the number of nodes is at most $L_{U}^{d}(n)$, where $n$ is the number of attributes in the problem description.
If an information system $U$ is not $d$-reachable, then there exists a problem such that there is no a deterministic decision tree solving this problem which depth  is at most $h_{U}^{d}(n)$ and the number of nodes is at most $L_{U}^{d}(n)$, where $n$ is the number of attributes in the problem description. Similar situation is with nondeterministic decision trees for information systems that are not $a$-reachable.

Let us consider an information system $U$ for which the function $h_{U}^{a}$ is bounded from above by a natural number $c$, and $(n,L_{U}^{a}(n))$ is the optimal boundary $a$-pair. For each problem over $U$, there exists a nondeterministic decision tree solving this problem which depth is at most $c$. However,  for any natural $n$ greater than $c$, there is no a finite upper bound on the number of nodes in such trees for problems described by at most $n$ attributes.

Note that a part of the obtained results can be extended to infinite $k$-valued information systems, $k > 2$, in particular, the results about five possible types of infinite binary information systems -- see \cite{Moshkov00}.

The rest of the paper is organized as follows: Section \ref{S2} contains main results and Sections \ref{S3}-\ref{S5} -- proofs of these results.
\section{Main Results} \label{S2}

Let $A$ be an infinite set and $F$ be an infinite set of functions that are
defined on $A$ and have values from the set $\{0,1\}$. The pair $U=(A,F)$ is
called an infinite binary information system \cite{Pawlak81}, elements of the set $A$
are called objects, and functions from $F$ are called attributes. The set $A$ is called sometimes the universe of the information system $U$.

A problem over $U$ is a tuple of the kind $z=(\nu ,f_{1},\ldots ,f_{n})$,
where $\nu :\{0,1\}^n\rightarrow \mathbb{N}$, $\mathbb{N}$ is the set of
natural numbers $\{1,2,\ldots \}$, and $f_{1},\ldots ,f_{n}\in F$. The
problem $z$ consists in finding the value of the function  $z(x)=\nu
(f_{1}(x),\ldots ,f_{n}(x))$ for a given object $a\in A$. Various problems
of pattern recognition, combinatorial optimization, fault diagnosis,
computational geometry, etc., can be represented in this form. The value $\dim
z=n $ is called the dimension of the problem $z$.

As algorithms for problem solving we consider decision trees. A decision
tree over the information system $U$ is a directed tree with the root in
which the root and edges leaving the root are not labeled, each terminal
node is labeled with a number from $\mathbb{N}$, each working node (which is
neither the root nor a terminal node) is labeled with an attribute from $F$,
and each edge leaving a working node is labeled with a number from the set $%
\{0,1\}$. A decision tree is called deterministic if only one edge leaves
the root and edges leaving an arbitrary working node are labeled with
different numbers.

Let $\Gamma $ be a decision tree over $U$ and $$\xi
=v_{0},d_{0},v_{1},d_{1},\ldots ,v_{m},d_{m},v_{m+1}$$ be a directed path
from the root $v_0$ to a terminal node $v_{m+1}$ of $\Gamma $ (we call such path complete).
Define a subset $A(\xi )$ of the set $A$. If $m=0$, then $A(\xi )=A$. Let $%
m>0$ and, for $i=1,\ldots ,m$, the node $v_{i}$ be labeled with the
attribute $f_{j_{i}}$ and the edge $d_{i}$ be labeled with the number $%
\delta _{i}$. Then $$A(\xi )=\{a:a\in A,f_{j_{1}}(a)=\delta _{1},\ldots
,f_{j_{m}}(a)=\delta _{m}\}.$$

The decision tree $\Gamma $ solves the problem $z$ nondeterministically if,
for any object $a\in A$, there exists a complete path $\xi $ of $\Gamma $
such that $a\in A(\xi )$ and, for each $a\in A$ and each complete path $\xi $
such that $a\in A(\xi )$, the terminal node of $\xi $ is labeled with the
number $z(a)$ (in this case, we can say that $\Gamma$ is a nondeterministic decision tree solving the problem $z$).
In particular, if the decision tree $\Gamma $ solves the
problem $z$ nondeterministically, then, for each complete path $\xi $ of $%
\Gamma $, either the set $A(\xi )$ is empty or the function $z(x)$ is constant
on the set $A(\xi )$. The decision tree $\Gamma $ solves the problem $z$
deterministically if $\Gamma $ is a deterministic decision tree which solves
the problem $z$ nondeterministically (in this case, we can say that $\Gamma$ is a deterministic decision tree solving the problem $z$).

The depth of the decision tree $\Gamma $ is the maximum number of working nodes
in a complete path of $\Gamma $. Denote $h(\Gamma )$ the depth of $\Gamma $
and $L(\Gamma )$ -- the number of nodes in $\Gamma $.

Let $P(U)$ be the set of problems over $U$. For a problem $z$ from $P(U)$,
let $h_{U}^{d}(z)$ be the minimum depth of a decision tree over $U$ solving
the problem $z$ deterministically, $h_{U}^{a}(z)$ be the minimum depth of a
decision tree over $U$ solving the problem $z$ nondeterministically, $%
L_{U}^{d}(z)$ be the minimum number of nodes in a decision tree over $U$
solving the problem $z$ deterministically, and $L_{U}^{a}(z)$ be the minimum
number of nodes in a decision tree over $U$ solving the problem $z$
nondeterministically.

We consider four functions defined on the set $\mathbb{N}$ in the following way: $%
h_{U}^{d}(n)=\max $ $h_{U}^{d}(z)$, $h_{U}^{a}(n)=\max $ $h_{U}^{a}(z)$, $%
L_{U}^{d}(n)=\max $ $L_{U}^{d}(z)$, and $L_{U}^{a}(n)=\max $ $L_{U}^{a}(z)$,
where the maximum is taken among all problems $z$ over $U$ with $\dim z\leq
n $. These functions describe how the minimum depth and the minimum number
of nodes of deterministic and nondeterministic decision trees solving
problems are growing in the worst case with the growth of problem dimension.
To describe possible types of behavior of these four functions, we need to
define some properties of infinite binary information systems.

Let $m\in \mathbb{N}$. A nonempty subset $B$ of the set $A$ is called a  $(m,U)$%
-set if $B$ coincides with the set of solutions from $A$ of an equation
system of the kind %
$
\{f_{1}(x)=\delta _{1},\ldots ,f_{m}(x)=\delta _{m}\},
$
where $f_{1},\ldots ,f_{m}$ are attributes from $F$ (not
necessary pairwise different), and $\delta _{1},\ldots ,\delta _{m}\in
\{0,1\}$. We call such system an $(m,U)$-system of equations. It is clear that an $(m,U)$-set is also an $(m+1,U)$-set.

We say that the information system $U$ satisfies the condition of coverage
if there exists $m\in \mathbb{N}$ such that any $(m+1,U)$-set is a union of
a finite number of $(m,U)$-sets. In this case, we will say that $U$
satisfies the condition of coverage with parameter $m$.

We say that the information system $U$ satisfies the condition of restricted
coverage if there exist $m,t\in \mathbb{N}$ such that any $(m+1,U)$-set is a
union of at most $t$ $(m,U)$-sets. In this case, we will say that $U$
satisfies the condition of restricted coverage with parameters $m$ and $t$.

A subset $\{f_{1},\ldots ,f_{m}\}$ of the set $F$ is called independent if,
for any $\delta _{1},\ldots ,\delta _{m}\in \{0,1\}$, the system of
equations $
\{f_{1}(x)=\delta _{1},\ldots ,f_{m}(x)=\delta _{m}\},
$ has a solution from the set $A$. The empty set of
attributes is independent by definition. We define the parameter $I(U)$
which is called the independence dimension or I-dimension of the
information system $U$ (this notion is similar to the notion of independence number of family of sets \cite{Naiman96}).
If, for each $m\in \mathbb{N}$, the set $F$ contains an independent subset
of cardinality $m$, then $I(U)=\infty $. Otherwise, $I(U)$ is the maximum
cardinality of an independent subset of the set $F$.

We now consider four statements that describe possible types of behavior of
functions $h_{U}^{d}(n)$, $h_{U}^{a}(n)$, $L_{U}^{d}(n)$, and $L_{U}^{a}(n)$%
. The next statement follows immediately from Theorem 2.1 \cite{Moshkov03} and simple fact
that $h_{U}^{d}(n)\leq n$ for any $n\in \mathbb{N}$.

\begin{proposition}
\label{P1}For any infinite binary information system $U$, the function $%
h_{U}^{d}(n)$ has one of the following two types of behavior:

 {\rm (LOG)} If the system $U$ has finite I-dimension and satisfies the
condition of restricted coverage, then for any $\varepsilon $, $%
0<\varepsilon <1$, there exists a positive constant $c$ such that, for any $%
n\in \mathbb{N}$,
\[
\log _{2}(n+1)\leq h_{U}^{d}(n)\leq c(\log _{2}n)^{1+\varepsilon }+1.
\]

 {\rm (LIN)} If the system $U$ has infinite I-dimension or does not
satisfy the condition of restricted coverage, then for any $n\in \mathbb{N}$%
,
\[
h_{U}^{d}(n)=n.
\]
\end{proposition}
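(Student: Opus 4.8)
The plan is to follow the dichotomy in the statement and to use the trivial bound $h_U^d(n)\le n$ as the device that upgrades a linear lower bound into the exact equality of case (LIN). First I would record this upper bound: given a problem $z=(\nu ,f_1,\ldots ,f_n)$ of dimension $n$, the complete deterministic decision tree that successively queries $f_1,\ldots ,f_n$ along every branch has depth $n$, and each of its complete paths $\xi$ fixes all of $f_1(a),\ldots ,f_n(a)$, so $z$ is constant on $A(\xi )$ and $\xi$ may be labeled with the corresponding value of $\nu$. Hence $h_U^d(z)\le n$ for every problem of dimension at most $n$, and $h_U^d(n)\le n$ for all $n$.

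Next I would establish the universal lower bound $h_U^d(n)\ge \log_2(n+1)$, which is what case (LOG) needs on the low side and which is automatically absorbed in case (LIN). For this I would construct, for each $n$, a problem of dimension $n$ whose function takes at least $n+1$ distinct values on $A$. Starting from the whole universe as a single block, I add attributes one at a time: if the attributes chosen so far partition $A$ into $r$ nonempty blocks, then the only functions refining nothing are the at most $2^r$ functions that are constant on each block; since $F$ is infinite it contains some attribute outside this finite set, and that attribute splits a block, raising the block count to at least $r+1$ (and it is automatically distinct from the earlier ones, which are constant on the blocks they define). After $n$ steps I obtain $f_1,\ldots ,f_n$ realizing at least $n+1$ value-tuples; choosing $\nu$ to take distinct values on these tuples yields a problem $z$ with at least $n+1$ distinct decisions. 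Any deterministic decision tree solving $z$ needs a separate terminal node per decision, and such a tree of depth $h$ has at most $2^h$ terminal nodes, so $h\ge \log_2(n+1)$, whence $h_U^d(n)\ge \log_2(n+1)$.

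The remaining content is exactly the two-sided estimate of case (LOG) and the linear lower bound of case (LIN), and here I would invoke Theorem 2.1 of \cite{Moshkov03}. The easy half of the (LIN) lower bound I can supply directly: if $I(U)=\infty$, then for each $n$ there is an independent set $\{f_1,\ldots ,f_n\}$, for which all $2^n$ tuples of values are realized, so taking $\nu$ injective gives a problem with $2^n$ distinct decisions and hence, by the leaf count above, depth at least $n$; combined with $h_U^d(n)\le n$ this yields $h_U^d(n)=n$. The genuinely hard part, and the main obstacle, is the lower bound $h_U^d(n)\ge n$ in the case where $I(U)$ is finite but $U$ fails the condition of restricted coverage, together with the upper bound $h_U^d(n)\le c(\log_2 n)^{1+\varepsilon }+1$ of case (LOG); both are precisely what Theorem 2.1 of \cite{Moshkov03} provides, the latter resting on the interplay between finite I-dimension (limiting how finely attributes separate sets) and restricted coverage (controlling the growth of the number of $(m,U)$-sets). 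Granting that theorem, case (LOG) follows by pairing its upper bound with the logarithmic lower bound above, and case (LIN) follows by pairing its linear lower bound with the trivial $h_U^d(n)\le n$.
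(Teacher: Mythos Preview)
Your proposal is correct and follows essentially the same approach as the paper: the paper simply states that Proposition~\ref{P1} ``follows immediately from Theorem~2.1 \cite{Moshkov03} and the simple fact that $h_U^d(n)\le n$,'' and you do exactly this, invoking Theorem~2.1 of \cite{Moshkov03} for the (LOG) upper bound and for the (LIN) lower bound in the finite-$I$-dimension/no-restricted-coverage subcase, while supplying the trivial bound $h_U^d(n)\le n$ yourself. The only difference is that you additionally spell out direct arguments for the logarithmic lower bound and for the $I(U)=\infty$ subcase of (LIN), which the paper leaves entirely inside the cited theorem; these extra arguments are sound and make your write-up more self-contained, but they do not change the overall route.
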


\begin{proposition}
\label{P2}For any infinite binary information system $U=(A,F)\,$, the
function $h_{U}^{a}(n)$ has one of the following two types of behavior:

 {\rm (CON)} If the system $U$ satisfies the condition of coverage, then
there exists a positive constant $c$ such that, for any $n\in \mathbb{N}$,
\[
h_{U}^{a}(n)\leq c.
\]

{\rm (LIN)} If the system $U$ does not satisfy the condition of coverage,
then for any $n\in \mathbb{N}$,
\[
h_{U}^{a}(n)=n.
\]
\end{proposition}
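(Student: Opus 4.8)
The plan is to reduce both cases to the combinatorics of $(m,U)$-sets through the standard correspondence between nondeterministic decision trees and coverings of the universe. First I would record the dictionary: a nondeterministic decision tree of depth $d$ solving a problem $z$ is the same thing as a \emph{finite} family $\{C_1,\ldots,C_N\}$ of $(d,U)$-sets covering $A$ on each of which $z$ is constant. In one direction, if $\xi_1,\ldots,\xi_N$ are the complete paths of such a tree, then each nonempty $A(\xi_l)$ is an $(m_l,U)$-set with $m_l\le d$, hence a $(d,U)$-set; these sets cover $A$ and $z$ is constant on each by the definition of nondeterministic solving. In the other direction, given such a family with $C_l=\{g_1^l(x)=\sigma_1^l,\ldots,g_d^l(x)=\sigma_d^l\}$ carrying constant value $k_l$, I would build the tree by attaching to the root one chain per $C_l$ testing $g_1^l,\ldots,g_d^l$ along the edges $\sigma_1^l,\ldots,\sigma_d^l$ and ending in a terminal labeled $k_l$; since the root of a nondeterministic tree may have several outgoing edges, this is a legal decision tree of depth $d$ whose complete paths realize exactly $C_1,\ldots,C_N$.

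Next I would note that $h_U^a(n)\le n$ for every $U$: the full deterministic tree testing $f_1,\ldots,f_n$ solves a dimension-$n$ problem with depth $n$, and deterministic solving is a special case of nondeterministic solving. Hence in case \textrm{(LIN)} only the lower bound $h_U^a(n)\ge n$ needs work, and here I would use non-coverage directly. Since $U$ violates the condition of coverage, taking the parameter $m=n-1$ yields an $(n,U)$-set $B=\{f_1(x)=\delta_1,\ldots,f_n(x)=\delta_n\}$ that is not a union of finitely many $(n-1,U)$-sets. I define the dimension-$n$ problem $z=(\nu,f_1,\ldots,f_n)$ with $\nu(\delta_1,\ldots,\delta_n)=2$ and $\nu=1$ elsewhere, so $z$ is the indicator of $B$. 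If a nondeterministic tree of depth $d$ solved $z$, every complete path meeting $B$ would have terminal label $2$, and its $(d,U)$-set would be contained in $B$; finitely many such paths exhibit $B$ as a finite union of $(d,U)$-sets. Were $d\le n-1$ this would contradict the choice of $B$, so $d\ge n$ and $h_U^a(n)=n$.

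For case \textrm{(CON)}, suppose $U$ satisfies the condition of coverage with parameter $m$; I claim $c=m$ works. The core step is to bootstrap the hypothesis from $(m+1,U)$-sets to all $(M,U)$-sets: I would prove by induction on $k\ge 0$ that every $(m+k,U)$-set is a union of finitely many $(m,U)$-sets. The base case is trivial, and for the inductive step I write an $(m+k+1,U)$-set $B$ as $D\cap\{f(x)=\delta\}$ for an $(m+k,U)$-set $D$ and one extra equation; applying the induction hypothesis to $D$ gives $D=\bigcup_j C_j$ with each $C_j$ an $(m,U)$-set, whence each nonempty $C_j\cap\{f(x)=\delta\}$ is an $(m+1,U)$-set and, by the coverage hypothesis, a finite union of $(m,U)$-sets, so $B$ is as well. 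With this lemma the conclusion follows from the dictionary: for any problem $z$ of dimension $n$, the at most $2^n$ nonempty sets $A_{\bar\delta}=\{f_1(x)=\delta_1,\ldots,f_n(x)=\delta_n\}$ partition $A$, $z$ is constant on each, and each is a finite union of $(m,U)$-sets on which $z$ stays constant; the resulting finite covering of $A$ by $(m,U)$-sets gives a nondeterministic tree of depth $m$ solving $z$, so $h_U^a(n)\le m$.

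The step I expect to be the main obstacle is the bootstrapping induction in \textrm{(CON)}: the coverage hypothesis is stated only for the single transition from $m+1$ to $m$, so one must verify that intersecting an already-decomposed $(m+k,U)$-set with one further equation genuinely produces $(m+1,U)$-sets eligible for the hypothesis, and that the number of pieces stays finite at every stage. The dictionary between depth-$d$ trees and coverings by $(d,U)$-sets is routine but must be set up with care, in particular the freedom to let the root of a nondeterministic tree branch into one chain per covering set.
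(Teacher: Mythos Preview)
Your proposal is correct and follows essentially the same approach as the paper. Your bootstrapping induction is exactly the paper's Lemma~\ref{L0}(a), your tree construction for \textrm{(CON)} by gluing chains at the root matches the paper's, and your \textrm{(LIN)} argument via the indicator problem of a non-decomposable $(n,U)$-set is the same contrapositive the paper runs (the paper phrases it as ``if $h_U^a(m+1)\le m$ for some $m$ then $U$ satisfies coverage with parameter $m$''); your explicit dictionary between depth-$d$ nondeterministic trees and finite coverings by $(d,U)$-sets is a clean way to package what the paper does inline.
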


\begin{proposition}
\label{P3}For any infinite binary information system $U$, the function $%
L_{U}^{d}(n)$ has one of the following two types of behavior:

{\rm (POL)} If the system $U$ has finite I-dimension, then for any $%
n\in \mathbb{N}$,
\[
2(n+1)\leq L_{U}^{d}(n)\leq 2(4n)^{I(U)}.
\]

{\rm (EXP)} If the system $U$ has infinite I-dimension, then for any $%
n\in \mathbb{N}$,
\[
L_{U}^{d}(n)=2^{n+1}.
\]
\end{proposition}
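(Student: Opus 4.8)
The plan is to prove the two bounds separately, extracting first two facts that hold for \emph{every} infinite binary information system and then specializing them according to the value of $I(U)$. On the upper side, for any problem $z$ with $\dim z\le n$ the full binary decision tree that queries $f_1,\ldots,f_n$ along every path solves $z$ deterministically and has one root node, $2^{n}-1$ working nodes and $2^{n}$ terminal nodes, hence $2^{n+1}$ nodes; thus $L_U^d(n)\le 2^{n+1}$ for every $U$. On the lower side I would show that $F$ being infinite forces, for each $n$, attributes $f_1,\ldots,f_n\in F$ whose joint partition of $A$ has at least $n+1$ nonempty cells. This goes by induction: a non-constant attribute exists (otherwise $|F|\le 2$) and gives two cells; and if $f_1,\ldots,f_j$ already yield $r\ge j+1$ cells, then not every attribute can be constant on each cell (that would allow at most $2^{r}$ distinct attributes, contradicting $|F|=\infty$), so some $f_{j+1}$ splits a cell and raises the count to at least $j+2$. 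Assigning pairwise different decisions to these $n+1$ cells gives a problem $z$ with $\dim z=n$ taking at least $n+1$ values, so any deterministic decision tree solving it has at least $n+1$ terminal nodes, hence (using that in a deterministic tree the root has one child and each working node at most two, so (leaves)\,$\le$\,(working nodes)\,$+1$) at least $(n+1)+n+1=2(n+1)$ nodes. So $L_U^d(n)\ge 2(n+1)$ for every $U$.

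Case (EXP). When $I(U)=\infty$, for each $n$ there is an independent set $\{f_1,\ldots,f_n\}$, so all $2^n$ tuples of $\{0,1\}^n$ are realized; taking $\nu$ injective produces a problem with $2^n$ distinct decisions, hence at least $2^n$ terminal nodes. By the same leaf/working-node relation the tree has at least $2^n-1$ working nodes and therefore at least $2^{n+1}$ nodes. Together with the universal upper bound this gives $L_U^d(n)=2^{n+1}$.

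Case (POL). Fix $k=I(U)<\infty$ and a problem $z=(\nu,f_1,\ldots,f_n)$. The central estimate is a Sauer--Shelah type bound on the number $P$ of distinct tuples $(f_1(a),\ldots,f_n(a))$, $a\in A$: identifying each $a$ with $T_a=\{i:f_i(a)=1\}\subseteq\{1,\ldots,n\}$, a subset $Y$ is shattered by $\{T_a:a\in A\}$ exactly when $\{f_i:i\in Y\}$ is independent, so the VC dimension of this dual set system is at most $I(U)=k$ and
\[
P\le\sum_{i=0}^{k}\binom{n}{i}\le (k+1)n^{k}\le (4n)^{k}.
\]
I would then build a tree by the obvious recursion: at each node keep the set $S$ of realized cells consistent with the current path; if $z$ is constant on $S$ (in particular if $|S|=1$) create a terminal node, otherwise choose an index $j$ on which two cells of $S$ differ and branch on $f_j$, which splits $S$ into two nonempty parts. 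Every working node then has exactly two children and each terminal node absorbs at least one cell, so there are at most $P$ terminal nodes and, by the leaf/working-node relation, at most $2P\le 2(4n)^{k}$ nodes; with the universal lower bound this yields the stated inequalities.

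The routine parts are the two counting inequalities and the tree recursion; the genuinely load-bearing step is the Sauer--Shelah estimate, namely identifying $I(U)$ with the VC dimension of the dual system $\{T_a:a\in A\}$ and deducing the polynomial cell bound. I expect the main care to be needed there, together with checking that the constant $4$ absorbs the factor $k+1$ uniformly in $n$ and $k$; everything else is bookkeeping about how many nodes a binary tree with a prescribed number of leaves must contain.
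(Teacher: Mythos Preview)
Your argument is correct and follows essentially the same route as the paper: the paper factors the result through the identity $L_U^d(n)=2N_U(n)$ (Proposition~\ref{P5}, proved via the same leaf/working-node count you use) together with the bounds $n+1\le N_U(n)\le (4n)^{I(U)}$ or $N_U(n)=2^n$ (Proposition~\ref{P6}, obtained from the Sauer--Shelah lemma exactly as you sketch). The only cosmetic difference is that the paper isolates $N_U(n)$ as a named intermediate quantity and cites the Sauer--Shelah step from \cite{Moshkov05}, whereas you carry out the cell-count and tree-construction arguments inline.
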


\begin{proposition}
\label{P4}For any infinite binary information system $U$ and any $n\in
\mathbb{N}$,%
\[
L_{U}^{a}(n)=L_{U}^{d}(n).
\]
\end{proposition}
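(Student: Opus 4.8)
The plan is to establish the equality problem-by-problem: I will show that for every problem $z \in P(U)$ one has $L_U^a(z) = L_U^d(z)$, from which $L_U^a(n) = \max\{L_U^a(z) : \dim z \le n\} = \max\{L_U^d(z) : \dim z \le n\} = L_U^d(n)$ follows at once. One inequality is free. By the definitions in Section \ref{S2}, a deterministic decision tree solving $z$ is in particular a decision tree solving $z$ nondeterministically, so a minimum-node deterministic tree for $z$ is an admissible nondeterministic tree for $z$; hence $L_U^a(z) \le L_U^d(z)$, and therefore $L_U^a(n) \le L_U^d(n)$ for every $n$.

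The content is thus the reverse inequality $L_U^d(z) \le L_U^a(z)$, which I would derive from a conversion lemma: from any decision tree $\Gamma$ over $U$ solving $z$ nondeterministically I construct a deterministic decision tree $\Gamma'$ over $U$ solving $z$ with $L(\Gamma') \le L(\Gamma)$; applying this to a minimum-node nondeterministic tree for $z$ gives $L_U^d(z) \le L_U^a(z)$. I would prove the lemma by induction on the number of nodes of $\Gamma$, in the slightly stronger form where the universe $A$ is replaced throughout by an arbitrary nonempty $(m,U)$-set $B$ (this is exactly what the recursion hands down). The base case is when $z$ is constant on $B$ — which in particular happens whenever some edge leaving the root of $\Gamma$ enters a terminal node, since then $A(\xi) = A \supseteq B$ forces $z$ on $B$ to equal that terminal's label — and there the two-node tree root--terminal suffices. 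Otherwise every child of the root is a working node; I would pick the attribute $g$ labelling one such child, split $B$ into $B_0 = \{a \in B : g(a) = 0\}$ and $B_1 = \{a \in B : g(a) = 1\}$, recurse on suitable restrictions of $\Gamma$ to obtain deterministic trees $\Gamma'_0, \Gamma'_1$ solving $z$ on $B_0, B_1$, and join them under a single working node testing $g$ placed directly below the root.

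The whole difficulty is concentrated in the node accounting of the inductive step. Since the reassembled deterministic tree consists of its root, one working node testing $g$, and the two recursively produced subtrees with their own roots deleted, it has exactly $L(\Gamma'_0) + L(\Gamma'_1)$ nodes; so to close the induction I must feed the recursion two restricted nondeterministic trees whose node counts sum to at most $L(\Gamma)$. This is the precise point at which one must show that nondeterminism buys \emph{nothing} for the number of nodes: the sharing of nodes between paths serving $B_0$-objects and paths serving $B_1$-objects, which a nondeterministic tree is free to exploit, has to be reorganized into the disjoint $g=0$ and $g=1$ branches of a deterministic tree \emph{without duplicating any node}. The obstacle is genuine, because a nondeterministic tree may have a root of large out-degree and working nodes with several edges carrying the same label, and because, in the global setting considered here, $\Gamma$ may test arbitrary attributes of $F$ rather than only those occurring in $z$; the splitting argument must therefore be purely structural and uniform over all infinite binary information systems $U$. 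I expect to handle it by routing, for each complete path of $\Gamma$, its nodes to the branch $B_\delta$ selected by the value of $g$ forced along that path, and by arguing that any node that appears to be needed on both sides already heads a subtree solving $z$ on the relevant part of $B$, so it can be assigned to a single side at no extra cost.

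Equality of the pointwise maxima then yields $L_U^a(n) = L_U^d(n)$ for all $n$, uniformly and independently of whether $I(U)$ is finite or infinite; this is why I prefer the per-problem route to arguing through Proposition \ref{P3}, which pins down $L_U^d(n)$ exactly only in the infinite I-dimension case. In that case a direct matching lower bound $L_U^a(n) \ge 2^{n+1}$ could alternatively be produced by taking an independent set $f_1, \dots, f_n$ together with an injective $\nu$, so that the only $z$-constant $(m,U)$-sets are those fixing all $n$ attribute values and every tree solving such a $z$ must contain the full binary tree on these tests; but the conversion lemma subsumes this and also covers the finite I-dimension case, where Proposition \ref{P3} supplies only bounds on $L_U^d(n)$.
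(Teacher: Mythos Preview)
Your approach diverges from the paper's and leaves open precisely the step you yourself flag as ``the whole difficulty''. The paper never proves the per-problem equality $L_U^a(z)=L_U^d(z)$; it instead establishes (Proposition~\ref{P5}) that both worst-case functions equal $2N_U(n)$, where $N_U(n)$ is the maximum, over $f_1,\dots,f_n\in F$, of the number of realizable value-tuples. The chain is: $L_U^d(n)\le 2N_U(n)$ by building the obvious deterministic tree on the problem's own attributes and pruning unrealizable branches; $L_U^a(n)\le L_U^d(n)$ trivially; and $2N_U(n)\le L_U^a(n)$ by choosing a single hard problem with injective $\nu$ and $N_U(f_1,\dots,f_n)=N_U(n)$, so that any nondeterministic solver needs at least $N_U(n)$ terminals, while the structural Lemmas~\ref{L2}--\ref{L4} force $L_w(\Gamma)\ge L_t(\Gamma)-1$ for any minimum-node nondeterministic tree, giving $L(\Gamma)\ge 2N_U(n)$. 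So the equality of the maxima is obtained via a lower bound on $L_U^a$ for one carefully chosen problem per $n$, not by a pointwise conversion.

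Your conversion lemma, by contrast, asserts something strictly stronger, and the inductive step is not established. When the root of $\Gamma$ has several children testing \emph{different} attributes, the subtrees below children not labelled $g$ may well serve objects from both $B_0$ and $B_1$; splitting by $g$ then forces those subtrees to appear on both sides, and you offer no mechanism for avoiding this duplication. The closing suggestion (``any node that appears to be needed on both sides already heads a subtree solving $z$ on the relevant part of $B$, so it can be assigned to a single side'') does not work as stated: a subtree testing attributes independent of $g$ can be essential for coverage on both $g$-sides simultaneously without solving $z$ on either side alone. Unless you can produce two nondeterministic trees for $B_0$ and $B_1$ whose node counts genuinely sum to at most $L(\Gamma)$, the induction does not close. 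The paper's route sidesteps this entirely by never attempting a per-problem comparison.
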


Let $U$ be an infinite binary information system. Proposition \ref{P1} allows us
to correspond to the function $h_{U}^{d}(n)$ its type of behavior from the
set $\{\mathrm{LOG},\mathrm{LIN}\}$. Proposition \ref{P2} allows us to
correspond to the function $h_{U}^{a}(n)$ its type of behavior from the set $%
\{\mathrm{CON},\mathrm{LIN}\}$. Propositions \ref{P3} and \ref{P4} allow us to
correspond to each of the functions $L_{U}^{d}(n)$ and $L_{U}^{a}(n)$ its
type of behavior from the set $\{\mathrm{POL},\mathrm{EXP}\}$. A tuple
obtained from the tuple $$%
(h_{U}^{d}(n),h_{U}^{a}(n),L_{U}^{d}(n),L_{U}^{a}(n))$$ by replacing
functions with their types of behavior is called the type of the information
system $U$. We now describe all possible types of infinite binary information systems.

\begin{theorem}
\label{T1}For any infinite binary information system, its type coincides
with one of the rows of Table \ref{tab1}. Each row of  Table \ref{tab1} is the type of some
infinite binary information system.
\end{theorem}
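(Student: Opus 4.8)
The plan is to split the statement into two parts: first, to show that Propositions \ref{P1}--\ref{P4} already force the type of any $U$ to lie among five tuples; and second, to exhibit an information system realizing each of these five tuples. For the upper-bound part, I would read off from the propositions that the type of $U$ is a function of only three Boolean invariants: whether $I(U)$ is finite, whether $U$ satisfies the condition of restricted coverage, and whether $U$ satisfies the condition of coverage. The key structural observation is a short chain of implications. Directly from the definitions, restricted coverage with parameters $m,t$ is a special case of coverage with parameter $m$, so restricted coverage implies coverage. Hence, by Proposition \ref{P1}, the label $h_U^d=\mathrm{LOG}$ occurs only when $U$ has finite $I$-dimension and satisfies restricted coverage; this forces $I(U)<\infty$ (so $L_U^d=\mathrm{POL}$ by Proposition \ref{P3}) and forces the condition of coverage (so $h_U^a=\mathrm{CON}$ by Proposition \ref{P2}).

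Using Proposition \ref{P4} to identify the type of $L_U^a$ with that of $L_U^d$, the above shows that $h_U^d=\mathrm{LOG}$ pins the whole tuple to $(\mathrm{LOG},\mathrm{CON},\mathrm{POL},\mathrm{POL})$, whereas $h_U^d=\mathrm{LIN}$ leaves only the independent choices $h_U^a\in\{\mathrm{CON},\mathrm{LIN}\}$ and $L_U^d=L_U^a\in\{\mathrm{POL},\mathrm{EXP}\}$, i.e.\ the four tuples $(\mathrm{LIN},\mathrm{CON},\mathrm{POL},\mathrm{POL})$, $(\mathrm{LIN},\mathrm{CON},\mathrm{EXP},\mathrm{EXP})$, $(\mathrm{LIN},\mathrm{LIN},\mathrm{POL},\mathrm{POL})$, and $(\mathrm{LIN},\mathrm{LIN},\mathrm{EXP},\mathrm{EXP})$. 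This yields exactly the five rows of Table \ref{tab1} and rules out the remaining eleven a priori possibilities.

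For the realizability part, I would translate each tuple back into its combination of invariants and build a system with precisely those invariants, then invoke Propositions \ref{P1}--\ref{P4} to read off the type. Three rows are realized by standard systems: coordinate functions $f_i(x)=x_i$ on $A=\{0,1\}^{\mathbb N}$ have infinite $I$-dimension and fail coverage, giving $(\mathrm{LIN},\mathrm{LIN},\mathrm{EXP},\mathrm{EXP})$; point indicators $f_i(x)=[x=i]$ on $A=\mathbb N$ have $I$-dimension $1$ and fail coverage (a cofinite $(m+1,U)$-set is not a finite union of $(m,U)$-sets), giving $(\mathrm{LIN},\mathrm{LIN},\mathrm{POL},\mathrm{POL})$; and threshold attributes $f_a(x)=[x\ge a]$ on $A=\mathbb R$ have $I$-dimension $1$ and satisfy restricted coverage (every nonempty solution set is an interval, hence a single $(2,U)$-set), giving $(\mathrm{LOG},\mathrm{CON},\mathrm{POL},\mathrm{POL})$. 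In each case one verifies the $I$-dimension by exhibiting or excluding shattered attribute subsets and checks coverage by analysing the shape of $(m,U)$-sets directly.

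The two remaining rows are the delicate ones, and I expect them to be the main obstacle, since both demand a carefully engineered gap in the coverage hierarchy. For $(\mathrm{LIN},\mathrm{CON},\mathrm{POL},\mathrm{POL})$ I need a system of finite $I$-dimension that satisfies coverage but not restricted coverage: at the stabilising level $m$ every $(m+1,U)$-set must be a \emph{finite} union of $(m,U)$-sets, yet no uniform bound $t$ works at any level, so the construction must produce single $(m+1,U)$-sets whose decomposition into $(m,U)$-sets provably requires arbitrarily many pieces while all independent attribute subsets stay bounded in size. For $(\mathrm{LIN},\mathrm{CON},\mathrm{EXP},\mathrm{EXP})$ I need a system of infinite $I$-dimension that nevertheless satisfies coverage, so the attributes must admit shattered subsets of every cardinality while, at some fixed level $m$, every $(m+1,U)$-set still collapses into finitely many $(m,U)$-sets. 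For each of these two systems the work is to establish the two combinatorial invariants from first principles; once they are in hand, Propositions \ref{P1}--\ref{P4} deliver the stated type, completing the proof that all five rows are attained.
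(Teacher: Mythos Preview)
Your classification argument (the first half) is correct and is essentially the paper's Lemma~\ref{L5}: reduce to the three Boolean invariants, use that restricted coverage implies coverage, and read off the five admissible tuples from Propositions~\ref{P1}--\ref{P4}. Your three explicit examples for rows 1, 3, 5 are also the paper's examples $U_1$, $U_3$, $U_5$ (your row-1 system on $\mathbb R$ is a harmless variant of the paper's on $\mathbb N$).

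The genuine gap is rows 2 and 4, which you identify as obstacles but do not construct. For row~4 you are over-anticipating the difficulty: take $A=\mathbb N$ and let $F$ be \emph{all} functions $\mathbb N\to\{0,1\}$. Infinite $I$-dimension is immediate, and coverage---indeed restricted coverage with $t=1$---holds trivially, since every nonempty subset $B\subseteq\mathbb N$ is already the solution set of the single equation $\{f(x)=1\}$ for its own characteristic function $f\in F$. There is nothing delicate here. Row~2 does require an actual construction, and your proposal supplies none. The paper takes $A=\mathbb N$ and $F=\{p_i:i\in\mathbb N\}\cup\{l_{2^i}:i\in\mathbb N\}$: the point indicators together with the sparse thresholds give $I$-dimension $1$ and coverage (each domain is a finite set of points plus possibly one tail), but for the problem separating $2^t+1,\ldots,2^t+n$ with $2^t>n$ the only useful attributes are the $p_i$, forcing $h_U^d(n)=n$ and hence, by Proposition~\ref{P1}, the failure of restricted coverage. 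Until you supply concrete systems for these two rows and verify their invariants, the realizability half of the theorem is unproved.
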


\begin{table}[h]
\centering
\begin{tabular}{|l|llll|} \hline
& $h_{U}^{d}(n)$ & $h_{U}^{a}(n)$ & $L_{U}^{d}(n)$ & $L_{U}^{a}(n)$ \\\hline
1 & $\mathrm{LOG}$ & $\mathrm{CON}$ & $\mathrm{POL}$ & $\mathrm{POL}$ \\
2 & $\mathrm{LIN}$ & $\mathrm{CON}$ & $\mathrm{POL}$ & $\mathrm{POL}$ \\
3 & $\mathrm{LIN}$ & $\mathrm{LIN}$ & $\mathrm{POL}$ & $\mathrm{POL}$ \\
4 & $\mathrm{LIN}$ & $\mathrm{CON}$ & $\mathrm{EXP}$ & $\mathrm{EXP}$ \\
5 & $\mathrm{LIN}$ & $\mathrm{LIN}$ & $\mathrm{EXP}$ & $\mathrm{EXP}$ \\ \hline
\end{tabular}
\caption{Possible types of infinite binary information systems}
  \label{tab1}
\end{table}

For $i=1,\ldots ,5$, we denote by $W_{i}$ the class of all infinite binary
information systems which type coincides with the $i$th row of  Table \ref{tab1}. We now study for each of these complexity classes joint behavior of the depth and number of nodes in decision trees solving problems.

A pair of functions $(\varphi ,\psi )$, where $\varphi :\mathbb{N}%
\rightarrow \mathbb{N}\cup \{0\}$ and $\psi :\mathbb{N}\rightarrow \mathbb{N}%
\cup \{0\}$, is called a boundary $d$-pair of the information system $U$ if,
for any problem $z$ over $U$, there exists a decision tree $\Gamma $ over $U$
which solves the problem $z$ deterministically and for which $h(\Gamma )\leq
\varphi (n)$ and $L(\Gamma )\leq \psi (n)$, where $n=\dim z$. A boundary $d$%
-pair $(\varphi ,\psi )$ of the information system $U$ is called optimal if,
for any boundary $d$-pair $(\varphi ^{\prime },\psi ^{\prime })$ of $U$, the
inequalities $\varphi ^{\prime }(n)\geq \varphi (n)$ and $\psi ^{\prime
}(n)\geq \psi (n)$ hold for any $n\in \mathbb{N}$. An information system $U$
is called $d$-reachable if the pair $(h_{U}^{d},L_{U}^{d})$ is boundary
(and, consequently, optimal boundary) $d$-pair of the system $U$.

A pair of functions $(\varphi ,\psi )$, where $\varphi :\mathbb{N}%
\rightarrow \mathbb{N}\cup \{0\}$ and $\psi :\mathbb{N}\rightarrow \mathbb{N}%
\cup \{0\}$, is called a boundary $a$-pair of the information system $U$ if,
for any problem $z$ over $U$, there exists a decision tree $\Gamma $ over $U$
which solves the problem $z$ nondeterministically and for which $h(\Gamma
)\leq \varphi (n)$ and $L(\Gamma )\leq \psi (n)$, where $n=\dim z$. A
boundary $a$-pair $(\varphi ,\psi )$ of the information system $U$ is called
optimal if, for any boundary $a$-pair $(\varphi ^{\prime },\psi ^{\prime })$
of $U$, the inequalities $\varphi ^{\prime }(n)\geq \varphi (n)$ and $\psi
^{\prime }(n)\geq \psi (n)$ hold for any $n\in \mathbb{N}$. An information
system $U$ is called $a$-reachable if the pair $(h_{U}^{a},L_{U}^{a})$ is
boundary (and, consequently, optimal boundary) $a$-pair of the system $U$.

For deterministic decision trees, the best situation is when the considered information system is $d$-reachable, for nondeterministic decision trees -- when the information system is $a$-reachable.

Each information system from the classes $W_2,W_3,W_4,W_5$ is $d$-reachable. The class $W_1$ contains both information systems that are $d$-reachable and information systems that are not $d$-reachable. For each information system $U$ that is not $d$-reachable, we find a nontrivial boundary $d$-pair which is enough close to the pair $(h_{U}^{d},L_{U}^{d})$.

Each information system from the classes $W_3,W_5$ is $a$-reachable. Each information system from the classes $W_1,W_2,W_4$ is not $a$-reachable. For some information systems $U$ which are not $a$-reachable, we find nontrivial boundary $a$-pairs that are enough close to  $(h_{U}^{a},L_{U}^{a})$. For the rest of information systems $U$ that are not $a$-reachable, the pair $(n,L_{U}^{a}(n))$ is the optimal boundary $a$-pair. Note that, for such information systems, the function $h_{U}^{a}$ is bounded from above by a constant.

The obtained results are related to time-space trade-off for deterministic and nondeterministic decision trees. Details can be found in the following five theorems.

\begin{theorem}
\label{T2} {\rm (a)} The class $W_{1}$ contains both information systems that are $%
d $-reachable and information systems that are not $d$-reachable. Let $U$ be
an information system from $W_{1}$ which is not $d$-reachable. Then, for
each $\varepsilon $, $0<\varepsilon <1$, there exists a positive constant $c$
such that $(c(\log _{2}n)^{1+\varepsilon }+1,2^{c(\log _{2}n)^{1+\varepsilon
}+2})$ is a boundary $d$-pair of the system $U$.

{\rm (b)} Let $U$ be an information system from the class $W_{1}$. Then the system
$U$ is not $a$-reachable and, for each $\varepsilon $, $0<\varepsilon <1$,
there exist positive constants $c_{1}$, $c_{2}$, and $c_{3}$ such that $%
(c_{1},2^{c_{2}(\log _{2}n)^{1+\varepsilon }+c_{3}})$ is a boundary $a$-pair
of the system $U$.
\end{theorem}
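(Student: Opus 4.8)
The plan is to treat each part as an \emph{upper} half, where a single tree with good parameters is exhibited, and a \emph{separation} half, where reachability is decided. For part~(a) the upper half is almost immediate. Since $U\in W_1$ has finite I-dimension and satisfies the condition of restricted coverage, Proposition~\ref{P1} (case LOG) gives, for a fixed $\varepsilon\in(0,1)$, a constant $c$ with $h_U^d(n)\le c(\log_2 n)^{1+\varepsilon}+1$ for all $n$. For a problem $z$ with $\dim z\le n$ I would take a deterministic decision tree $\Gamma$ over $U$ solving $z$ with $h(\Gamma)=h_U^d(z)\le c(\log_2 n)^{1+\varepsilon}+1$. The only extra ingredient is the elementary bound $L(\Gamma)\le 2^{h(\Gamma)+1}$, valid for every deterministic tree: below the unique child of the root the tree branches at most binarily (the edge labels $0,1$ are distinct), so a tree of depth $h$ has at most $2^{h}$ terminal and $2^{h}-1$ working nodes plus the root. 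Substituting the depth bound yields $L(\Gamma)\le 2^{c(\log_2 n)^{1+\varepsilon}+2}$, which is exactly the claimed boundary $d$-pair.

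The separation half of part~(a) requires two explicit systems. For a $d$-reachable member of $W_1$ I would use the system of one-sided thresholds on the line: $A=\mathbb{R}$ and $F=\{f_a:a\in\mathbb{R}\}$ with $f_a(x)=1$ iff $x\ge a$. One checks that $I(U)=1$ and that every $(3,U)$-set is a $(2,U)$-set, so restricted coverage holds and $U\in W_1$; moreover every problem reduces to locating a point among $n+1$ consecutive intervals, and a single balanced search tree simultaneously attains the minimum depth $\Theta(\log n)=h_U^d(n)$ and the minimum number of nodes $\Theta(n)=L_U^d(n)$, giving $d$-reachability. For a non-$d$-reachable member I would construct a system (again with finite I-dimension and restricted coverage, hence in $W_1$) together with a family of problems for which the depth-optimal tree is forced to be unbalanced enough to exceed the node bound, while the node-optimal tree is forced to exceed the depth bound; verifying both $U\in W_1$ and this trade-off is the delicate point. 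Here the boundary $d$-pair from the upper half supplies the ``enough close'' pair promised in the statement.

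For the boundary $a$-pair of part~(b) I would convert the deterministic tree $\Gamma$ of part~(a) into a shallow nondeterministic one. Writing $H=c(\log_2 n)^{1+\varepsilon}+1$, each complete path $\xi$ of $\Gamma$ determines an $(m',U)$-set $A(\xi)$ with $m'\le H$, and restricted coverage with parameters $m,t$, applied iteratively, shows by induction on $k$ that every $(k,U)$-set with $k\ge m$ is a union of at most $t^{\,k-m}$ $(m,U)$-sets. Hence each $A(\xi)$ splits into at most $t^{H-m}$ sets defined by $m$ equations, which I turn into a rule of length $m$ carrying the decision of $\xi$. Collecting these over the at most $2^{H}$ complete paths produces at most $(2t)^{H}=2^{c_2(\log_2 n)^{1+\varepsilon}+c_3'}$ rules of length $m$; assembling them under a common root gives a nondeterministic tree of depth $c_1:=m$ and at most $2^{c_2(\log_2 n)^{1+\varepsilon}+c_3}$ nodes. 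It solves $z$ nondeterministically because every object lies on some path $\xi$ of $\Gamma$, hence in one of the covering $(m,U)$-sets (coverage), and each such set is contained in $A(\xi)$, on which $z$ equals the decision of $\xi$ (consistency).

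It remains to prove that every $U\in W_1$ is not $a$-reachable, and this I expect to be the main obstacle. Since $h_U^a$ is bounded by a constant (Proposition~\ref{P2}, case CON, as restricted coverage implies coverage) while $L_U^a=L_U^d$ is only polynomial (Propositions~\ref{P3} and~\ref{P4}), the issue is a genuine time--space trade-off: I must exhibit a single problem every nondeterministic tree of which, having depth at most $h_U^a(n)$, uses strictly more than $L_U^a(n)$ nodes, even though trees of larger depth meet the node bound. Already for the threshold system this is a constant-factor phenomenon: the node-optimal solution is a long ``caterpillar'' of depth $\Theta(n)$, whereas constraining the depth to the constant $h_U^a(n)$ forbids reusing interrogations along a path and inflates the node count past $L_U^a(n)$. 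Turning this intuition into a lower bound valid for an arbitrary $W_1$ system, rather than for a single example, is where the real combinatorial work lies, and I would isolate it as a separate lemma bounding from below the number of nodes of bounded-depth nondeterministic trees for a suitably chosen hard problem.
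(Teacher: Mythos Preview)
Your constructions of the boundary $d$-pair and boundary $a$-pair are correct and in fact more self-contained than the paper's: the paper imports Theorems~2.1 and~2.2 from \cite{Moshkov03} for these bounds, whereas you derive them directly from Proposition~\ref{P1}, the elementary estimate $L(\Gamma)\le 2^{h(\Gamma)+1}$ for deterministic trees, and iterated restricted coverage (your $t^{k-m}$ bound, a sharpening of Lemma~\ref{L0}(b)). These parts are fine, and your route is arguably cleaner.

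Two genuine gaps remain, however. First, you do not actually exhibit a non-$d$-reachable system in $W_1$; you only describe the shape such an example must have. The paper's example (Lemma~\ref{L11}) is the system $U_6$ on $\mathbb{R}_+$ with point attributes $p_i$ ($p_i(a)=1$ iff $a=i$) together with half-line attributes $q_i$ ($q_i(a)=1$ iff $a\ge i+\tfrac12$). The non-reachability argument is a concrete count: for the problem $z=(\nu,p_1,\dots,p_n)$ with injective $\nu$, any deterministic tree of depth $\le h_{U_6}^d(n)<n$ must have at least two terminal nodes labeled with the ``complement'' decision (a single such path would need all of $p_1,\dots,p_n$ along it), whence $L_t\ge n+2$ and $L>2(n+1)=L_{U_6}^d(n)$. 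This is exactly the kind of trade-off you describe, but the specific construction and verification are not supplied in your proposal.

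Second, and more substantially, the non-$a$-reachability of every $U\in W_1$ is left as an intention. The paper's argument (Lemma~\ref{L9}) is short but rests on structural lemmas about nondeterministic trees that you do not have. One fixes $c\ge h_U^a(n)$, picks $n$ with $N_U(n)>2^{2c}$, and takes the problem with injective $\nu$ and $N_U(f_1,\dots,f_n)=N_U(n)$. A node-minimal nondeterministic tree of depth $\le c$ must lie in the class $G_a^f(U)$ (no redundant ``full'' subtrees, Lemma~\ref{L2}), has $L_t\ge N_U(n)$, and cannot lie in $G_d^2(U)$ since that would force depth $\ge\log_2 L_t>2c$. The key combinatorial fact is Lemma~\ref{L4}: every tree in $G_a^f(U)\setminus G_d^2(U)$ satisfies the strict inequality $L_w>L_t-1$. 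Combined with $L_U^a(n)=2N_U(n)$ (Proposition~\ref{P5}) this gives $L>2N_U(n)=L_U^a(n)$. Your ``caterpillar versus constant-depth'' intuition is on the right track, but the proof hinges on isolating the class $G_a^f(U)$ and establishing the strict node inequality for it, which is the step your outline is missing.
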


\begin{theorem}
\label{T3}Let $U$ be an information system from the class $W_{2}$. Then

{\rm (a)} The system $U$ is $d$-reachable.

{\rm (b)} The system $U$ is not $a$-reachable and $(n,L_{U}^{a}(n))$ is the optimal
boundary $a$-pair of the system $U$.
\end{theorem}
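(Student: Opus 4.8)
Setup. Fix $U\in W_2$. Combining the definition of the class with Propositions \ref{P1}--\ref{P4}, the system $U$ has finite I-dimension $I(U)=k$ (this is what forces the POL behaviour of $L_U^d$ and $L_U^a$), it satisfies the condition of coverage (forcing CON for $h_U^a$), and it does \emph{not} satisfy the condition of restricted coverage (this, together with $k<\infty$, is what produces the LIN behaviour $h_U^d(n)=n$ via Proposition \ref{P1}). I will treat (a) and (b) separately, using throughout that $h_U^d(n)=n$, that $h_U^a(n)\le c$ for a constant $c$, and that $L_U^a(n)=L_U^d(n)\le 2(4n)^k$.

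Part (a). To prove $d$-reachability it suffices, for each problem $z$ with $\dim z=n$, to exhibit a single deterministic decision tree $\Gamma$ with $h(\Gamma)\le n=h_U^d(n)$ and $L(\Gamma)\le L_U^d(n)$. The plan is to take the tree furnished by the finite-I-dimension construction underlying the POL bound of Proposition \ref{P3}: it queries only the attributes $f_1,\dots,f_n$ occurring in $z$, each at most once along any complete path, and attains $L(\Gamma)\le L_U^d(n)$. Since it uses only the $n$ attributes of $z$, each at most once per path, its depth is automatically at most $n$. The conceptual point is that in $W_2$ the depth target $h_U^d(n)=n$ is the loosest possible, so a node-optimal tree built over the problem's own attributes meets it for free; this is precisely what fails in $W_1$, where the tight logarithmic depth target obstructs simultaneity. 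Hence $(h_U^d,L_U^d)$ is a boundary $d$-pair and $U$ is $d$-reachable.

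Part (b). First, $(n,L_U^a(n))$ is a boundary $a$-pair: the deterministic tree from part (a) has depth at most $n$ and, since $L_U^a(n)=L_U^d(n)$ by Proposition \ref{P4}, at most $L_U^a(n)$ nodes, and a deterministic tree also solves $z$ nondeterministically. For optimality I must show that every boundary $a$-pair $(\varphi',\psi')$ satisfies $\varphi'(n)\ge n$ and $\psi'(n)\ge L_U^a(n)$ for all $n$. The second inequality is immediate from the definition of $L_U^a(n)$ as a maximum of minima, so the entire content lies in the first. Fix $n$ and put $N=\psi'(n)$. A nondeterministic tree $\Gamma$ of depth at most $d$ solving a problem $z$ induces a cover of $A$ by the sets $A(\xi)$ over its complete paths $\xi$ with $A(\xi)\ne\varnothing$; each such $A(\xi)$ is defined by at most $d$ equations and lies in a set on which $z$ is constant, and the number of such paths is at most $L(\Gamma)$. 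I now exploit the failure of restricted coverage: with $m=n-1$ and $t=N$ there is an $(n,U)$-set $B=\{x:g_1(x)=\sigma_1,\dots,g_n(x)=\sigma_n\}$ that is not a union of at most $N$ $(n-1,U)$-sets. Define the dimension-$n$ problem $z=(\nu,g_1,\dots,g_n)$ with $\nu(\delta_1,\dots,\delta_n)=1$ if $(\delta_1,\dots,\delta_n)=(\sigma_1,\dots,\sigma_n)$ and $\nu=2$ otherwise, so $z^{-1}(1)=B$. In any nondeterministic tree solving $z$ of depth at most $n-1$, the paths ending with label $1$ have $A(\xi)\subseteq B$, each is an $(n-1,U)$-set, and together they cover $B$; by the choice of $B$ there are more than $N$ of them, so the tree has more than $N$ nodes. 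Thus any tree solving $z$ with at most $N=\psi'(n)$ nodes has depth at least $n$. Since $(\varphi',\psi')$ is a boundary $a$-pair, it solves $z$ with a tree of depth at most $\varphi'(n)$ and at most $\psi'(n)$ nodes, and that tree has depth at least $n$; hence $\varphi'(n)\ge n$ (the case $n=1$ being handled by any non-constant one-attribute problem). Therefore $(n,L_U^a(n))$ is the optimal boundary $a$-pair. Finally $U$ is not $a$-reachable: if $(h_U^a,L_U^a)$ were a boundary $a$-pair, the inequality just proved would give $h_U^a(n)\ge n$ for all $n$, contradicting $h_U^a(n)\le c$ once $n>c$.

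Main obstacle. The substantive step is the covering lower bound in part (b): translating ``solves $z$ nondeterministically with few nodes at small depth'' into ``covers $B$ by few $(n-1,U)$-sets'', and then invoking the failure of restricted coverage with parameters tuned to $\psi'(n)$ so that the forced depth is exactly $n$. A secondary point requiring care is in part (a), namely verifying that the Proposition \ref{P3} construction can be kept within the attributes of $z$ (hence has depth at most $n$) while still meeting the node bound $L_U^d(n)$.
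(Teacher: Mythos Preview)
Your proof is correct and, for part (a) and the optimality portion of part (b), follows essentially the same route as the paper. The paper packages part (a) as Lemma~\ref{L7}: build the canonical deterministic tree on $f_1,\dots,f_n$, prune unrealizable paths and collapse single-child working nodes to land in $G_d^2(U)$, and conclude $L(\Gamma')\le 2N_U(n)=L_U^d(n)$ via Proposition~\ref{P5} while $h(\Gamma')\le n=h_U^d(n)$. Your description of this tree is slightly imprecise (it is not really a ``finite-$I$-dimension construction'' but simply the pruned exhaustive tree over the problem's attributes; the $I$-dimension only enters through the bound on $N_U(n)$), but the substance matches. For the optimality of $(n,L_U^a(n))$, your covering argument is the contrapositive of the paper's Lemma~\ref{L10}: you fix $n$, set $t=\psi'(n)$, and exhibit a single bad $(n,U)$-set, whereas the paper assumes $q(m+1)\le m$ and shows that \emph{every} $(m+1,U)$-set is covered by at most $r(m+1)$ many $(m,U)$-sets, contradicting failure of restricted coverage. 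These are logically the same argument.

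Where you genuinely diverge is in proving that $U$ is not $a$-reachable. The paper invokes a separate structural result, Lemma~\ref{L9}, whose proof uses the machinery of full subtrees and the classes $G_a^f(U)$, $G_d^2(U)$ (Lemmas~\ref{L3} and \ref{L4}) to show that a minimum-node nondeterministic tree of bounded depth must have strictly more than $2N_U(n)$ nodes. You bypass this entirely: once $(n,L_U^a(n))$ is known to be the optimal boundary $a$-pair, any boundary $a$-pair $(\varphi',\psi')$ must satisfy $\varphi'(n)\ge n$, which immediately rules out $(h_U^a,L_U^a)$ since $h_U^a$ is bounded. Your route is shorter and avoids the node-counting lemmas; the paper's route, on the other hand, isolates Lemma~\ref{L9} as a freestanding fact (``coverage implies not $a$-reachable'') that it reuses verbatim for classes $W_1$ and $W_4$, where the optimality shortcut is not available in the same form.
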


\begin{theorem}
\label{T4}Let $U$ be an information system from the class $W_{3}$. Then

{\rm (a)} The system $U$ is $d$-reachable.

{\rm (b)} The system $U$ is $a$-reachable.
\end{theorem}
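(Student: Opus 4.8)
\emph{Characterisation of $W_3$ and reduction of (b) to (a).} By Propositions~\ref{P1}--\ref{P4}, an information system of type row~$3$ satisfies $h_U^d(n)=h_U^a(n)=n$ and $L_U^a(n)=L_U^d(n)$; since $L_U^d$ is of type POL, Proposition~\ref{P3} gives $I(U)=k<\infty$ and $L_U^d(n)\le 2(4n)^k$. As observed right after the definition of optimal boundary pairs, a boundary pair that equals $(h_U^d,L_U^d)$ (resp.\ $(h_U^a,L_U^a)$) is automatically optimal, so to prove (a) it suffices to exhibit, for every problem $z$ of dimension $n$, a single deterministic tree solving $z$ with depth at most $h_U^d(n)=n$ and at most $L_U^d(n)$ nodes, and likewise for (b) with the $a$-quantities. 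I first note that (b) needs no separate argument: a deterministic tree solving $z$ also solves it nondeterministically, and for $W_3$ the targets coincide, $h_U^a(n)=h_U^d(n)=n$ and $L_U^a(n)=L_U^d(n)$. Hence the tree produced in (a) is at once a witness for (b), and all the content lies in part (a).

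\emph{Strategy for (a).} Fix $z$ of dimension $n$. As $L_U^d(z)\le L_U^d(n)$, it is enough to realise the minimum node count $L_U^d(z)$ by a tree of depth at most $n$. I would start from a deterministic tree $\Gamma_0$ solving $z$ with $L(\Gamma_0)=L_U^d(z)$ and study it through realisable tuples: for a node $v$ reached by a path $\xi$, let $T(v)$ be the set of tuples $(f_1(a),\dots,f_n(a))$ with $a\in A(\xi)$. A Sauer--Shelah type estimate, with the I-dimension $k$ in the role of a VC dimension, gives $|T(\text{root})|\le\sum_{i=0}^{k}\binom{n}{i}=O(n^k)$. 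In a minimum-node tree every query is non-redundant: if some outgoing edge of $v$ did not strictly decrease $T(v)$, that edge and its subtree could be rerouted into the other, lowering the node count; hence I may assume all working nodes are binary and that $T(\cdot)$ strictly decreases along each path. Then $L(\Gamma_0)$ is determined by the number $\lambda$ of leaves, so minimising nodes is the same as minimising the number of $z$-constant blocks of tuples that the tree must separate.

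\textbf{Main obstacle.} The difficulty is that a minimum-node tree need not be shallow. Since $T(\cdot)$ may drop by only one tuple per level, $\Gamma_0$ could a priori be a caterpillar of depth up to $|T(\text{root})|-1=O(n^k)\gg n$; moreover the cheapest way to cut the number of leaves may genuinely require attributes outside $\{f_1,\dots,f_n\}$ (for a parity-type problem a single correlated attribute can produce two leaves where the coordinates alone force four). The crux is therefore a realisability lemma: \emph{the minimum number of leaves for $z$ is attained by a tree of depth at most $n$}. I would prove it by fixing an optimal $z$-constant partition of the realisable tuples and re-deriving it top-down, selecting at each live node an attribute that separates two still-undistinguished blocks and halting as soon as the surviving block is $z$-constant; the point to be established is that no path needs more than $n$ such separations, and this is exactly where the finiteness of $I(U)$ and the fact that the decision $z$ factors through the $n$ coordinates $f_1,\dots,f_n$ enter. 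This step is what separates $W_3$ from $W_1$: when $h_U^d(n)=n$ the depth target is the loosest possible and the simultaneous optimisation succeeds, whereas for $W_1$, where $h_U^d$ is only logarithmic, the analogous simultaneous bound fails and non-$d$-reachable systems appear.

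\emph{Conclusion.} Granting the realisability lemma, every problem $z$ of dimension $n$ admits a deterministic tree $\Gamma$ with $h(\Gamma)\le n=h_U^d(n)$ and $L(\Gamma)=L_U^d(z)\le L_U^d(n)$; thus $(h_U^d,L_U^d)$ is a boundary and hence optimal boundary $d$-pair, giving (a). Viewing this $\Gamma$ as a nondeterministic tree and using $h_U^a(n)=n$ and $L_U^a(n)=L_U^d(n)$ shows that $(h_U^a,L_U^a)$ is a boundary $a$-pair, giving (b).
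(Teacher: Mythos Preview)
Your reduction of (b) to (a) and your identification of the targets $h_U^d(n)=h_U^a(n)=n$ and $L_U^a(n)=L_U^d(n)$ are correct and match the paper. The gap is in your strategy for (a).

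You set yourself the task of realising the minimum node count $L_U^d(z)$ by a tree of depth at most $n$, and you defer this to a ``realisability lemma'' whose proof you only sketch. That sketch is not a proof: you do not explain why the top-down greedy reconstruction reproduces the \emph{optimal} leaf count, nor why each path terminates within $n$ queries when the attributes chosen may lie outside $\{f_1,\dots,f_n\}$. Saying ``this is exactly where finiteness of $I(U)$ enters'' is not an argument. Worse, the target you chose is needlessly strong: you only need $L(\Gamma)\le L_U^d(n)$, not $L(\Gamma)=L_U^d(z)$.

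The paper's route is far simpler and avoids your lemma entirely. The key fact you are not exploiting is Proposition~\ref{P5}: $L_U^d(n)=2N_U(n)$. Given $z=(\nu,f_1,\dots,f_n)$, build the full deterministic tree of depth $n$ that queries $f_1,\dots,f_n$ in order, then delete all nodes and edges not lying on a realisable complete path and contract every working node left with a single child. The resulting tree $\Gamma'$ lies in $G_d^2(U)$, solves $z$, has depth at most $n=h_U^d(n)$, and has exactly $N_U(f_1,\dots,f_n)\le N_U(n)$ terminal nodes; by Lemma~\ref{L3} this gives $L(\Gamma')\le 2N_U(n)=L_U^d(n)$. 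This is the content of Lemma~\ref{L7}, and Lemma~\ref{L8} is the identical observation for the $a$-case. No Sauer--Shelah estimate, no analysis of an abstract minimum-node tree, and no unproved lemma are needed: the canonical tree on the problem's own attributes already hits both bounds simultaneously because $L_U^d(n)$ is \emph{exactly} $2N_U(n)$, not merely bounded by it.
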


\begin{theorem}
\label{T5} The class $W_{4}$ contains both information systems that satisfy the condition of restricted coverage and information systems that do not satisfy this condition.
Let $U$ be an information system from the class $W_{4}$. Then

{\rm (a)} The system $U$ is $d$-reachable.

{\rm (b)} The system $U$ is not $a$-reachable. If the system $U$ does not satisfy
the condition of restricted coverage, then $(n,L_{U}^{a}(n))$ is the optimal
boundary $a$-pair of the system $U$. If the system $U$ satisfies the
condition of restricted coverage, then there exist positive constants $c_{1}$
and $c_{2}$ such that $(c_{1},c\,_{2}^{n})$ is a boundary $a$-pair of the
system $U$.
\end{theorem}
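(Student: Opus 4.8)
The plan is to first read off from Propositions \ref{P2}--\ref{P4} that $W_4$ consists exactly of the systems with infinite I-dimension that satisfy the condition of coverage: EXP for $L_U^d$ forces $I(U)=\infty$ (Proposition \ref{P3}), CON for $h_U^a$ forces coverage (Proposition \ref{P2}), and then LIN for $h_U^d$ and EXP for $L_U^a$ follow automatically (Propositions \ref{P1}, \ref{P4}). Part (a) is then immediate: for a problem $z$ of dimension $n$, the complete binary tree that queries $f_1,\dots ,f_n$ level by level and labels each leaf $(\delta_1,\dots ,\delta_n)$ by $\nu(\delta_1,\dots ,\delta_n)$ is a deterministic decision tree solving $z$ with depth exactly $n=h_U^d(n)$ and exactly $2^{n+1}=L_U^d(n)$ nodes. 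Hence $(h_U^d,L_U^d)$ is a boundary $d$-pair; since any boundary $d$-pair $(\varphi',\psi')$ satisfies $\varphi'(n)\ge h_U^d(n)$ and $\psi'(n)\ge L_U^d(n)$ by the very definition of these functions as maxima, the system is $d$-reachable.

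For the dichotomy inside $W_4$ I would exhibit two systems. For the restricted-coverage case take $A=\mathbb{N}$ and $F=\{\chi_S:S\subseteq\mathbb{N},\ |S|<\infty\}$; choosing $S_i=\{x_\delta:\delta_i=1\}$ for distinct witnesses $x_\delta$ shatters any $k$ attributes, so $I(U)=\infty$, while every $(2,U)$-set equals a single $(1,U)$-set (for example $\{x\in S_1,\,x\notin S_2\}=\{x:\chi_{S_1\setminus S_2}(x)=1\}$), giving restricted coverage with $t=1$. For the complementary case I would use the disjoint-union construction $U=\bigsqcup_k U_k$, where $U_k$ has I-dimension at least $k$ (forcing $I(U)=\infty$) and a fixed coverage parameter $m$, but is built so that some $(m+1,U_k)$-set requires at least $k$ $(m,U_k)$-sets in any decomposition; coverage with parameter $m$ then survives the union while restricted coverage fails because the decomposition numbers are unbounded. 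Verifying that both systems lie in $W_4$ is a direct application of Propositions \ref{P2} and \ref{P3}.

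The heart of part (b) is a node lower bound for shallow nondeterministic trees. A nondeterministic tree solving $z$ is, up to relabeling, a family of complete paths whose sets $A(\xi)$ cover $A$, each contained in a single level set of $z$; the minimum $L_U^a(n)=2^{n+1}$ is realized by the full tree above, whose economy comes entirely from the sharing of long prefixes. I would show that capping the depth below $n$ destroys this sharing: for an injective $\nu$ on $n$ independent attributes (which exist since $I(U)=\infty$) every complete path lies in one of the $2^n$ pattern classes, so at least $2^n$ complete paths are needed, and a counting argument shows that without full-depth prefix sharing the tree has strictly more than $2^{n+1}$ nodes. Thus $(h_U^a,L_U^a)$, whose first coordinate is the constant bounding $h_U^a$, is not a boundary $a$-pair, i.e.\ $U$ is not $a$-reachable. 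When $U$ satisfies restricted coverage with parameters $m,t$, iterating the decomposition (an $(m+2)$-set is a union of at most $t$ $(m+1)$-sets, and so on) shows every $(n,U)$-set is a union of at most $t^{\,n-m}$ $(m,U)$-sets; covering the at most $2^n$ pattern classes of an arbitrary dimension-$n$ problem in this way and hanging all the resulting length-$m$ paths from the root produces a tree of depth $m$ with at most $1+(m+1)\,2^n t^{\,n-m}\le c_2^{\,n}$ nodes, which is the claimed boundary $a$-pair $(c_1,c_2^{\,n})$ with $c_1=m$.

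It remains to prove, in the non-restricted case, that $(n,L_U^a)$ is optimal, i.e.\ that every boundary $a$-pair $(\varphi',\psi')$ has $\varphi'(n)\ge n$; this is the main obstacle. Unlike the restricted case, there is now no uniform bound on the number of short sets needed to cover a pattern class, and using the failure of restricted coverage I would show that for every $n$ and every $N$ there is a problem of dimension $n$ all of whose solving trees of depth at most $n-1$ have more than $N$ nodes. Hence no finite $\psi'(n)$ can accompany a depth bound $\varphi'(n)<n$, forcing $\varphi'(n)\ge n$; together with $\psi'(n)\ge L_U^a(n)$, which always holds, and the fact that the full tree realizes $(n,L_U^a(n))$, this shows $(n,L_U^a)$ is the optimal boundary $a$-pair. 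The delicate point throughout is the passage from the combinatorial covering statements (restricted or unrestricted) to tight node counts for trees while keeping the accounting of shared prefixes honest, and this is where I expect the real work to lie.
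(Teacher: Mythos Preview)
Your overall architecture matches the paper's: characterize $W_4$ via Propositions~\ref{P1}--\ref{P4}, get $d$-reachability from $h_U^d(n)=n$, get non-$a$-reachability from coverage, and split the boundary $a$-pair analysis on restricted coverage exactly as Lemmas~\ref{L10} and~\ref{L13} do. Your argument for (a) is in fact slightly cleaner than the paper's Lemma~\ref{L7}, because for $W_4$ you know $L_U^d(n)=2^{n+1}$ exactly, so the unpruned full binary tree already hits both bounds; the paper proves a more general lemma (valid whenever $h_U^d(n)=n$) and therefore has to prune unrealizable paths to reach $2N_U(n)$ nodes. Your restricted-coverage example $F=\{\chi_S:|S|<\infty\}$ is a legitimate alternative to the paper's $U_4$.

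There are two places where your sketch is thinner than it needs to be. First, the ``counting argument'' for non-$a$-reachability is where the paper spends its effort, and your phrase ``without full-depth prefix sharing the tree has strictly more than $2^{n+1}$ nodes'' does not yet contain the idea. The paper's route (Lemmas~\ref{L2}--\ref{L4}) is: in a node-minimal nondeterministic tree, whenever two equally-labeled edges leave a node, neither subtree can be \emph{full} (otherwise one could be deleted); one then proves by induction that any tree with this ``no full sibling'' property that is not already a deterministic full binary tree satisfies $L_w(\Gamma)>L_t(\Gamma)-1$. Since a deterministic full binary tree with $2^n$ leaves has depth $n>c\ge h_U^a(n)$, the optimal shallow tree must be of the second kind, whence $L(\Gamma)>2L_t(\Gamma)\ge 2^{n+1}=L_U^a(n)$. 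Your informal sharing intuition does not obviously rule out clever reuse in a genuinely nondeterministic tree; the ``full subtree'' notion is what makes the induction go through. Second, your disjoint-union construction for a $W_4$ system without restricted coverage is not yet an example: once you extend attributes of $U_k$ by $0$ to the other pieces, systems mixing attributes from different $U_k$ with value $0$ produce large sets that need not respect any single $U_k$'s coverage parameter, so ``coverage with parameter $m$ survives the union'' requires an argument you have not given. The paper instead builds one concrete system $U_7$ on $\mathbb{Z}$ by gluing a $W_2$-type piece on $\mathbb{N}$ (which already kills restricted coverage) to an ``all functions vanishing on $\mathbb{N}$'' piece on the nonpositive integers (which supplies infinite I-dimension and coverage with parameter $2$); something equally explicit is what you need here.
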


\begin{theorem}
\label{T6}Let $U$ be an information system from the class $W_{5}$. Then

{\rm (a)} The system $U$ is $d$-reachable.

{\rm (b)} The system $U$ is $a$-reachable.
\end{theorem}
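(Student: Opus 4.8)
The plan is to observe that membership of $U$ in the class $W_5$ pins down all four complexity functions exactly, and then to exhibit a single decision tree that realizes the optimal pair simultaneously for both the deterministic and the nondeterministic settings. Since $U\in W_5$ has type equal to the fifth row of Table \ref{tab1}, the types of $h_U^d,h_U^a,L_U^d,L_U^a$ are $\mathrm{LIN},\mathrm{LIN},\mathrm{EXP},\mathrm{EXP}$. Propositions \ref{P1} and \ref{P2} then give $h_U^d(n)=h_U^a(n)=n$, and Propositions \ref{P3} and \ref{P4} give $L_U^d(n)=L_U^a(n)=2^{n+1}$ for every $n\in\mathbb{N}$. (Consistency is automatic here: the $\mathrm{EXP}$ behavior of $L_U^d$ means $U$ has infinite I-dimension, which already forces the $\mathrm{LIN}$ behavior of $h_U^d$; the $\mathrm{LIN}$ behavior of $h_U^a$ is the genuine extra requirement that $U$ fail the condition of coverage.)

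For an arbitrary problem $z=(\nu,f_1,\ldots,f_n)$ over $U$ with $\dim z=n$, I would build the complete deterministic decision tree $\Gamma_z$ that queries $f_1$ at the working node reached from the root, then $f_2$ at each of its children, and so on down to $f_n$, branching on both values $0,1$ at every working node. Each complete path corresponds to a tuple $(\delta_1,\ldots,\delta_n)\in\{0,1\}^n$, and I label its terminal node with $\nu(\delta_1,\ldots,\delta_n)$. A routine check shows $\Gamma_z$ solves $z$ deterministically, since for each object $a$ the unique consistent path ends at a terminal labeled $\nu(f_1(a),\ldots,f_n(a))=z(a)$. Every complete path meets exactly $n$ working nodes, so $h(\Gamma_z)=n$; counting the root, the $2^i$ nodes labeled $f_{i+1}$ for $i=0,\ldots,n-1$, and the $2^n$ terminal nodes yields $L(\Gamma_z)=1+(2^n-1)+2^n=2^{n+1}$.

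For part (a), since $h(\Gamma_z)=n=h_U^d(n)$ and $L(\Gamma_z)=2^{n+1}=L_U^d(n)$ for every problem of dimension $n$, the pair $(h_U^d,L_U^d)$ is a boundary $d$-pair of $U$; it is automatically optimal, because any boundary $d$-pair $(\varphi',\psi')$ must handle a depth-maximizing problem $z^*$ of dimension $n$ with a tree of depth $\le\varphi'(n)$, forcing $\varphi'(n)\ge h_U^d(z^*)=h_U^d(n)$, and likewise $\psi'(n)\ge L_U^d(n)$. Hence $U$ is $d$-reachable. For part (b), a tree solving $z$ deterministically also solves it nondeterministically, so the same $\Gamma_z$ is a nondeterministic solution with $h(\Gamma_z)=n=h_U^a(n)$ and $L(\Gamma_z)=2^{n+1}=L_U^a(n)$; thus $(h_U^a,L_U^a)$ is a boundary (hence optimal) $a$-pair and $U$ is $a$-reachable.

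The argument has no real obstacle; the one point worth verifying is that the extremal depth $n$ and the extremal node count $2^{n+1}$ are attained by one and the same tree, so that in $W_5$ no genuine time--space trade-off arises. What makes $W_5$ degenerate is that \emph{both} the worst-case depth and the worst-case node count are realized by the complete depth-$n$ tree: its depth $n$ matches $h_U^d(n)=h_U^a(n)=n$, and its size $2^{n+1}$ matches $L_U^d(n)=L_U^a(n)$. The contrast is with the nondeterministic setting in a class such as $W_4$, where $h_U^a$ is only a constant; there the complete tree is too deep to witness $a$-reachability, no single tree can be both shallow and small, and a genuine trade-off (reflected in Theorem \ref{T5}) emerges. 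The substantive content of Theorem \ref{T6} therefore lies in Propositions \ref{P1}--\ref{P4}, not in the tree construction.
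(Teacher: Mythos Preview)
Your proof is correct and follows essentially the same approach as the paper: both arguments build the complete depth-$n$ decision tree on the attributes $f_1,\ldots,f_n$ and observe that its depth and node count meet the required bounds. The paper factors this through the slightly more general Lemmas \ref{L7} and \ref{L8} (which include a pruning step to bring the node count down to $2N_U(n)$), but in $W_5$ one has $N_U(n)=2^n$, so your unpruned tree already has exactly $2^{n+1}=L_U^d(n)=L_U^a(n)$ nodes and no pruning is needed.
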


Table \ref{tab3} summarizes Theorems \ref{T1}-\ref{T6}. The first column contains name of complexity class. The next four columns describe the type of information systems from this class. The last two columns ``$d$-pairs'' and ``$a$-pairs'' contain information about boundary $d$-pairs and boundary $a$-pairs for information systems from the considered class: ``$d$-reachable'' means that all information systems from the  class are $d$-reachable, ``$a$-reachable'' means that all information systems from the  class are $a$-reachable, Th. \ref{T2} (a), ...,  Th. \ref{T5} (b) are links to corresponding statements Theorem \ref{T2} (a), ..., Theorem  \ref{T5} (b).

\begin{table}[h]
\centering
\begin{tabular}{|l|llll|cc|} \hline
& $h_{U}^{d}(n)$ & $h_{U}^{a}(n)$ & $L_{U}^{d}(n)$ & $L_{U}^{a}(n)$ & $d$-pairs  & $a$-pairs \\\hline
$W_1$ & $\mathrm{LOG}$ & $\mathrm{CON}$ & $\mathrm{POL}$ & $\mathrm{POL}$ & Th. \ref{T2} (a) & Th. \ref{T2} (b) \\
$W_2$ & $\mathrm{LIN}$ & $\mathrm{CON}$ & $\mathrm{POL}$ & $\mathrm{POL}$ & $d$-reachable  & Th. \ref{T3} (b) \\
$W_3$ & $\mathrm{LIN}$ & $\mathrm{LIN}$ & $\mathrm{POL}$ & $\mathrm{POL}$ & $d$-reachable & $a$-reachable \\
$W_4$ & $\mathrm{LIN}$ & $\mathrm{CON}$ & $\mathrm{EXP}$ & $\mathrm{EXP}$ & $d$-reachable & Th. \ref{T5} (b) \\
$W_5$ & $\mathrm{LIN}$ & $\mathrm{LIN}$ & $\mathrm{EXP}$ & $\mathrm{EXP}$ & $d$-reachable & $a$-reachable \\ \hline
\end{tabular}
\caption{Summary of Theorems \ref{T1}-\ref{T6}}
  \label{tab3}
\end{table}

\section{Proofs of  Propositions \ref{P2}-\ref{P4}} \label{S3}

In this section, we prove a number of auxiliary statements and the three mentioned propositions.

\begin{lemma}
\label{L0}Let $U=(A,F)$ be an infinite binary information system. Then

{\rm (a)} If $U$ satisfies the condition of coverage with parameter $m$, then for
any $n\in \mathbb{N}$, any $(n,U)$-set is a union of a finite number of $%
(m,U)$-sets.

{\rm (b)} If $U$ satisfies the condition of restricted coverage with parameters $m$
and $t$, then for any $n\in \mathbb{N}$, any $(n,U)$-set is a union of at
most $t^{n}$ $(m,U)$-sets.
\end{lemma}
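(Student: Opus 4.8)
The plan is to prove both parts by induction on $n$, treating (a) and (b) in parallel and differing only in how the number of constituent $(m,U)$-sets is counted: "finite" in (a) and "at most $t^{n}$" in (b).

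For the base of the induction I would first record the elementary observation (already used in the text for the step from level $m$ to level $m+1$) that whenever $n\le m$, every $(n,U)$-set is itself an $(m,U)$-set: given a defining system with $n$ equations, one repeats its last equation $m-n$ times to obtain an equivalent $(m,U)$-system with the same solution set. Hence for $n\le m$ an $(n,U)$-set is a union of a single $(m,U)$-set, which settles the claim in this range (in particular $1\le t^{n}$, since $t\in\mathbb{N}$ gives $t\ge 1$). This disposes of $n=1$ and, more than that, of every $n\le m$.

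For the inductive step, suppose the assertion holds for some $n\ge 1$, and let $B$ be an $(n+1,U)$-set, say $B=\{x: f_{1}(x)=\delta_{1},\ldots,f_{n+1}(x)=\delta_{n+1}\}$. I would set $B'=\{x: f_{1}(x)=\delta_{1},\ldots,f_{n}(x)=\delta_{n}\}$. Since $B\subseteq B'$ and $B$ is nonempty, $B'$ is a nonempty $(n,U)$-set, so by the induction hypothesis $B'=C_{1}\cup\cdots\cup C_{k}$ with each $C_{i}$ an $(m,U)$-set, where $k$ is finite in case (a) and $k\le t^{n}$ in case (b). Because $B=B'\cap\{x: f_{n+1}(x)=\delta_{n+1}\}$, distributing the last equation through the union gives $B=\bigcup_{i=1}^{k}\bigl(C_{i}\cap\{x: f_{n+1}(x)=\delta_{n+1}\}\bigr)$. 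The crucial point is that each nonempty set $C_{i}\cap\{x: f_{n+1}(x)=\delta_{n+1}\}$ is precisely an $(m+1,U)$-set: writing $C_{i}$ via an $(m,U)$-system and appending the single equation $f_{n+1}(x)=\delta_{n+1}$ yields an $(m+1,U)$-system whose solution set is exactly this intersection.

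At this point I would invoke the coverage hypotheses at level $m+1$. In case (a), the condition of coverage with parameter $m$ rewrites each such $(m+1,U)$-set as a union of finitely many $(m,U)$-sets; discarding the empty intersections leaves $B$ as a finite union of $(m,U)$-sets. In case (b), the condition of restricted coverage with parameters $m$ and $t$ rewrites each of these at most $t^{n}$ pieces as a union of at most $t$ $(m,U)$-sets, giving $B$ as a union of at most $t^{n}\cdot t=t^{n+1}$ $(m,U)$-sets, which closes the induction. I do not expect a genuine obstacle here; the only points requiring care are the multiplicative bookkeeping of the count in (b), the harmless removal of empty intersections, and the recognition that appending one equation to an $(m,U)$-system lands exactly at level $m+1$ rather than at some higher level, which is what lets the coverage hypothesis apply verbatim.
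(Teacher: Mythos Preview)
Your proposal is correct and follows essentially the same route as the paper: induction on $n$, dropping the last equation to reduce to level $n$, applying the induction hypothesis to write the truncated set as a union of $(m,U)$-sets, then appending the last equation to each piece to land at level $m+1$ where the coverage hypothesis applies. The only cosmetic differences are that the paper treats (a) and (b) in separate paragraphs and anchors its base at $n\le m+1$ with the inductive step starting at $n\ge m+1$, whereas you handle both parts in parallel and run the step from $n\ge 1$; neither change affects the substance.
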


\begin{proof}
(a) Let $U$ satisfy the condition of coverage with parameter $m$: any $%
(m+1,U)$-set is a union of a finite number of $(m,U)$-sets. We now show by
induction on $n$ that, for any $n\in \mathbb{N}$, any $(n,U)$-set is a union
of a finite number of $(m,U)$-sets. If $n\leq m+1$, then evidently, the
considered statement holds. Let for some $n$, $n\geq m+1$, the considered
statement hold. Let us show that it holds for $n+1$. We consider an
arbitrary $(n+1,U)$-set $B$ which is  the set
of solutions from $A$ of a $(n+1,U)$-system of equations%
\[
S=\{f_{1}(x)=\delta _{1},\ldots ,f_{n+1}(x)=\delta _{n+1}\}.
\]%
Let us consider a $(n,U)$-system of equations
\[
S^{\prime }=\{f_{1}(x)=\delta _{1},\ldots ,f_{n}(x)=\delta _{n}\}
\]%
and the set $B^{\prime }$ of solutions from $A$ of this system. Then $%
B^{\prime }$ is a $(n,U)$-set and, according to the inductive hypothesis, $%
B^{\prime }$ is a union of a finite number of $(m,U)$-sets $B_{1},\ldots
,B_{k}$, where for $i=1,\ldots ,k$, the set $B_{i}$ is the set of solutions
of an $(m,U)$-system of equations $S_{i}$. One can
show that the set $B$ is equal to the union of the sets of solutions on $A$
of the systems of equations $S_{i}\cup $ $\{f_{n+1}(x)=\delta _{n+1}\}$, $%
i=1,\ldots ,k$. Each of these sets is an $(m+1,U)$-set and therefore is a union of
a finite number of $(m,U)$-sets. Thus, $B$ is a union of a finite number
of $(m,U)$-sets.

(b) Let $U$ satisfy the condition of restricted coverage with parameters $%
m $ and $t$: any $(m+1,U)$-set is a union of at most $t$ $(m,U)$-sets. We
now show by induction on $n$ that, for any $n\in \mathbb{N}$, any $(n,U)$%
-set is a union of at most $t^{n}$ $(m,U)$-sets. If $n\leq m+1$, then
evidently, the considered statement holds. Let for some $n$, $n\geq m+1$,
the considered statement hold. Let us show that it holds for $n+1$. We
consider an arbitrary $(n+1,U)$-set $B$ which is the set of solutions from $A$ of a $(n+1,U)$-system of equations%
\[
S=\{f_{1}(x)=\delta _{1},\ldots ,f_{n+1}(x)=\delta _{n+1}\}.
\]%
Let us consider a $(n,U)$-system of equations
\[
S^{\prime }=\{f_{1}(x)=\delta _{1},\ldots ,f_{n}(x)=\delta _{n}\}
\]%
and the set $B^{\prime }$ of solutions from $A$ of this system. Then $%
B^{\prime }$ is a $(n,U)$-set and, according to the inductive hypothesis, $%
B^{\prime }$ is a union of $(m,U)$-sets $B_{1},\ldots ,B_{k}$, where $k\leq
t^{n}$ and, for $i=1,\ldots ,k$, the set $B_{i}$ is the set of solutions from $A$ of
an $(m,U)$-system of equations $S_{i}$. One can show
that the set $B$ is equal to the union of the sets of solutions on $A$ of
the systems of equations $S_{i}\cup $ $\{f_{n+1}(x)=\delta _{n+1}\}$, $%
i=1,\ldots ,k$. Each of these sets is an $(m+1,U)$-set and therefore is a union of
at most $t$ $(m,U)$-sets. Thus, $B$ is a union of at most $t^{n+1}$ $%
(m,U)$-sets.
\end{proof}

\begin{proof}
[Proof of Proposition \ref{P2}]
(\textrm{CON}) Let $U$ satisfy the condition of
coverage: there exists $m\in \mathbb{N}$ such that any $(m+1,U)$-set is a
union of a finite number of $(m,U)$-sets. From  Lemma \ref{L0} it follows
that, for any $n\in \mathbb{N}$, any $(n,U)$-set is a union of a finite
number of $(m,U)$-sets.

Let $z=(\nu ,f_{1},\ldots ,f_{n})$ be a problem over $U$. We now show
that $h_{U}^{a}(z)\leq m$. Let $\bar{\delta}=(\delta _{1},\ldots ,\delta
_{n})$ be a tuple from $\{0,1\}^{n}$ such that the equation system
\[
S(\bar{\delta})=\{f_{1}(x)=\delta _{1},\ldots ,f_{n}(x)=\delta _{n}\}
\]%
has a solution from $A$. For each solution $a\in A$ of this system, we have $%
z(a)=\nu (\bar{\delta})$. The set of solutions of $S(\bar{\delta})$ is a
union of a finite number of $(m,U)$-sets $D_{1},\ldots ,D_{s}$. Each of
these sets $D_{i}$ is the set of solutions of an $(m,U)$-system of
equations. For the considered $(m,U)$-system of
equations, we construct a complete path $\xi _{i}$ with $m$ working
nodes such that $A(\xi _{i})=D_i$ and the terminal node of $\xi _{i}$ is
labeled with the number $\nu (\bar{\delta})$. Denote $\Sigma (\bar{\delta}%
)=\{\xi _{1},\ldots ,\xi _{s}\}$. Let $\Sigma =\bigcup \Sigma (\bar{\delta}%
) $, where the union is considered among all tuples $\bar{\delta}\in
\{0,1\}^{n} $ such that the system of equations $S(\bar{\delta})$ has a
solution from $A$. We identify initial nodes of all paths from $\Sigma $. As
a result, we obtain a decision tree over the information system $U$ which
solves the problem $z$ nondeterministically and which depth is at most $m$.
Taking into account that $z$ is an arbitrary problem over $U$, we obtain $%
h_{U}^{a}(n)\leq m$ for any $n\in \mathbb{N}$.

(\textrm{LIN}) Let $U$ do not satisfy the condition of coverage. Assume
that there exists $m\in \mathbb{N}$ such that $h_{U}^{a}(m+1)\leq m$.
Let $B$ be an arbitrary $(m+1,U)$-set given by an $(m+1,U)$-system of equations $S$.
We show that  the set $B$  is a union of a finite
number of $(m,U)$-sets. Let
\[
S=\{f_{1}(x)=\delta _{1},\ldots ,f_{m+1}(x)=\delta _{m+1}\}\text{.}
\]

Consider the problem $z=(\nu ,f_{1},\ldots ,f_{m+1})$ over $U$ such that,
for any $\bar{\sigma}\in \{0,1\}^{m+1}$, $\nu (\bar{\sigma})\in \{1,2\}$ and
$\nu (\bar{\sigma})=1$ if and only if $\bar{\sigma}=(\delta _{1},\ldots
,\delta _{m+1})$.

Let $\Gamma $ be a decision tree which solves the problem $z$
nondeterministically and for which $h(\Gamma )\leq m$. Choose all complete
paths $\xi $ in $\Gamma $ in which the terminal node is labeled with the
number $1$ and the set $A(\xi )$ is nonempty. Each such path $\xi $
describes a $(m,U)$-set $A(\xi )$. The number of these paths is finite. The
union of sets described by these paths is equal to $B$. Since $B$ is an arbitrary $(m+1,U)$-set, the
information system $U$ satisfies the condition of coverage with parameter $m$, but this is
impossible. Thus, $h_{U}^{a}(n)=n$ for any $n\in \mathbb{N}$.
\end{proof}

We now consider some auxiliary statements related to the space complexity of
decision trees. Let $\Gamma $ be a decision tree over an information system $%
U=(A,F)$ and $d$ be an edge of $\Gamma $ entering a node $w$. We denote by $%
\Gamma (d)$ a subtree of $\Gamma $ which root is the node $w$. We say that a
complete path $\xi $ of $\Gamma $ is realizable if $A(\xi )\neq \emptyset $.

\begin{lemma}
\label{L1}Let $U=(A,F)$ be an infinite binary information system, $z=(\nu
,f_{1},\ldots ,f_{n})$ be a problem over $U$, and $\Gamma $ be a decision
tree over $U$ which solves the problem $z$ deterministically and for which $%
L(\Gamma )=L_{U}^{d}(z)$. Then

{\rm (a)} Each working node of $\Gamma $ has two edges leaving this node.

{\rm (b)} For each node of $\Gamma $, there exists a realizable complete path that
passes through this node.
\end{lemma}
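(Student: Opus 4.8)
The plan is to exploit the minimality $L(\Gamma)=L_U^d(z)$ as essentially the only tool: in both parts I would assume the conclusion fails and then perform a small local surgery on $\Gamma$ that produces a deterministic decision tree still solving $z$ but with strictly fewer nodes, contradicting minimality. For a node $u$ of $\Gamma$ write $A_u$ for the set of objects $a\in A$ consistent with the tests along the path from the root to $u$ (the partial-path analogue of $A(\xi)$); note that $A_u=A\neq\emptyset$ when $u$ is the unique child of the root, and that, since $\Gamma$ is deterministic, every object $a$ follows one \emph{unique} route through $\Gamma$, taking at each working node it reaches the edge labelled by the value of the tested attribute on $a$.

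For part (a), since $\Gamma$ is deterministic the edges leaving a working node carry distinct labels from $\{0,1\}$, so a working node has at most two and, being internal, at least one outgoing edge; I must rule out exactly one. Suppose a working node $v$ labelled by $f$ has a single leaving edge, labelled $\delta$. If some $a\in A_v$ had $f(a)=1-\delta$, then the forced route of $a$ reaches $v$ and there has no admissible edge, so $a$ lies on no complete path $\xi$ with $a\in A(\xi)$, contradicting that $\Gamma$ solves $z$. Hence every $a\in A_v$ satisfies $f(a)=\delta$, so the test at $v$ is redundant: deleting $v$ and redirecting the edge entering $v$ (or the root edge, if $v$ is the root's child) to the child of $v$ yields a deterministic tree that treats every object exactly as $\Gamma$ does and has one node fewer, a contradiction.

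For part (b), I would use (a). Fix a node $u$. If $A_u\neq\emptyset$, pick $a\in A_u$ and follow the forced route of $a$ from $u$; by (a) every working node has both edges, so $a$ never gets stuck and reaches a terminal, producing a complete path $\xi$ through $u$ with $a\in A(\xi)$, hence $A(\xi)\neq\emptyset$ and $\xi$ is realizable. It remains to exclude $A_u=\emptyset$. Since $A_c=A\neq\emptyset$ for the root's child $c$ while $A_u=\emptyset$, walking down the path from the root to $u$ there is a last node $p$ with $A_p\neq\emptyset$, and its child $u^{\ast}$ on this path has $A_{u^{\ast}}=\emptyset$. By (a), $p$ is a working node with exactly two children; since no object reaching $p$ enters the $u^{\ast}$-branch, deleting $p$ together with the whole dead subtree rooted at $u^{\ast}$ and splicing the other child of $p$ onto the edge entering $p$ (or onto the root edge) gives a deterministic tree solving $z$ with strictly fewer nodes, a contradiction. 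Thus $A_u\neq\emptyset$ and $u$ lies on a realizable complete path.

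The routine parts are the surgeries themselves, and the step I expect to need the most care is verifying, in each case, that the spliced tree is still a \emph{deterministic} decision tree that solves $z$: one must check that the labels of the surviving edges keep the determinism condition, that the special unlabelled root edge is handled correctly, and, the conceptual crux, that in a deterministic tree each object follows a single forced route, which is exactly what makes the ``getting stuck'' argument of part (a) and the ``unreachable subtree'' argument of part (b) valid.
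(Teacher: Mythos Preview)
Your proof is correct and follows essentially the same minimality-plus-surgery strategy as the paper: in each part you assume the conclusion fails and delete or splice out the offending piece to obtain a strictly smaller deterministic solver, contradicting $L(\Gamma)=L_U^d(z)$. The paper's version is terser---it proves the two parts independently (with the proof labels actually swapped relative to the statement), simply asserting that the pruned tree still solves $z$---whereas you supply the ``forced route'' justification explicitly and organize the argument so that (a) feeds into (b); these are cosmetic differences, not a different method.
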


\begin{proof}
(a) It is clear that there exists at least one realizable complete path that
passes through the root of $\Gamma $. Let us assume that $w$ be a node of $\Gamma $
different from the root and such that there is no a realizable complete path
which passes through $w$. Let $d$ be an edge entering the node $w$. We
remove from $\Gamma $ the edge $d$ and the subtree $\Gamma (d)$. As a result,
we obtain a decision tree $\Gamma ^{\prime }$ which solves $z$
deterministically and for which $L(\Gamma ^{\prime })<L(\Gamma )$ but this
is impossible.

(b) Let us assume that in $\Gamma $ there exists a working node $w$ which
has only one leaving edge $d$ entering a node $w_{1}$. We remove from $%
\Gamma $ the node $w$ and the edge $d$ and connect the edge entering the node $w$
to the node $w_{1}$. As a result, we obtain a decision tree $\Gamma ^{\prime
} $ which solves the problem $z$ deterministically and for which $L(\Gamma
^{\prime })<L(\Gamma )$ but this is impossible.
\end{proof}

Let $U$ be an infinite binary information system, $\Gamma $ be a decision
tree over $U$, and $d$ be an edge of $\Gamma $. The subtree $\Gamma (d)$ is
called full if there exist edges $d_{1},\ldots ,d_{m}$ in $\Gamma (d)$ such
that the removal of these edges and subtrees $\Gamma (d_{1}),\ldots ,\Gamma
(d_{m})$ transforms the subtree $\Gamma (d)$ into a tree $G$ such that
each terminal node of $G$ is a terminal node of $\Gamma $, and exactly two
edges labeled with the numbers $0$ and $1$ respectively leave each working node
of $G$.

\begin{lemma}
\label{L2}Let $U=(A,F)$ be an infinite binary information system, $z=(\nu
,f_{1},\ldots ,f_{n})$ be a problem over $U$, and $\Gamma $ be a decision
tree over $U$ which solves the problem $z$ nondeterministically and for
which $L(\Gamma )=L_{U}^{a}(z)$. Then

{\rm (a)} For each node of $\Gamma $, there exists a realizable complete path that
passes through this node.

{\rm (b)} If a working node $w$ of $\Gamma $ has $m$ leaving edges $d_{1},\ldots
,d_{m}$ labeled with the same number and $m\geq 2$, then the subtrees $%
\Gamma (d_{1}),\ldots ,\Gamma (d_{m})$ are not full.

{\rm (c)} If the root $r$ of $\Gamma $ has $m$ leaving edges $d_{1},\ldots ,d_{m}$
and $m\geq 2$, then the subtrees $\Gamma (d_{1}),\ldots ,\Gamma (d_{m})$ are
not full.
\end{lemma}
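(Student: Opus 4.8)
The plan is to prove all three parts by contradiction against the minimality $L(\Gamma)=L_{U}^{a}(z)$, deleting part of $\Gamma$ and checking that the smaller tree still solves $z$ nondeterministically. For part (a) I would argue as in Lemma~\ref{L1}: since $A$ is infinite and $\Gamma$ solves $z$, at least one realizable complete path passes through the root, so the root is not a counterexample. If some node has no realizable complete path through it, take such a node $w$ of minimal distance from the root; then $w$ is not the root, and its parent $p$ (being closer to the root) lies on a realizable complete path $\eta$. As every path through $w$ is unrealizable, $\eta$ leaves $p$ by an edge other than the one entering $w$, so $p$ keeps a successor after deletion. Removing the edge $p\to w$ and the subtree rooted at $w$ gives $\Gamma'$ with $L(\Gamma')<L(\Gamma)$; every complete path of $\Gamma'$ is a complete path of $\Gamma$ (no wrong labels), and every object is covered in $\Gamma$ by a realizable path, which cannot run through $w$ and so survives. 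Hence $\Gamma'$ solves $z$ nondeterministically, a contradiction.

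Parts (b) and (c) rest on the following fact about a full subtree $\Gamma(d)$ with witnessing complete subtree $G$: any object $a$ reaching the root $w_{d}$ of $\Gamma(d)$ can be driven to a terminal node of $\Gamma$ inside $G$, because at each working node of $G$, labeled by an attribute $g$, both edges $g=0$ and $g=1$ are present, so $a$ follows the one matching $g(a)$ until it reaches a leaf of $G$, which is a terminal of $\Gamma$. For (b), suppose some $\Gamma(d_{i})$ is full, where $w$ is labeled by $f$ and the edges $d_{1},\ldots,d_{m}$ ($m\ge 2$) all carry the label $\delta$. Delete the edges $d_{j}$ and subtrees $\Gamma(d_{j})$ for $j\neq i$, obtaining a strictly smaller $\Gamma'$; again every complete path of $\Gamma'$ is a complete path of $\Gamma$. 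For coverage, fix $a\in A$ and a realizable covering path $\xi$ of $\Gamma$: if $\xi$ avoids the deleted edges it survives, and otherwise $\xi$ passes through $w$ and a deleted $d_{j}$, so $a$ satisfies the conditions up to $w$ and $f(a)=\delta$; since $d_{i}$ carries the same label $\delta$, the object $a$ reaches $w_{d_{i}}$ and is driven through $G$ to a terminal, giving a realizable covering path in $\Gamma'$. Thus $\Gamma'$ solves $z$ nondeterministically, a contradiction. Part (c) is identical with $w$ replaced by the root $r$; as the edges leaving $r$ are unlabeled, the step ``$f(a)=\delta$ hence $a$ may take $d_{i}$'' is unnecessary and every object reaches $w_{d_{i}}$ directly.

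I expect the coverage step to be the main obstacle: one must check carefully that rerouting an object through the full subtree produces a genuine complete path $\xi'$ with $a\in A(\xi')$, and that the deletions neither remove the last covering path of some object nor create a path ending at a wrong terminal. Both reduce to the completeness of $G$ (each working node of $G$ branches on both values) and to the fact that every complete path of the reduced tree already occurs in $\Gamma$; in part (a) the analogous subtlety is to avoid leaving a working node without successors, which is why $w$ is chosen at minimal distance from the root.
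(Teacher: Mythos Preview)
Your proposal is correct and follows essentially the same approach as the paper: for (a) delete the unreachable subtree, and for (b)--(c) keep one full subtree $\Gamma(d_i)$, delete its equally-labeled siblings, and reroute any affected object through the witnessing complete subtree $G$. Your treatment of (a) is in fact slightly more careful than the paper's, which simply deletes the edge into $w$ without explicitly arguing (as you do via the minimal-distance choice) that the parent retains an outgoing edge and hence the result is still a decision tree.
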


\begin{proof} (a) It is clear that there exists at least one realizable complete path that passes through the root of $\Gamma $.
Let us assume that $w$ be a node of $\Gamma $ different from the root
and such that there is no a realizable complete path which passes through $w$%
. Let $d$ be an edge entering the node $w$. We remove from $\Gamma $ the
edge $d$ and the subtree $\Gamma (d)$. As a result, we obtain a decision tree
$\Gamma ^{\prime }$ which solves the problem $z$ nondeterministically and for which $%
L(\Gamma ^{\prime })<L(\Gamma )$ but this is impossible.

(b) Let $w$ be a working node of $\Gamma $ which has $m$ leaving edges $%
d_{1},\ldots ,d_{m}$ labeled with the same number, $m\geq 2$, and at least
one of the subtrees $\Gamma (d_{1}),\ldots ,\Gamma (d_{m})$ is full. For the
definiteness, we assume that $\Gamma (d_{1})$ is full. Remove from $\Gamma $
the edges $d_{2},\ldots ,d_{m}$ and subtrees $\Gamma (d_{2}),\ldots ,\Gamma
(d_{m})$. We now show that the obtained tree $\Gamma ^{\prime }$ solves the
problem $z$ nondeterministically. Assume the contrary. Then there exists an
object $a\in A$ such that, for each complete path $\xi $ with $a\in A(\xi )$%
, the path $\xi $ passes through one of the edges $d_{2},\ldots ,d_{m}$ but
it is not true. Let $\xi $ be a complete path such that $a\in A(\xi )$. Then, according to the assumption,
this path passes through the node $w$. Let $\xi ^{\prime }$ be the part of
this path from the root of $\Gamma $ to the node $w$. Since the edges $d_{1},\ldots ,d_{m}$ are labeled with the same number and $\Gamma (d_{1})$
is a full subtree, we can find in $\Gamma (d_1)$ the continuation of $\xi
^{\prime }$ to a terminal node of $\Gamma (d_{1})$ such that the obtained
complete path $\xi ^{\prime \prime }$ of $\Gamma $ satisfies the condition $%
a\in A(\xi ^{\prime \prime })$. Hence $\Gamma ^{\prime }$ is a decision tree
which solves the problem $z$ nondeterministically and for which $L(\Gamma
^{\prime })<L(\Gamma )$ but this is impossible.

(c) The part (c) of the statement can be proven in the same way as the part
(b).
\end{proof}

We now prove a number of statements about classes of decision trees. Let $%
\Gamma $ be a decision tree. We denote by $L_{t}(\Gamma )$ the number of
terminal nodes in $\Gamma $ and by $L_{w}(\Gamma )$ -- the number of working
nodes in $\Gamma $. It is clear that $L(\Gamma )=1+L_{t}(\Gamma
)+L_{w}(\Gamma )$.

Let $U$ be an infinite binary information systems. We denote by $G_{d}(U)$
the set of all deterministic decision trees over $U$ and by $G_{d}^{2}(U)$
-- the set of all decision trees from $G_{d}(U)$ such that each working node
of the tree has two leaving edges.

\begin{lemma}
\label{L3}Let $U$ be an infinite binary information system. Then

{\rm (a)} If $\Gamma \in $ $G_{d}^{2}(U)$, then $L_{w}(\Gamma )=L_{t}(\Gamma )-1$.

{\rm (b)} If $\Gamma \in $ $G_{d}(U)\setminus G_{d}^{2}(U)$, then $L_{w}(\Gamma
)>L_{t}(\Gamma )-1$.
\end{lemma}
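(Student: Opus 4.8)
The plan is to prove both parts simultaneously by a single edge-counting argument, exploiting that a tree with $N$ nodes has exactly $N-1$ edges. First I would record the relevant structural facts about a deterministic decision tree $\Gamma$ over $U$: the root is neither a working nor a terminal node and, by determinism, has exactly one leaving edge; each terminal node is a leaf and has no leaving edge; and each working node, being an internal non-root node of a tree, has at least one leaving edge, while determinism (edges leaving a working node carry distinct labels from $\{0,1\}$) forces its number of leaving edges to be exactly one or two.

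Next I would count the edges of $\Gamma$ in two ways. On one hand, since $\Gamma$ is a tree with $L(\Gamma)=1+L_t(\Gamma)+L_w(\Gamma)$ nodes, it has $L_t(\Gamma)+L_w(\Gamma)$ edges. On the other hand, summing the out-degrees over all nodes: the root contributes $1$, each terminal node contributes $0$, and each working node contributes its out-degree. Writing $W_2$ for the number of working nodes of out-degree two and $W_1$ for the number of out-degree one, I have $L_w(\Gamma)=W_1+W_2$ and total edge count $1+W_1+2W_2$. Equating the two counts gives $L_t(\Gamma)+W_1+W_2=1+W_1+2W_2$, hence $W_2=L_t(\Gamma)-1$ and therefore $L_w(\Gamma)=W_1+L_t(\Gamma)-1$.

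From this identity both parts follow immediately. For (a), membership in $G_d^2(U)$ means every working node has out-degree two, so $W_1=0$ and $L_w(\Gamma)=L_t(\Gamma)-1$. For (b), membership in $G_d(U)\setminus G_d^2(U)$ means at least one working node has out-degree one, so $W_1\ge 1$ and $L_w(\Gamma)>L_t(\Gamma)-1$.

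There is essentially no hard step here; the only points requiring care are the structural claims established in the first paragraph — in particular, that a working node cannot be a leaf (so its out-degree is at least one) and that the root of a deterministic tree contributes exactly one edge rather than two, since otherwise the double count would be off. An alternative route would be induction on $L_w(\Gamma)$, contracting an out-degree-one working node in the inductive step, but the direct double count is cleaner and avoids case analysis.
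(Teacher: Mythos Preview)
Your argument is correct and is in fact a cleaner route than the paper's. The paper proves part (a) by induction on $L_t(\Gamma)$: it finds a working node whose two children are both terminal, prunes them, and applies the inductive hypothesis. It then deduces (b) from (a) by attaching $m$ dummy terminal nodes to the $m$ working nodes of out-degree one, producing a tree in $G_d^2(U)$, invoking (a), and subtracting back.

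Your double count does both parts at once and actually yields the sharper identity $L_w(\Gamma)=W_1+L_t(\Gamma)-1$, from which (a) and (b) are immediate special cases; the paper's reduction in (b) is essentially recovering this identity without stating it. The one structural assumption you flag---that a working node is never a leaf---is indeed implicit in the paper's framework (terminal nodes are precisely the leaves), and the paper's own proof of (b) relies on the same assumption when it asserts that every working node outside $G_d^2(U)$ has exactly one leaving edge. So your care there is warranted but not a gap.
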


\begin{proof}
(a) We prove the equality $L_{w}(\Gamma )=L_{t}(\Gamma )-1$ for trees from $%
G_{d}^{2}(U)$ by induction on $L_{t}(\Gamma )$. If $L_{t}(\Gamma )\leq 2$,
then this equality holds. Let $m\geq 2$ and, for each $\Gamma \in
G_{d}^{2}(U)$ with $L_{t}(\Gamma )\leq m$, the considered equality hold.
Let $\Gamma \in G_{d}^{2}(U)$ and $L_{t}(\Gamma )=m+1$. It is clear that
there exists a node $w$ of the tree $\Gamma $ such that all children of $w$
are terminal nodes. Remove children of $w$ and edges entering these children
and attach to $w$ a number from $\mathbb{N}$. We denote the obtained tree by
$\Gamma ^{\prime }$. It is clear that $\Gamma ^{\prime }\in G_{d}^{2}(U)$
and $L_{t}(\Gamma ^{\prime })=m$. By the inductive hypothesis, $L_{w}(\Gamma
^{\prime })=L_{t}(\Gamma ^{\prime })-1$. Taking into account that $%
L_{w}(\Gamma ^{\prime })=L_{w}(\Gamma )-1$ and $L_{t}(\Gamma ^{\prime
})=L_{t}(\Gamma )-1$, we obtain $L_{w}(\Gamma )=L_{t}(\Gamma )-1$.

(b) Let $\Gamma \in $ $G_{d}(U)\setminus G_{d}^{2}(U)$ and there be $m\geq
1 $ working nodes in $\Gamma $ each of which has exactly one leaving edge.
We add $m$ new terminal nodes to $\Gamma $ and, as a result, obtain a tree $%
\Gamma ^{\prime }\in $ $G_{d}^{2}(U)$. Then $L_{w}(\Gamma ^{\prime
})=L_{t}(\Gamma ^{\prime })-1$, $L_{w}(\Gamma ^{\prime })=L_{w}(\Gamma )$,
and $L_{t}(\Gamma ^{\prime })=L_{t}(\Gamma )+m$. Therefore $L_{w}(\Gamma
)=L_{t}(\Gamma )+m-1$. Since $m\geq 1$, we obtain $L_{w}(\Gamma
)>L_{t}(\Gamma )-1$.
\end{proof}

We denote by $G_{a}^{f}(U)$ the set of all decision trees $\Gamma $ over $U$
that satisfy the following conditions: (i) if a working node of $\Gamma $
has $m$ leaving edges $d_{1},\ldots ,d_{m}$ labeled with the same number and
$m\geq 2$, then the subtrees $\Gamma (d_{1}),\ldots ,\Gamma (d_{m})$ are not
full, and (ii) if the root of $\Gamma $ has $m$ leaving edges $d_{1},\ldots
,d_{m}$ and $m\geq 2$, then the subtrees $\Gamma (d_{1}),\ldots ,\Gamma
(d_{m})$ are not full. One can show that $G_{d}^{2}(U)\subseteq
G_{d}(U)\subseteq G_{a}^{f}(U)$.

\begin{lemma}
\label{L4}Let $U$ be an infinite binary information system. If $\Gamma \in $
$G_{a}^{f}(U)\setminus G_{d}^{2}(U)$, then $L_{w}(\Gamma )>L_{t}(\Gamma )-1$.
\end{lemma}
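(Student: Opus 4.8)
The plan is to attach to each non-root node $v$ of $\Gamma$ the potential $\phi(v)=L_{w}(T_{v})-L_{t}(T_{v})+1$, where $T_{v}$ is the subtree of $\Gamma$ growing out of $v$, with $v$ itself counted as a working or terminal node. A direct edge-versus-node count inside $T_v$ gives $\phi(v)=0$ for a terminal $v$, and $\phi(v)=(2-k)+\sum_{i=1}^{k}\phi(u_{i})$ for a working node $v$ with children $u_{1},\ldots,u_{k}$. Summing the corresponding identity over the children $u_{1},\ldots,u_{\rho}$ of the root, where $\rho$ is the number of edges leaving the root, yields the master identity $L_{w}(\Gamma)-(L_{t}(\Gamma)-1)=(1-\rho)+\sum_{i=1}^{\rho}\phi(u_{i})$. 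The whole problem thus reduces to lower-bounding the potentials $\phi(u_{i})$, and the conclusion $L_w(\Gamma)>L_t(\Gamma)-1$ is exactly the assertion that this quantity is at least $1$.

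First I would restate fullness recursively in a form matching the definition preceding Lemma~\ref{L2}: the subtree rooted at $v$ is full iff $v$ is terminal, or $v$ is a working node that has a $0$-edge leading to a full subtree and a $1$-edge leading to a full subtree. The core of the argument is then the pair of statements, proved by simultaneous induction on $|T_{v}|$ using only condition (i) of the definition of $G_{a}^{f}(U)$ at every working node of $T_{v}$: \textbf{(S1)} $\phi(v)\ge 0$ always, and \textbf{(S2)} if $T_{v}$ is not full then $\phi(v)\ge 1$. In the inductive step at a working node with $a$ edges labeled $0$ and $b$ edges labeled $1$, condition (i) forces every subtree hanging from a repeated label (that is, whenever $a\ge 2$, resp. $b\ge 2$) to be non-full, so by (S2) each such child contributes at least $1$ to $\sum_i\phi(u_{i})$; a short check over the finitely many patterns of $(a,b)$ then establishes both (S1) and (S2). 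I expect this simultaneous induction to be the main obstacle — specifically, making (S2) go through at working nodes of large out-degree, where the term $(2-k)$ is strongly negative and must be paid for entirely by the non-full children that condition (i) guarantees. The case analysis must be arranged so that (S1) and (S2) feed into each other correctly rather than circularly.

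Alongside these, I would prove the characterization that $\phi(v)=0$ holds iff $T_{v}$ is a \emph{$2$-tree}, meaning that every working node of $T_{v}$ has exactly one $0$-edge and one $1$-edge. This follows from the same recursion: for $k=1$ one gets $\phi(v)\ge 1$; for a working node with two equally labeled edges one gets $\phi(v)\ge 2$; for $k\ge 3$ the $(a,b)$-case check gives $\phi(v)\ge 1$; and for $k=2$ with labels $0$ and $1$ one has $\phi(v)=\phi(u_{1})+\phi(u_{2})$, so $\phi(v)=0$ forces both children to be $2$-trees, whence $T_{v}$ is a $2$-tree.

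Finally I would combine everything through the master identity, splitting on $\rho$. If $\rho\ge 2$, then condition (ii) of $G_{a}^{f}(U)$ makes every root-child subtree non-full, so (S2) gives $\sum_{i}\phi(u_{i})\ge\rho$ and hence $L_{w}(\Gamma)-(L_{t}(\Gamma)-1)\ge(1-\rho)+\rho=1$. If $\rho=1$, the identity reads $L_{w}(\Gamma)-(L_{t}(\Gamma)-1)=\phi(u_{1})\ge 0$ by (S1); were it $0$, the characterization would make $T_{u_{1}}$ a $2$-tree, and since the root then has a single leaving edge this would place $\Gamma$ in $G_{d}^{2}(U)$, contradicting $\Gamma\in G_{a}^{f}(U)\setminus G_{d}^{2}(U)$. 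Thus $\phi(u_{1})\ge 1$, and $L_{w}(\Gamma)>L_{t}(\Gamma)-1$ in every case. As a consistency check, the purely deterministic-but-not-$G_{d}^{2}$ situation already treated in Lemma~\ref{L3}(b) reappears here as the $\rho=1$ case in which $T_{u_1}$ contains a working node of out-degree $1$, each such node contributing $+1$ to $\phi(u_{1})$.
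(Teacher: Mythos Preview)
Your proof is correct and takes a genuinely different route from the paper's. The paper argues by induction on $L_{t}(\Gamma)$: it locates a node $w$ with two equally-labeled outgoing edges $d_{1},d_{2}$ such that neither $\Gamma(d_{1})$ nor $\Gamma(d_{2})$ contains any further such node; then $\Gamma(d_{1})$, being non-full by condition (i)/(ii), becomes (after attaching a root) a tree in $G_{d}(U)\setminus G_{d}^{2}(U)$, so Lemma~\ref{L3}(b) gives $L_{w}(\Gamma(d_{1}))>L_{t}(\Gamma(d_{1}))-1$; removing $d_{1}$ and $\Gamma(d_{1})$ leaves a tree $\Gamma'\in G_{a}^{f}(U)\setminus G_{d}^{2}(U)$ with fewer terminal nodes, to which the induction hypothesis applies; summing finishes. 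Your approach replaces this ``find a deepest bad node and peel it off'' decomposition by a local potential $\phi(v)=L_{w}(T_{v})-L_{t}(T_{v})+1$ with the recursion $\phi(v)=(2-k)+\sum_{i}\phi(u_{i})$, proves (S1) $\phi\ge 0$ and (S2) $\phi\ge 1$ on non-full subtrees by a finite $(a,b)$ case check, and assembles everything via the master identity $L_{w}(\Gamma)-(L_{t}(\Gamma)-1)=(1-\rho)+\sum_{i}\phi(u_{i})$. What this buys you is a fully additive picture: the excess $L_{w}-(L_{t}-1)$ is exhibited as a sum of nonnegative local contributions, each positive exactly at the nodes where the tree deviates from $G_{d}^{2}(U)$ structure, and the characterization $\phi(v)=0\Leftrightarrow T_{v}$ is a $2$-tree drops out for free. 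The paper's argument is shorter and leans on Lemma~\ref{L3}(b) as a black box, but yours is more transparent about \emph{where} the strict inequality comes from and would readily yield quantitative refinements (e.g., the excess is at least the number of working nodes of out-degree $\neq 2$).
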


\begin{proof}
We prove the considered statement by induction on $L_{t}(\Gamma )$. Let $%
\Gamma \in $ $G_{a}^{f}(U)\setminus G_{d}^{2}(U)$ and $L_{t}(\Gamma )=1$.
Then $\Gamma $ consists of one complete path with $k\geq 0$ working nodes.
If $k=0$ then $\Gamma \in G_{d}^{2}(U)$ but this is impossible. Therefore $%
L_{w}(\Gamma )=k\geq 1$ and $L_{t}(\Gamma )=1$. Hence $L_{w}(\Gamma
)>L_{t}(\Gamma )-1$.

Let, for some $m\geq 1$ for each  decision tree $\Gamma \in G_{a}^{f}(U)\setminus G_{d}^{2}(U)$
with $L_{t}(\Gamma )\leq m$, the inequality $L_{w}(\Gamma )>L_{t}(\Gamma )-1$
hold. Consider a decision tree $\Gamma $ such that $\Gamma \in $ $%
G_{a}^{f}(U)\setminus G_{d}^{2}(U)$ and $L_{t}(\Gamma )=m+1$. We now show
that $L_{w}(\Gamma )>L_{t}(\Gamma )-1$. If $\Gamma \in G_{d}(U)\setminus
G_{d}^{2}(U)$, then by  Lemma \ref{L3}, $L_{w}(\Gamma )>L_{t}(\Gamma )-1$.
Let $\Gamma \in $ $G_{a}^{f}(U)\setminus G_{d}(U)$. Then the tree $\Gamma $
contains a node $v$ (the root or a working node) which has two leaving edges
labeled with the same number (if $v$ is a working node) or do not labeled
with numbers (if $v$ is the root). We call such edges equally labeled. Since
$\Gamma $ is a finite tree, there is a node $w$ of $\Gamma $ which has two
leaving edges $d_{1}$ and $d_{2}$ that are equally labeled, and in the
subtrees $\Gamma (d_{1})$ and $\Gamma (d_{2})$ there are no nodes with two
leaving edges that are equally labeled.

It is clear that the subtree $\Gamma (d_{1})$ is not full. We add to this
tree a node and an edge leaving this node and entering the root of $\Gamma
(d_{1})$. As a result, we obtain a decision tree from $G_{d}(U)\setminus
G_{d}^{2}(U)$. Using  Lemma \ref{L3} we obtain $L_{w}(\Gamma
(d_{1}))>L_{t}(\Gamma (d_{1}))-1$.

Remove from $\Gamma $ the edge $d_{1}$ and the subtree $\Gamma (d_{1})$.
Denote by $\Gamma ^{\prime }$ the obtained tree. Since the subtree $\Gamma
(d_{2})$ is not full, $\Gamma ^{\prime }\notin G_{d}^{2}(U)$. It is clear
that $\Gamma ^{\prime }\in G_{a}^{f}(U)$. One can show that $L_{t}(\Gamma
^{\prime })<L_{t}(\Gamma )$. By the inductive hypothesis, $L_{w}(\Gamma ^{\prime
})>L_{t}(\Gamma ^{\prime })-1$, and hence $L_{w}(\Gamma ^{\prime })\geq
L_{t}(\Gamma ^{\prime })$. Therefore $L_{w}(\Gamma ^{\prime })+L_{w}(\Gamma
(d_{1}))>L_{t}(\Gamma ^{\prime })+L_{t}(\Gamma (d_{1}))-1$. Since $%
L_{w}(\Gamma )=L_{w}(\Gamma ^{\prime })+L_{w}(\Gamma (d_{1}))$ and $%
L_{t}(\Gamma )=L_{t}(\Gamma ^{\prime })+L_{t}(\Gamma (d_{1}))$, we obtain $%
L_{w}(\Gamma )>L_{t}(\Gamma )-1$.
\end{proof}

Let $U=(A,F)$ be an infinite binary information system. For $f_{1},\ldots
,f_{n}\in F$ we denote by $N_{U}(f_{1},\ldots ,f_{n})$ the number of $n$%
-tuples $(\delta _{1},\ldots ,\delta _{n})\in \{0,1\}^{n}$ for which the
system of equations
\[
\{f_{1}(x)=\delta _{1},\ldots ,f_{n}(x)=\delta _{n}\}
\]%
has a solution from $A$. For $n\in \mathbb{N}$, denote
\[
N_{U}(n)=\max \{N_{U}(f_{1},\ldots ,f_{n}):f_{1},\ldots ,f_{n}\in F\}.
\]

It is clear that, for any $m,n\in \mathbb{N}$, if $m\leq n$ then $%
N_{U}(m)\leq N_{U}(n)$.

\begin{proposition}
\label{P5} Let $U=(A,F)$ be an infinite binary information system. Then, for
any $n\in \mathbb{N}$,
\[
L_{U}^{a}(n)=L_{U}^{d}(n)=2N_{U}(n).
\]
\end{proposition}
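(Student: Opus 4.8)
The plan is to prove the double equality by establishing the chain of inequalities $2N_U(n) \le L_U^a(n) \le L_U^d(n) \le 2N_U(n)$, which forces all three quantities to coincide. The middle inequality is immediate: by definition a tree that solves a problem $z$ deterministically is a deterministic tree solving $z$ nondeterministically, so $L_U^a(z) \le L_U^d(z)$ for every $z$, and taking the maximum over problems of dimension at most $n$ gives $L_U^a(n) \le L_U^d(n)$. It then remains to prove the outer two bounds.

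For the upper bound $L_U^d(n) \le 2N_U(n)$, I would fix a problem $z=(\nu,f_1,\ldots,f_{n'})$ with $n'=\dim z \le n$ and build a deterministic tree in $G_d^2(U)$ by a recursive splitting procedure. Starting from a nonempty $(k,U)$-set $B$ (initially $B=A$), if every attribute $f_1,\ldots,f_{n'}$ is constant on $B$ then $z$ is constant on $B$ and I place a terminal node labeled with that value; otherwise I choose an attribute $f_i$ taking both values on $B$, create a working node labeled $f_i$ with edges $0$ and $1$, and recurse on the two nonempty subsets $B\cap\{f_i=0\}$ and $B\cap\{f_i=1\}$. The recursion terminates (each step passes to proper nonempty subsets, and along any path an attribute is split at most once), and the resulting tree $\Gamma$ lies in $G_d^2(U)$. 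Its realizable complete paths partition $A$, each terminal node corresponds to a distinct realizable tuple $(\delta_1,\ldots,\delta_{n'})$, and conversely every realizable tuple lands in exactly one terminal node; hence $L_t(\Gamma)=N_U(f_1,\ldots,f_{n'})$. By Lemma \ref{L3}(a), $L(\Gamma)=1+L_t(\Gamma)+L_w(\Gamma)=2L_t(\Gamma)=2N_U(f_1,\ldots,f_{n'})\le 2N_U(n)$ by monotonicity of $N_U$, so $L_U^d(z)\le 2N_U(n)$; taking the maximum over $z$ yields $L_U^d(n)\le 2N_U(n)$.

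For the lower bound $L_U^a(n)\ge 2N_U(n)$, I would choose attributes $f_1,\ldots,f_n$ attaining $N_U(f_1,\ldots,f_n)=N_U(n)$ together with an injective $\nu:\{0,1\}^n\to\mathbb{N}$, and consider $z=(\nu,f_1,\ldots,f_n)$. Let $\Gamma$ solve $z$ nondeterministically with $L(\Gamma)=L_U^a(z)$; by Lemma \ref{L2} it belongs to $G_a^f(U)$. For each realizable tuple $\bar\delta$, any solution $a$ of $\{f_i(x)=\delta_i\}$ lies on some realizable complete path whose terminal node carries the label $z(a)=\nu(\bar\delta)$; since $\nu$ is injective, distinct realizable tuples force distinct terminal labels, so $L_t(\Gamma)\ge N_U(n)$. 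Finally, whether $\Gamma\in G_d^2(U)$ (via Lemma \ref{L3}(a)) or $\Gamma\in G_a^f(U)\setminus G_d^2(U)$ (via Lemma \ref{L4}), one has $L_w(\Gamma)\ge L_t(\Gamma)-1$, whence $L(\Gamma)\ge 2L_t(\Gamma)\ge 2N_U(n)$; as $\dim z=n$, this gives $L_U^a(n)\ge 2N_U(n)$.

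The routine parts are the termination and correctness of the construction and the bookkeeping $L(\Gamma)=1+L_t(\Gamma)+L_w(\Gamma)$. The main obstacle is the lower bound, specifically the step converting the leaf count $L_t(\Gamma)\ge N_U(n)$ into the node count $L(\Gamma)\ge 2N_U(n)$: this relies on knowing that every optimal nondeterministic tree satisfies $L_w\ge L_t-1$, which is precisely why Lemmas \ref{L2}, \ref{L3}, and \ref{L4} were set up in advance. Verifying that the correspondence between terminal nodes and realizable tuples is an exact bijection (explicitly in the upper-bound construction, implicitly in the lower bound) is the other delicate point.
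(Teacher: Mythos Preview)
Your proposal is correct and follows essentially the same route as the paper: the chain $2N_U(n)\le L_U^a(n)\le L_U^d(n)\le 2N_U(n)$, with the lower bound handled via Lemmas~\ref{L2}, \ref{L3}, and \ref{L4} applied to an injective-$\nu$ problem realizing $N_U(n)$. The only minor difference is cosmetic: for the upper bound the paper takes a node-minimal deterministic tree using only $f_1,\ldots,f_m$ and argues (as in Lemma~\ref{L1}) that it lies in $G_d^2(U)$ with all paths realizable, whereas you explicitly build such a tree by recursive splitting; both yield $L_t(\Gamma)\le N_U(f_1,\ldots,f_m)$ and then invoke Lemma~\ref{L3}(a).
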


\begin{proof}
Let $z=(\nu ,f_{1},\ldots ,f_{m})$ be a problem over $U$ and $m\leq n$. Let $%
\Gamma $ be a decision tree over $U$ which solves the problem $z$
deterministically, uses only attributes from the set $\{f_{1},\ldots
,f_{m}\} $, and has minimum number of nodes among such decision trees. In
the same way as in the proof of  Lemma \ref{L1}, one can prove that each
working node of $\Gamma $ has two edges leaving this node and, for each node
of $\Gamma $, there exists a realizable complete path that passes through
this node. Let $\xi _{1}$ and $\xi _{2}$ be different complete paths in $%
\Gamma $, $a_{1}\in A(\xi _{1})$, and $a_{2}\in A(\xi _{2})$. It is easy to
show that $(f_{1}(a_{1}),\ldots ,f_{m}(a_{1}))\neq (f_{1}(a_{2}),\ldots
,f_{m}(a_{2}))$. Therefore $L_{t}(\Gamma )\leq N_{U}(f_{1},\ldots
,f_{m})\leq N_{U}(n)$.
It is clear that $\Gamma  \in G_d^2(U)$.
By  Lemma \ref{L3}, $L_{w}(\Gamma )=L_{t}(\Gamma )-1$.
Hence $L(\Gamma )\leq 2N_{U}(n)$. Taking into account that $z$ is an
arbitrary problem over $U$ with $\dim z\leq n$ we obtain
\[
L_{U}^{d}(n)\leq 2N_{U}(n).
\]

Since any decision tree solving the problem $z$ deterministically solves
it nondeterministically we obtain%
\[
L_{U}^{a}(n)\leq L_{U}^{d}(n).
\]%
We now show that $2N_{U}(n)\leq L_{U}^{a}(n)$. Let us consider
a problem $z=(\nu ,f_{1},\ldots ,f_{n})$ over $U$ such that  $$%
N_{U}(f_{1},\ldots ,f_{n})=N_{U}(n)$$ and, for any $\bar{\delta}_{1},\bar{%
\delta}_{2}\in \{0,1\}^{n}$, if $\bar{\delta}_{1}\neq \bar{\delta}_{2}$,
then $\nu (\bar{\delta}_{1})\neq \nu (\bar{\delta}_{2})$. Let $\Gamma $ be a
decision tree over $U$ which solves the problem $z$ nondeterministically and
for which $L(\Gamma )=L_{U}^{a}(z)$. By  Lemma \ref{L2}, $\Gamma \in
G_{a}^{f}(U)$. Using Lemmas \ref{L3} and \ref{L4} we obtain $L_{w}(\Gamma
)\ge L_{t}(\Gamma )-1$. It is clear that $L_{t}(\Gamma )\geq N_{U}(f_{1},\ldots
,f_{n})=N_{U}(n)$. Therefore $L(\Gamma )\geq 2N_{U}(n)$, $L_{U}^{a}(z)\geq
2N_{U}(n)$, and $L_{U}^{a}(n)\geq 2N_{U}(n)$.
\end{proof}

The next statement follows directly from Lemmas 5.1 and 5.2 \cite{Moshkov05} and
evident inequality $N_{U}(n)\leq 2^{n}$ which is true for any infinite
binary information system $U$. The proof of Lemma 5.1 from \cite{Moshkov05} is based on Theorems 4.6 and 4.7 from the same monograph that are similar to results obtained in \cite{Sauer72,Shelah72}.

\begin{proposition}
\label{P6} For any infinite binary information system $U$, the function $%
N_{U}(n)$ has one of the following two types of behavior:

{\rm (POL)} If the system $U$ has finite I-dimension, then for any $%
n\in \mathbb{N}$,
\[
n+1\leq N_{U}(n)\leq (4n)^{I(U)}.
\]

{\rm(EXP)} If the system $U$ has infinite I-dimension, then for any $%
n\in \mathbb{N}$,
\[
N_{U}(n)=2^{n}.
\]
\end{proposition}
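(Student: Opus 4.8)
The plan is to read $N_U(f_1,\ldots,f_n)$ geometrically: it is exactly the number of nonempty cells in the partition of $A$ induced by $f_1,\ldots,f_n$, equivalently the cardinality of the realizable set
\[
S(f_1,\ldots,f_n)=\{(\delta_1,\ldots,\delta_n)\in\{0,1\}^n:\{f_1(x)=\delta_1,\ldots,f_n(x)=\delta_n\}\text{ is solvable in }A\}.
\]
I would split the argument according to the I-dimension, settling the (EXP) case and the (POL) lower bound by direct combinatorial constructions and reserving the Sauer--Shelah estimate for the (POL) upper bound.

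The (EXP) case is immediate: if $I(U)=\infty$, then for every $n$ there is an independent set $\{f_1,\ldots,f_n\}\subseteq F$, so by the definition of independence every tuple in $\{0,1\}^n$ is realizable and $N_U(f_1,\ldots,f_n)=2^n$; combined with the evident bound $N_U(n)\le 2^n$ this gives $N_U(n)=2^n$. For the lower bound $N_U(n)\ge n+1$ I would build the attributes incrementally. Starting from the trivial one-cell partition of $A$, I claim that whenever $k$ chosen attributes give a partition into $c$ cells there is an attribute of $F$ nonconstant on some cell: otherwise every attribute would be constant on each of the finitely many cells and hence determined by its values on them, giving at most $2^{c}$ distinct functions and contradicting the infinitude of $F$. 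Such an attribute splits a cell and raises the cell count by at least one, so after $n$ steps the partition has at least $n+1$ cells and $N_U(n)\ge n+1$.

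For the (POL) upper bound I would make precise the analogy between the I-dimension and the VC dimension. Writing $S=S(f_1,\ldots,f_n)\subseteq\{0,1\}^n$, a set of coordinates $\{i_1,\ldots,i_k\}$ is shattered by $S$ (its projection onto these coordinates is all of $\{0,1\}^k$) if and only if $\{f_{i_1},\ldots,f_{i_k}\}$ is an independent subset of $F$; indeed, solvability of each subsystem is equivalent to the existence, for each pattern on the $k$ coordinates, of a full realizable extension. Hence the VC dimension of $S$ is at most $I(U)$, and the Sauer--Shelah lemma yields
\[
N_U(f_1,\ldots,f_n)=|S|\le\sum_{i=0}^{I(U)}\binom{n}{i}\le (4n)^{I(U)},
\]
the last inequality being the routine estimate $\sum_{i=0}^{d}\binom{n}{i}\le (d+1)n^{d}\le 4^{d}n^{d}$. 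Taking the maximum over all choices of attributes gives $N_U(n)\le (4n)^{I(U)}$.

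The hard part is the Sauer--Shelah estimate itself, the combinatorial core of the statement, which corresponds to Theorems 4.6 and 4.7 of \cite{Moshkov05} (of the type of \cite{Sauer72,Shelah72}). I would either invoke it directly or reproduce one of its standard proofs---induction on $n$, or down-shifting/compression---whose content is that a family of subsets of an $n$-element set with VC dimension at most $d$ has at most $\sum_{i=0}^{d}\binom{n}{i}$ members. Once this lemma is in hand, the remaining ingredients---the reading of $N_U$ as a cell count, the equivalence between independence and shattering, and the binomial estimate---are all routine.
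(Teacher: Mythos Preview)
Your proposal is correct and follows essentially the same route the paper takes: the paper does not give a self-contained proof but defers to Lemmas 5.1 and 5.2 of \cite{Moshkov05}, noting that the upper bound rests on Sauer--Shelah type results (Theorems 4.6 and 4.7 there, cf.\ \cite{Sauer72,Shelah72}). Your argument simply unpacks those citations---the incremental cell-splitting construction for the lower bound $n+1\le N_U(n)$, the identification of I-dimension with the VC dimension of the realizable pattern set $S$, and the Sauer--Shelah bound $|S|\le\sum_{i=0}^{I(U)}\binom{n}{i}\le(4n)^{I(U)}$---so there is no substantive difference in approach.
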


We now prove Propositions  \ref{P3} and \ref{P4}.

\begin{proof}
[Proof of Proposition \ref{P3}] The statement of the proposition follows immediately
from Propositions \ref{P5} and \ref{P6}.
\end{proof}

\begin{proof}
[Proof of Proposition \ref{P4}] The statement of the proposition follows immediately
from  Proposition \ref{P5}.
\end{proof}

\section{Proof of  Theorem \protect\ref{T1}} \label{S4}

First, we prove six auxiliary statements.

\begin{lemma}
\label{L5}For any infinite binary information system, its type coincides
with one of the rows of  Table \ref{tab1}.
\end{lemma}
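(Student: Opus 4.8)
The plan is to show that the four type-components cannot be combined arbitrarily, but rather are constrained by the logical relationships among the three structural conditions used in Propositions \ref{P1}--\ref{P4}: having finite or infinite I-dimension, satisfying or not satisfying the condition of coverage, and satisfying or not satisfying the condition of restricted coverage. Each of the four functions $h_U^d, h_U^a, L_U^d, L_U^a$ has its type of behavior determined by these conditions, so the task reduces to enumerating which combinations of the conditions are logically consistent and reading off the resulting type-tuples.

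First I would record the implications among the structural conditions. The key observation is that restricted coverage (with parameters $m,t$) immediately implies coverage (with parameter $m$), since ``union of at most $t$ sets'' is a special case of ``union of a finite number of sets.'' Hence if $U$ satisfies restricted coverage, it satisfies coverage; contrapositively, if $U$ does not satisfy coverage, it does not satisfy restricted coverage. I would also observe, via Proposition \ref{P4}, that $L_U^d$ and $L_U^a$ always share the same type (POL or EXP), which already collapses the last two columns and eliminates eight of the sixteen a priori tuples. By Propositions \ref{P1} and \ref{P3}, the POL/EXP type of $L_U^d$ (equivalently $L_U^a$) is governed solely by finite versus infinite I-dimension.

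Next I would case-split on I-dimension and on the coverage conditions, using the propositions to translate each case into a type-tuple. If $I(U)$ is finite and $U$ satisfies restricted coverage, then $h_U^d$ is LOG (Prop.\ \ref{P1}), $L_U^d, L_U^a$ are POL (Props.\ \ref{P3}, \ref{P4}); since restricted coverage forces coverage, $h_U^a$ is CON (Prop.\ \ref{P2}), giving row 1. If $I(U)$ is finite but $U$ does not satisfy restricted coverage, then $h_U^d$ is LIN and $L$-functions are POL; here $h_U^a$ may be either CON or LIN depending on whether coverage holds, yielding rows 2 and 3. If $I(U)$ is infinite, then $h_U^d$ is LIN (infinite I-dimension implies no restricted coverage) and the $L$-functions are EXP; again $h_U^a$ is CON or LIN according to coverage, giving rows 4 and 5. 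The implication ``restricted coverage $\Rightarrow$ coverage'' is what rules out the would-be tuple $(\mathrm{LOG},\mathrm{LIN},\mathrm{POL},\mathrm{POL})$, and the fact that infinite I-dimension precludes restricted coverage is what forbids any EXP row from having $h_U^d=\mathrm{LOG}$.

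The main obstacle, and the only place requiring genuine care rather than bookkeeping, is verifying that the forbidden combinations are genuinely impossible rather than merely absent from my case analysis — in particular checking that no consistent assignment of the structural conditions produces a tuple outside the five listed rows, and confirming that LOG forces CON (a LOG system has finite I-dimension and restricted coverage, hence coverage, hence CON). I would close the argument by checking each of the remaining $16-5$ candidate tuples is excluded by one of: the $L_U^d=L_U^a$ constraint, the restricted-coverage-implies-coverage implication, or the equivalence between the EXP type and infinite I-dimension (which itself implies LIN for $h_U^d$).
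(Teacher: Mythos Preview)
Your approach is essentially the same as the paper's: enumerate the admissible combinations of the three structural conditions (coverage, restricted coverage, finite/infinite I-dimension), use the implication ``no coverage $\Rightarrow$ no restricted coverage'' to prune, and read off the type tuple in each case via Propositions~\ref{P1}--\ref{P4}.

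There is one incorrect assertion in your write-up: infinite I-dimension does \emph{not} preclude restricted coverage. The system $U_4$ of Lemma~\ref{L6.4} has infinite I-dimension yet satisfies restricted coverage with parameters $2$ and $1$, and the paper's own Table~\ref{tab2} explicitly lists the combination (Yes, Yes, Inf). Fortunately this error is harmless for the conclusion you need: $h_U^d=\mathrm{LIN}$ under infinite I-dimension follows directly from the disjunctive hypothesis of Proposition~\ref{P1} (infinite I-dimension \emph{or} failure of restricted coverage), so the false parenthetical is superfluous. Strike it and cite Proposition~\ref{P1} directly; the rest of the case analysis then goes through as you describe.
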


\begin{proof}
To prove this statement we fill  Table \ref{tab2}. In the first column
\textquotedblleft Cover.\textquotedblright , we have either
\textquotedblleft Yes\textquotedblright\ or \textquotedblleft
No\textquotedblright : \textquotedblleft Yes\textquotedblright\ if the
considered information system satisfies the condition of coverage and
\textquotedblleft No\textquotedblright\ otherwise. In the second column
\textquotedblleft Restr. cover.\textquotedblright ,\ we also have
either \textquotedblleft Yes\textquotedblright\ or \textquotedblleft
No\textquotedblright : \textquotedblleft Yes\textquotedblright\ if the
considered information system satisfies the condition of restricted coverage
and \textquotedblleft No\textquotedblright\ otherwise. In the third column
\textquotedblleft I-dim.\textquotedblright\ we have either
\textquotedblleft Fin\textquotedblright\ or \textquotedblleft
Inf\textquotedblright : \textquotedblleft Fin\textquotedblright\ if the
considered information system has finite I-dimension and \textquotedblleft
Inf\textquotedblright\ if the considered information system has infinite $I$%
-dimension.

If an information system does not satisfy the condition of
coverage, then this information system does not satisfy the condition of
restricted coverage. It means that there are only six possible tuples of
values of the considered three parameters of information systems which
correspond to the six rows of Table \ref{tab2}. The values of the considered three
parameters define the types of behavior of functions $h_{U}^{d}(n)$, $%
h_{U}^{a}(n)$, $L_{U}^{d}(n)$, and $L_{U}^{a}(n)$ according to Propositions %
\ref{P1}-\ref{P4}. We see that the set of possible tuples of values in the
last four columns coincides with the set of rows of  Table \ref{tab1}.
\end{proof}

\begin{table}[h]
\centering
\begin{tabular}{|lllllll|}\hline
Cover. & Restr. & I-dim. & $h_{U}^{d}(n)$ & $h_{U}^{a}(n)$ & $%
L_{U}^{d}(n)$ & $L_{U}^{a}(n)$ \\
& cover. &  &  &  &  &  \\ \hline
Yes & Yes & Fin & $\mathrm{LOG}$ & $\mathrm{CON}$ & $\mathrm{POL}$ & $%
\mathrm{POL}$ \\
Yes & No & Fin & $\mathrm{LIN}$ & $\mathrm{CON}$ & $\mathrm{POL}$ & $\mathrm{%
POL}$ \\
No & No & Fin & $\mathrm{LIN}$ & $\mathrm{LIN}$ & $\mathrm{POL}$ & $\mathrm{%
POL}$ \\
Yes & Yes & Inf & $\mathrm{LIN}$ & $\mathrm{CON}$ & $\mathrm{EXP}$ & $%
\mathrm{EXP}$ \\
Yes & No & Inf & $\mathrm{LIN}$ & $\mathrm{CON}$ & $\mathrm{EXP}$ & $\mathrm{%
EXP}$ \\
No & No & Inf & $\mathrm{LIN}$ & $\mathrm{LIN}$ & $\mathrm{EXP}$ & $\mathrm{EXP}$ \\ \hline
\end{tabular}
\caption{Parameters and types of infinite binary information systems}
  \label{tab2}
\end{table}

For each row of Table \ref{tab1}, we consider an example of infinite binary
information system which type coincides with this row.

For any $i\in \mathbb{N}$, we define two functions $p_{i}:\mathbb{N}%
\rightarrow \{0,1\}$ and $l_{i}:\mathbb{N}\rightarrow \{0,1\}$. Let $j\in
\mathbb{N}$. Then $p_{i}(j)=1$ if and only if $j=i$, and $l_{i}(j)=1$ if and
only if $j>i$.

Define an information system $U_{1}=(A_{1},F_{1})$ as follows: $A_{1}=%
\mathbb{N}$ and $F_{1}=\{l_{i}:i\in \mathbb{N}\}$.

\begin{lemma}
\label{L6.1} The information system $U_{1}$ belongs to the class $W_{1}$, $%
h_{U_{1}}^{d}(n)=\lceil \log _{2}(n+1)\rceil $, $h_{U_{1}}^{a}(1)=1$ and $%
h_{U_{1}}^{a}(n)=2$ if $n>1$, $L_{U_{1}}^{d}(n)=2(n+1)$, and $%
L_{U_{1}}^{a}(n)=2(n+1)$ for any $n\in \mathbb{N}$. This information system
satisfies the condition of coverage with  parameter $3$, satisfies the
condition of restricted coverage with parameters $3$ and $1$, and has
finite I-dimension equals to $1$. The information system $U_{1}$ is $d$%
-reachable.
\end{lemma}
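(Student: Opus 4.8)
The plan is to reduce every claim to one structural observation: since $l_i(x)=1$ iff $x>i$, the attributes of $U_1$ are threshold functions, and the solution set of any system $\{l_{i_1}(x)=\delta_1,\ldots,l_{i_m}(x)=\delta_m\}$ is an intersection of finitely many one-sided rays, hence a discrete interval $\{a,a+1,\ldots,b\}$ (with $b$ possibly $+\infty$). First I would record the two numerical consequences. Taking two thresholds $i<j$, the pattern $(l_i(x),l_j(x))=(0,1)$ forces $x\le i$ and $x>j$, which is impossible; since every singleton $\{l_i\}$ is clearly independent, this gives $I(U_1)=1$. Choosing $n$ distinct thresholds $i_1<\cdots<i_n$ cuts $\mathbb{N}$ into exactly $n+1$ nonempty blocks, and no choice does better, so $N_{U_1}(n)=n+1$. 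Proposition~\ref{P5} then yields $L_{U_1}^d(n)=L_{U_1}^a(n)=2N_{U_1}(n)=2(n+1)$, and since $I(U_1)$ is finite, Propositions~\ref{P3} and~\ref{P4} put both functions in type $\mathrm{POL}$.

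Next I would settle the coverage claims from the interval picture. A bounded interval $\{a,\ldots,b\}$ with $a\ge 2$ is exactly the solution set of the two-equation system $\{l_{a-1}(x)=1,\,l_b(x)=0\}$, while an unbounded or initial interval needs only one equation; hence every nonempty $(m,U_1)$-set is a $(2,U_1)$-set. Consequently any $(4,U_1)$-set is already a $(3,U_1)$-set (pad with a repeated equation), which is restricted coverage with parameters $3$ and $1$, and a fortiori the condition of coverage with parameter $3$. By Proposition~\ref{P2} this forces $h_{U_1}^a$ into type $\mathrm{CON}$, and by Proposition~\ref{P1} finite I-dimension together with restricted coverage forces $h_{U_1}^d$ into type $\mathrm{LOG}$; combined with the two $\mathrm{POL}$ entries this identifies the type of $U_1$ as row $1$, i.e.\ $U_1\in W_1$.

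For the exact depth values I would argue by matching bounds. For $h_{U_1}^d$, the problem on thresholds $1,\ldots,n$ with $\nu$ injective on the $n+1$ consistent tuples takes $n+1$ distinct values on $\mathbb{N}$, so any solving deterministic tree uses at least $n+1$ terminal nodes and hence has depth at least $\lceil\log_2(n+1)\rceil$; conversely a balanced binary search over the (at most $n$) distinct thresholds of an arbitrary dimension-$\le n$ problem locates the block containing $x$ in at most $\lceil\log_2(n+1)\rceil$ queries, giving the matching upper bound. For $h_{U_1}^a$, representing each block as its own $(2,U_1)$-set and gluing the corresponding length-$\le 2$ paths at a common root solves any problem nondeterministically in depth $\le 2$; the lower bound $2$ for $n>1$ comes from a problem with three blocks, whose bounded middle block is a set such as $\{2\}$ that is not a ray and therefore cannot equal $A(\xi)$ for any path $\xi$ with at most one working node. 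The case $n=1$ is separate: a single threshold yields two rays, each a $(1,U_1)$-set, so a non-constant problem is solved in depth exactly $1$.

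The main obstacle is the $d$-reachability claim, which demands a single tree simultaneously optimal in depth and in number of nodes. Here I would use precisely the balanced binary search tree above: for a problem whose distinct thresholds number $k\le n$ it has $k+1\le n+1$ terminal nodes and, lying in $G_d^2(U_1)$, exactly $L_w=k$ working nodes by Lemma~\ref{L3}, hence $1+(k+1)+k=2(k+1)\le 2(n+1)=L_{U_1}^d(n)$ nodes, while its depth $\lceil\log_2(k+1)\rceil\le\lceil\log_2(n+1)\rceil=h_{U_1}^d(n)$. Thus one and the same tree meets both bounds $(h_{U_1}^d(n),L_{U_1}^d(n))$ for every problem, so this pair is a boundary (hence optimal) $d$-pair and $U_1$ is $d$-reachable. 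The delicate point to verify is only that balancing the search does not inflate the node count beyond $2(k+1)$, which holds because a full binary tree with $k+1$ leaves has exactly $k$ working nodes regardless of its shape.
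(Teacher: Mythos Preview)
Your proposal is correct and follows essentially the same route as the paper: both compute $N_{U_1}(n)=n+1$ and invoke Proposition~\ref{P5} for the $L$-values, both obtain $h_{U_1}^d$ by the terminal-node counting lower bound versus binary search upper bound, both get $h_{U_1}^a$ from the two-equation interval description, and both read off $W_1$-membership from the resulting types. The only noteworthy difference is in the $d$-reachability step: the paper takes an arbitrary depth-optimal tree using only the problem's attributes and then \emph{prunes} it (removing unrealizable branches and contracting one-child working nodes) to force it into $G_d^2(U_1)$ with at most $n+1$ leaves, whereas you \emph{directly construct} a balanced binary search tree on the $k\le n$ distinct thresholds, which is already in $G_d^2(U_1)$ with exactly $k+1$ leaves. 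Your construction is slightly cleaner since it avoids the pruning step, and along the way you actually establish coverage with parameter $2$ (stronger than the stated parameter $3$); both arguments rely on Lemma~\ref{L3} to convert the leaf count into the node bound $2(k+1)\le 2(n+1)$.
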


\begin{proof}
It is easy to show that $N_{U_{1}}(n)=n+1$ for any $n\in \mathbb{N}$. Using
 Proposition \ref{P5} we obtain $L_{U_{1}}^{d}(n)=L_{U_{1}}^{a}(n)=2(n+1)$
for any $n\in \mathbb{N}$. Let $n\in \mathbb{N}$. Consider a problem $z=(\nu
,l_{1},\ldots ,l_{n})$ over $U_{1}$ such that, for each $\bar{\delta}_{1},%
\bar{\delta}_{2}\in \{0,1\}^{n}$ with $\bar{\delta}_{1}\neq \bar{\delta}_{2}$%
, $\nu (\bar{\delta}_{1})\neq \nu (\bar{\delta}_{2})$. It is clear that $%
N_{U_{1}}(l_{1},\ldots ,l_{n})=n+1$. Therefore each decision tree $\Gamma $
over $U_{1}$ that solves the problem $z$ deterministically has at least $n+1$
terminal nodes. One can show that the number of terminal nodes in $\Gamma $
is at most $2^{h(\Gamma )}$. Hence $n+1\leq 2^{h(\Gamma )}$ and $\log
_{2}(n+1)\leq h(\Gamma )$. Since $h(\Gamma )$ is an integer, $\lceil \log
_{2}(n+1)\rceil \leq h(\Gamma )$. Thus, $h_{U_{1}}^{d}(n)\geq \lceil \log
_{2}(n+1)\rceil $. Set $m=\lceil \log _{2}(n+1)\rceil $. Then $n\leq 2^{m}-1$%
. One can show that $h_{U_{1}}^{d}(2^{m}-1)\leq m$ (the construction of an
appropriate decision tree is based on an analog of binary search, and we use
only attributes from the problem description) and $h_{U_{1}}^{d}(n)\leq
h_{U_{1}}^{d}(2^{m}-1)$. Therefore $h_{U_{1}}^{d}(n)\leq \lceil \log
_{2}(n+1)\rceil $ and $h_{U_{1}}^{d}(n)=\lceil \log _{2}(n+1)\rceil $. It is
clear that $h_{U_{1}}^{a}(1)=1$. Let $n\geq 2$ and $z=(\nu ,f_{1},\ldots
,f_{n})$ be an arbitrary problem over $U_{1}$ and $l_{i_{1}},\ldots
,l_{i_{m}}$ be all pairwise different attributes from the set $%
\{f_{1},\ldots ,f_{n}\}$ ordered such that $i_{1}<\ldots <i_{m}$. Then these
attributes divide the set $\mathbb{N}$ into $m+1$ nonempty domains that are
sets of solutions on $\mathbb{N}$ of the following systems of equations: $%
\{l_{i_{1}}(x)=0\}$, $\{l_{i_{1}}(x)=1,l_{i_{2}}(x)=0\}$, \ldots , $%
\{l_{i_{m-1}}(x)=1,l_{i_{m}}(x)=0\}$, $\{l_{i_{m}}(x)=1\}$. The value $z(x)$
is constant in each of the considered domains. Using these facts it is easy
to show that there exists a decision tree $\Gamma $ over $U_{1}$ which
solves the problem $z$ nondeterministically and for which $h(\Gamma )=2$ if $%
m\geq 2$. Therefore $h_{U_{1}}^{a}(n)\leq 2$. One can show that there
exists a problem $z$ over $U_{1}$ such that $\dim z=n$ and $%
h_{U_{1}}^{a}(z)\geq 2$. Therefore $h_{U_{1}}^{a}(n)=2$.

Since the function $%
h_{U_{1}}^{d}$ has the type of behavior \textrm{LOG}, the information system
$U_{1}$ belongs to the class $W_{1}$ -- see Table  \ref{tab1}. One can show that the
information system $U_{1}$ satisfies the condition of coverage with
parameter $3$, satisfies the condition of restricted coverage with
parameters $3$ and $1$, and has finite I-dimension equals to $1$.

Let $%
z=(\nu ,f_{1},\ldots ,f_{n})$ be a problem over $U_{1}$. We know
that there exists a decision tree $\Gamma $ over $U_{1}$ which solves this
problem deterministically, uses only attributes from the set $\{f_{1},\ldots
,f_{n}\}$, and which depth is at most $\lceil \log _{2}(n+1)\rceil $. By
removal of some nodes and edges from $\Gamma $ we can obtain a decision tree
$\Gamma ^{\prime }$ over $U_{1}$ which solves the problem $z$
deterministically, and in which each working node has exactly two leaving
edges and each complete path is realizable. It is clear that $%
N_{U_{1}}(f_{1},\ldots ,f_{n})\leq n+1$. Therefore $L_{t}(\Gamma ^{\prime
})\leq n+1$. By  Lemma \ref{L3}, $L_{w}(\Gamma ^{\prime })\leq n$. Therefore $%
L(\Gamma ^{\prime })\leq 2(n+1)=L_{U_{1}}^{d}(n)$. Taking into account that $%
h(\Gamma ^{\prime })\leq \lceil \log _{2}(n+1)\rceil =h_{U_{1}}^{d}(n)$ and $z$ is an
arbitrary problem over $U_{1}$ with $\dim z=n$ we obtain that $U_{1}$ is $d$%
-reachable.
\end{proof}

Define an information system $U_{2}=(A_{2},F_{2})$ as follows: $A_{2}=%
\mathbb{N}$ and $F_{2}=\{p_{i}:i\in \mathbb{N}\}\cup \{l_{2^{i}}:i\in
\mathbb{N}\}$.

\begin{lemma}
\label{L6.2}The information system $U_{2}$ belongs to the class $W_{2}$, $%
h_{U_{2}}^{d}(n)=n$, $h_{U_{2}}^{a}(n)=1$, $L_{U_{2}}^{d}(n)=2(n+1)$, and $%
L_{U_{2}}^{a}(n)=2(n+1)$ for any $n\in \mathbb{N}$. This information system
satisfies the condition of coverage with  parameter $2$, does not satisfy
the condition of the restricted coverage, and has finite I-dimension
equals to $1$.
\end{lemma}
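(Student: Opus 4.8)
The plan is to reduce every assertion of the lemma to four structural facts about $U_2$: (i) $N_{U_2}(n)=n+1$; (ii) $I(U_2)=1$; (iii) every $(2,U_2)$-set is a finite union of $(1,U_2)$-sets; and (iv) $U_2$ does not satisfy the condition of restricted coverage. From (i) and Proposition \ref{P5} I would get $L_{U_2}^d(n)=L_{U_2}^a(n)=2(n+1)$ at once. From (iii) together with Lemma \ref{L0}(a) I would get that $U_2$ satisfies the condition of coverage with parameter $2$ (each $(3,U_2)$-set is a finite union of $(1,U_2)$-sets, hence of $(2,U_2)$-sets, since every $(1,U_2)$-set is a $(2,U_2)$-set), and also $h_{U_2}^a(n)=1$. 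From (ii) and (iv), Proposition \ref{P1} forces the LIN case, so $h_{U_2}^d(n)=n$. Finally, the triple (coverage yes, restricted coverage no, I-dimension finite) is exactly the second row of Table \ref{tab2}, placing $U_2$ in $W_2$.

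For (i), I would count the realizable sign patterns of $n$ attributes on $\mathbb{N}$. Writing the chosen attributes as $d$ pairwise distinct thresholds together with some point-attributes, the thresholds cut $\mathbb{N}$ into $d+1$ blocks on which the threshold-pattern is constant, and each point $p_c$ contributes at most one further tuple (the one realized at $x=c$). Hence $N_{U_2}(f_1,\ldots,f_n)\le (d+1)+(\#\,\text{distinct points})\le n+1$, with equality for $n$ distinct thresholds; this gives $N_{U_2}(n)=n+1$. For (ii), a single attribute takes both values, so $I(U_2)\ge 1$, while no two attributes are independent: $\{p_i,p_j\}$ cannot realize $(1,1)$, $\{l_{2^i},l_{2^j}\}$ with $i<j$ cannot realize $(0,1)$, and $\{p_i,l_{2^j}\}$ cannot realize exactly one of $(1,0),(1,1)$ according as $i>2^j$ or $i\le 2^j$; thus $I(U_2)=1$. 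For (iii) I would run through the possible shapes of a $(2,U_2)$-set — an interval, ray, or initial segment determined by threshold equations, possibly with one or two deleted points — and in each case exhibit the finite decomposition into $(1,U_2)$-sets; the only nonroutine move is to cover a ray with a deleted point $\{x>2^a\}\setminus\{i\}$ by taking the sub-ray $\{x>2^{a'}\}$ for the least power $2^{a'}\ge i$ (a $(1,U_2)$-set avoiding $i$) plus the finitely many intervening integers as singletons. With (iii) in hand, the depth-$1$ nondeterministic tree for an arbitrary problem is assembled exactly as in the CON part of the proof of Proposition \ref{P2}, and $h_{U_2}^a(n)\ge 1$ is immediate.

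The crux is (iv). The decisive feature of $U_2$ is that the interval $I_j=\{x:2^j<x\le 2^{j+1}\}$ contains no power of $2$ in its interior, so no threshold attribute separates points inside $I_j$. I would fix $m$, choose $j$ with $2^j>m$, pick distinct points $c_1,\ldots,c_{m-1}\in I_j$, and set $B=I_j\setminus\{c_1,\ldots,c_{m-1}\}$, an $(m+1,U_2)$-set. The claim is that every $(m,U_2)$-set contained in the finite set $B$ is a singleton: a threshold-part that meets $B$ but is not all of $I_j$ would have to delete at least $2^j$ points to fit, while the threshold-part equal to $I_j$ already costs two equations, leaving room to delete at most $m-2<m-1$ points and so failing to avoid all the $c_i$. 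Consequently $B$ needs at least $|B|=2^j-m+1$ sets $(m,U_2)$-sets to be covered, which is unbounded in $j$; hence $U_2$ fails restricted coverage for every $m$, establishing (iv).

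The main obstacle I anticipate is making the case analysis in the crux airtight — precisely, classifying which $(m,U_2)$-subsets of the finite set $B$ are possible, and verifying that an interval-type piece using $I_j$ as its threshold-part genuinely costs two equations, so that deleting the $m-1$ forbidden points would exceed the budget of $m$ equations. Once this bookkeeping is pinned down, the remaining steps are routine; as a concrete alternative to invoking Proposition \ref{P1} for $h_{U_2}^d(n)=n$, the same interval $I_j$ with $n$ points placed inside it yields a problem $z=(\nu,p_{c_1},\ldots,p_{c_n})$ on which every deterministic tree must, along the path of an object lying in $I_j\setminus\{c_1,\ldots,c_n\}$, query all $n$ point-attributes, forcing depth $n$.
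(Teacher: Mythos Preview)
Your proposal is correct, but it inverts the paper's logic at the key step. The paper proves $h_{U_2}^d(n)=n$ directly by exhibiting the problem $z=(\nu,p_{2^t+1},\ldots,p_{2^t+n})$ with $2^t>n$ and injective $\nu$, observing that along the complete path containing $2^t+n+1$ any deterministic tree must query all $n$ point-attributes; it then reads off the failure of restricted coverage from Proposition~\ref{P1} (since $I(U_2)<\infty$ and $h_{U_2}^d$ is LIN, the LOG case is excluded). You do the reverse: you prove the failure of restricted coverage combinatorially (your ``crux'' (iv)) and then invoke Proposition~\ref{P1} to get $h_{U_2}^d(n)=n$. Your direct argument for (iv) is sound --- the observation that any threshold-interval meeting $I_j$ must contain $I_j$, together with the equation-budget count, does force every $(m,U_2)$-subset of $B$ to be a singleton --- but it is exactly the bookkeeping you flag as the main obstacle, and the paper avoids it entirely. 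Amusingly, the ``concrete alternative'' you offer at the end (place $n$ points inside $I_j$ and follow the path of an unmarked object) \emph{is} the paper's proof. For $h_{U_2}^a(n)=1$ the two arguments are closer: the paper covers $\mathbb{N}$ directly by the $(1,U_2)$-sets $\{p_1=1\},\ldots,\{p_{2^t}=1\},\{l_{2^t}=1\}$ for a single large $t$, while you first establish the general coverage-with-parameter-$1$ fact (iii) and then quote the CON construction from Proposition~\ref{P2}; either way the depth-$1$ tree is the same. Your route buys a self-contained structural picture of $U_2$; the paper's route buys brevity.
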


\begin{proof}
It is easy to show that $N_{U_{2}}(n)=n+1$ for any $n\in \mathbb{N}$. Using
 Proposition \ref{P5} we obtain $L_{U_{2}}^{d}(n)=L_{U_{2}}^{a}(n)=2(n+1)$
for any $n\in \mathbb{N}$. Let $n\in \mathbb{N}$. Choose $t\in \mathbb{N}$
such that $2^{t}>n$. Consider a problem $z=(\nu ,p_{2^{t}+1},\ldots
,p_{2^{t}+n})$ over $U_{2}$ such that, for each $\bar{\delta}_{1},\bar{\delta%
}_{2}\in \{0,1\}^{n}$ with $\bar{\delta}_{1}\neq \bar{\delta}_{2}$, $\nu (%
\bar{\delta}_{1})\neq \nu (\bar{\delta}_{2})$. Consider an arbitrary
decision tree $\Gamma $ over $U_{2}$ that solves the problem $z$
deterministically and a complete path $\xi $ of $\Gamma $ such that $2^t+n+1 \in A_2(\xi)$. One can show
that if the number of working nodes in $\xi $ is less than $n$, then the
function $z(x)$ is not constant on the set $A_{2}(\xi )$ but this is
impossible. Therefore $h(\Gamma )\geq n$ and $h_{U_{2}}^{d}(n)\geq n$. It is
clear that $h_{U_{2}}^{d}(n)\leq n$. Hence $h_{U_{2}}^{d}(n)=n$. It is easy
to show that $h_{U_{2}}^{a}(n)\geq 1$. We now show that $h_{U_{2}}^{a}(n)%
\leq 1$. Let $z=(\nu ,f_{1},\ldots ,f_{n})$ be an arbitrary problem over $%
U_{2}$. Each attribute from the set $\{f_{1},\ldots ,f_{n}\}$ is of the kind
$p_{i}$ or $l_{i}$. We say about the number $i$ as about the index of the
considered attribute. Let $j$ be the maximum index of an attribute from the
set $\{f_{1},\ldots ,f_{n}\}$ and $t$ be a number from $\mathbb{N}$ such
that $2^{t}>j$. Then the function $z(x)$ is constant on the sets of
solutions of equation systems $\{p_{1}(x)=1\},\ldots ,\{p_{2^{t}}(x)=1\},$
$\{l_{2^{t}}(x)=1\}$ on $A_{2}$, and the union of these sets of solutions is
equal to $A_{2}$. Using these facts it is easy to show that there exists a
decision tree $\Gamma $ over $U_{2}$ which solves the problem $z$
nondeterministically and for which $h(\Gamma )=1$. Therefore $%
h_{U_{2}}^{a}(n)\leq 1$. Hence $h_{U_{1}}^{a}(n)=1$.

Since the function $%
h_{U_{2}}^{d}$ has the type of behavior \textrm{LIN}, the function $%
h_{U_{2}}^{a}$ has the type of behavior \textrm{CON}, and the functions $%
L_{U_{2}}^{d}$ and $L_{U_{2}}^{a}$ have the type of behavior \textrm{POL}
the information system $U_{2}$ belongs to the class $W_{2}$ -- see  Table \ref{tab1}.
One can show that the information system $U_{2}$ satisfies the condition of
coverage with  parameter $2$ and has finite I-dimension equals to $1$.
Using  Proposition \ref{P1} we obtain that this information system does not
satisfy the condition of restricted coverage.
\end{proof}

Define an information system $U_{3}=(A_{3},F_{3})$ as follows: $A_{3}=%
\mathbb{N}$ and $F_{3}=\{p_{i}:i\in \mathbb{N}\}$.

\begin{lemma}
\label{L6.3}The information system $U_{3}$ belongs to the class $W_{3}$, $%
h_{U_{3}}^{d}(n)=n$, $h_{U_{3}}^{a}(n)=n$, $L_{U_{3}}^{d}(n)=2(n+1)$, and $%
L_{U_{3}}^{a}(n)=2(n+1)$ for any $n\in \mathbb{N}$. This information system
does not satisfy the conditions of coverage and restricted coverage, and has
finite I-dimension equals to $1$.
\end{lemma}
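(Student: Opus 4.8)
The plan is to follow the pattern of the proofs of Lemmas \ref{L6.1} and \ref{L6.2}: first pin down the combinatorial parameters $N_{U_3}(n)$, the I-dimension, and the coverage status of $U_3$, and then read off the four complexity functions from Propositions \ref{P1}, \ref{P2}, and \ref{P5}. Since $p_i$ is the indicator of the singleton $\{i\}$, for distinct indices $i_1,\ldots,i_n$ the profile $(p_{i_1}(x),\ldots,p_{i_n}(x))$ of an object $x$ is the all-zero tuple when $x\notin\{i_1,\ldots,i_n\}$ and the unit tuple with its single $1$ in position $k$ when $x=i_k$. Hence exactly $n+1$ tuples from $\{0,1\}^n$ are realizable, and this is clearly the maximum, so $N_{U_3}(n)=n+1$. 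Proposition \ref{P5} then immediately gives $L_{U_3}^d(n)=L_{U_3}^a(n)=2N_{U_3}(n)=2(n+1)$, which is the POL behavior.

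For the I-dimension I would note that any single $p_i$ is independent (both $p_i(x)=0$ and $p_i(x)=1$ are solvable on the infinite set $\mathbb{N}$), whereas no pair $\{p_i,p_j\}$ with $i\neq j$ is independent, since the system $\{p_i(x)=1,p_j(x)=1\}$ forces $x=i=j$ and has no solution. Thus $I(U_3)=1$.

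The main step is to show that $U_3$ fails the condition of coverage (hence, a fortiori, the condition of restricted coverage). For this I would first classify the nonempty $(m,U_3)$-sets: a system $\{p_{i_1}(x)=\delta_1,\ldots,p_{i_m}(x)=\delta_m\}$ with some $\delta_k=1$ has a nonempty solution set only when that set is the singleton $\{i_k\}$, while a system with all $\delta_k=0$ has solution set $\mathbb{N}\setminus S$ with $|S|\leq m$. So every $(m,U_3)$-set is either a singleton or a cofinite set missing at most $m$ points. Fix any $m$ and take $S$ with $|S|=m+1$; then $B=\mathbb{N}\setminus S$ is an $(m+1,U_3)$-set. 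Any $(m,U_3)$-set contained in $B$ must avoid $S$, but a cofinite $(m,U_3)$-set $\mathbb{N}\setminus T$ with $|T|\leq m$ cannot satisfy $S\subseteq T$ since $|S|=m+1$; hence every $(m,U_3)$-set inside $B$ is a singleton. As $B$ is infinite, it is not a union of finitely many singletons, so $B$ is not a finite union of $(m,U_3)$-sets. Since $m$ was arbitrary, $U_3$ does not satisfy the condition of coverage, and therefore does not satisfy the condition of restricted coverage either.

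With these facts in hand the remaining conclusions are immediate. Because $U_3$ does not satisfy the condition of coverage, Proposition \ref{P2} gives $h_{U_3}^a(n)=n$ (LIN); because $U_3$ does not satisfy the condition of restricted coverage, Proposition \ref{P1} gives $h_{U_3}^d(n)=n$ (LIN), the finite value of $I(U_3)$ notwithstanding. The resulting type $(\mathrm{LIN},\mathrm{LIN},\mathrm{POL},\mathrm{POL})$ is the third row of Table \ref{tab1}, so $U_3\in W_3$. The only genuinely delicate point is the classification of $(m,U_3)$-sets together with the counting argument ruling out a finite cover of $B$; everything else is a direct substitution into the already-established propositions.
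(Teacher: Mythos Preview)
Your proof is correct but takes the reverse route from the paper. You first establish directly that $U_3$ fails the condition of coverage (via the classification of $(m,U_3)$-sets as singletons or cofinite sets missing at most $m$ points, and the observation that $\mathbb{N}\setminus S$ with $|S|=m+1$ cannot be covered), and then invoke Propositions~\ref{P1} and~\ref{P2} in the forward direction to conclude $h_{U_3}^d(n)=h_{U_3}^a(n)=n$. The paper does the opposite: it proves $h_{U_3}^d(n)=n$ and $h_{U_3}^a(n)=n$ by a direct analysis of decision trees for the problem $z=(\nu,p_1,\ldots,p_n)$ with injective $\nu$ (following the all-zero path and arguing it must have length~$n$), and only afterwards reads off the failure of coverage and restricted coverage from Propositions~\ref{P1} and~\ref{P2} applied in the contrapositive direction. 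Your approach exploits the general machinery more efficiently and makes the combinatorial reason for coverage failure explicit; the paper's approach is more self-contained for this particular system and does not rely on the (harder) implications inside Propositions~\ref{P1} and~\ref{P2}. Both are valid.
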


\begin{proof}
It is easy to show that $N_{U_{3}}(n)=n+1$ for any $n\in \mathbb{N}$. Using
 Proposition \ref{P5} we obtain $L_{U_{3}}^{d}(n)=L_{U_{3}}^{a}(n)=2(n+1)$
for any $n\in \mathbb{N}$. Let $n\in \mathbb{N}$. Consider a problem $z=(\nu
,p_{1},\ldots ,p_{n})$ over $U_{3}$ such that, for each $\bar{\delta}_{1},%
\bar{\delta}_{2}\in \{0,1\}^{n}$ with $\bar{\delta}_{1}\neq \bar{\delta}_{2}$%
, $\nu (\bar{\delta}_{1})\neq \nu (\bar{\delta}_{2})$. Consider an arbitrary
decision tree $\Gamma _{1}$ over $U_{3}$ that solves the problem $z$
deterministically and a complete path $\xi $ in $\Gamma _{1}$ in which each
edge leaving a working node is labeled with the number $0$. One can show
that if the number of working nodes in $\xi $ is less than $n$, then the
function $z(x)$ is not constant on the set $A_{3}(\xi )$ but this is
impossible. Therefore $h(\Gamma _{1})\geq n$ and $h_{U_{3}}^{d}(n)\geq n$.
It is clear that $h_{U_{3}}^{d}(n)\leq n$. Hence $h_{U_3}^{d}(n)=n$. Consider
an arbitrary decision tree $\Gamma _{2}$ over $U_{3}$ that solves the
problem $z$ nondeterministically. Let $\xi _{1}$ be a complete path of $%
\Gamma _{2}$ in which at least one edge leaving a working node is labeled
with the number $1$. Then the set $A_{3}(\xi _{1})$ contains at most one
element from $A_{3}$. The set $A_{3}$ is infinite, the number of complete
paths in $\Gamma _{2}$ is finite, and the union of the sets $A_{3}(\xi )$ for
all complete paths $\xi $ in $\Gamma $ is equal to $A_{3}$. Therefore there
exists a complete path $\xi _{0}$ in $\Gamma _{2}$ in which each edge leaving
a working node is labeled with the number $0$. One can show that if the
number of working nodes in $\xi _{0}$ is less than $n$, then the function $%
z(x)$ is not constant on the set $A_{3}(\xi _{0})$ but this is impossible.
Therefore $h(\Gamma _{2})\geq n$ and $h_{U_{3}}^{a}(n)\geq n$. It is clear
that $h_{U_{3}}^{a}(n)\leq n$. Hence $h_{U_{3}}^{a}(n)=n$.

Since the
functions $h_{U_{3}}^{d}$ and $h_{U_{3}}^{a}$ have the type of behavior
\textrm{LIN} and the functions $L_{U_{3}}^{d}$ and $L_{U_{3}}^{a}$ have the
type of behavior \textrm{POL}, the information system $U_{3}$ belongs to the
class $W_{3}$ -- see  Table \ref{tab1}. One can show that this information system has
finite I-dimension equals to $1$. Using Propositions \ref{P1} and \ref{P2}
we obtain that the information system $U_{3}$ does not satisfy the
conditions of coverage and restricted coverage.
\end{proof}

Define an information system $U_{4}=(A_{4},F_{4})$ as follows: $A_{4}=%
\mathbb{N}$ and $F_{4}$ is the set of all functions from $\mathbb{N}$ to $%
\{0,1\}$.

\begin{lemma}
\label{L6.4}The information system $U_{4}$ belongs to the class $W_{4}$, $%
h_{U_{4}}^{d}(n)=n$, $h_{U_{4}}^{a}(n)=1$, $L_{U_{4}}^{d}(n)=2^{n+1}$, and $%
L_{U_{4}}^{a}(n)=2^{n+1}$ for any $n\in \mathbb{N}$. This information system
satisfies the condition of coverage with parameter $2$, satisfies the
condition of restricted coverage with parameters $2$ and $1$, and has
infinite I-dimension.
\end{lemma}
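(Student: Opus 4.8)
The plan is to read off three of the four complexity values from the general propositions and to treat by hand only the value $h_{U_4}^a$, which the general machinery does not pin down sharply. First I would establish that $I(U_4)=\infty$. For each $m\in\mathbb{N}$, fix $2^m$ distinct objects of $\mathbb{N}$, put them in bijection with $\{0,1\}^m$, and define $f_1,\ldots,f_m\in F_4$ so that $f_i$ returns the $i$th coordinate of the tuple assigned to its argument (and, say, $0$ on every other object). As $F_4$ consists of \emph{all} binary functions on $\mathbb{N}$, these $f_i$ are legitimate attributes and every system $\{f_1(x)=\delta_1,\ldots,f_m(x)=\delta_m\}$ has a solution, so $\{f_1,\ldots,f_m\}$ is independent of cardinality $m$; hence $I(U_4)=\infty$. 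Proposition \ref{P6} (case EXP) then gives $N_{U_4}(n)=2^n$, Proposition \ref{P5} gives $L_{U_4}^d(n)=L_{U_4}^a(n)=2N_{U_4}(n)=2^{n+1}$, and Proposition \ref{P1} (case LIN, applicable since $I(U_4)=\infty$) gives $h_{U_4}^d(n)=n$.

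Next I would record the coverage properties. The key observation is that every nonempty subset $S\subseteq\mathbb{N}$ is a $(1,U_4)$-set: taking $f\in F_4$ to be the indicator of $S$ gives $S=\{x:f(x)=1\}$. Since every $(1,U_4)$-set is also a $(2,U_4)$-set, and any $(3,U_4)$-set is in particular a nonempty subset of $\mathbb{N}$, every $(3,U_4)$-set is itself a single $(2,U_4)$-set. This is precisely the condition of restricted coverage with parameters $2$ and $1$, and it entails the condition of coverage with parameter $2$.

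The main step, and the only one needing a hands-on argument, is the sharp value $h_{U_4}^a(n)=1$: applying the proof of Proposition \ref{P2} with the coverage parameter $2$ would only yield $h_{U_4}^a(n)\le 2$, so I would instead build a depth-$1$ nondeterministic tree directly. Given a problem $z=(\nu,f_1,\ldots,f_n)$, the range of $z$ is finite; for each attained value $v$ set $D_v=\{a\in\mathbb{N}:z(a)=v\}$ and let $g_v\in F_4$ be its indicator. Join the root to working nodes $w_v$ labelled $g_v$, each with one outgoing edge labelled $1$ leading to a terminal labelled $v$; the complete path through $w_v$ has $A(\xi)=D_v$. As the sets $D_v$ partition $\mathbb{N}$ and $z\equiv v$ on $D_v$, this tree solves $z$ nondeterministically and has depth $1$, whence $h_{U_4}^a(n)\le 1$. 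For the matching lower bound, for $n\ge 1$ take a dimension-$1$ problem with $f_1$ non-constant and $\nu(0)\ne\nu(1)$, so that $z$ is non-constant; a depth-$0$ tree has $A(\xi)=\mathbb{N}$ on every complete path $\xi$, which forces all its terminal labels to equal the single value of $z$ and hence cannot solve a non-constant problem. Thus $h_{U_4}^a(n)=1$. With $h_{U_4}^d$ of type LIN, $h_{U_4}^a$ of type CON, and $L_{U_4}^d,L_{U_4}^a$ of type EXP, row $4$ of Table \ref{tab1} gives $U_4\in W_4$.
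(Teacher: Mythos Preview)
Your proposal is correct and follows essentially the same route as the paper: infinite $I$-dimension gives $N_{U_4}(n)=2^n$ via Proposition~\ref{P6}, whence $L_{U_4}^d=L_{U_4}^a=2^{n+1}$ by Proposition~\ref{P5} and $h_{U_4}^d(n)=n$ by Proposition~\ref{P1}; the coverage claims come from the fact that every nonempty subset of $\mathbb{N}$ is already a $(1,U_4)$-set; and the only direct construction is the depth-$1$ nondeterministic tree built from indicator functions. The one cosmetic difference is that the paper partitions $A_4$ by attribute tuples $(\delta_1,\ldots,\delta_n)$ rather than by output values of $z$ when building that tree, but both choices work for the same reason.
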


\begin{proof}
It is easy to show that the information system $U_{4}$ has infinite $I$%
-dimension. By  Proposition \ref{P6}, $N_{U_{4}}(n)=2^{n}$ for any $n\in
\mathbb{N}$. Using Proposition \ref{P5} we obtain $%
L_{U_{4}}^{d}(n)=L_{U_{4}}^{a}(n)=2^{n+1}$ for any $n\in \mathbb{N}$. Let $%
n\in \mathbb{N}$. By  Proposition \ref{P1}, $h_{U_{4}}^{d}(n)=n$. It is easy
to show that $h_{U_{4}}^{a}(n)\geq 1$. We now show that $h_{U_{4}}^{a}(n)%
\leq 1$. Let $z=(\nu ,f_{1},\ldots ,f_{n})$ be an arbitrary problem over $%
U_{4}$. Let $(\delta _{1},\ldots ,\delta _{n})\in \{0,1\}^{n}$ and the set $%
B $ of solutions from $A_{4}$ of the equation system $\{f_{1}(x)=\delta
_{1},\ldots ,f_{n}(x)=\delta _{n}\}$ is nonempty. Then there exists a
function $f\in F_{4}$ such that the set $B$ is the set of solutions from $%
A_{4}$ of the equation system $\{f(x)=1\}$. Using this fact it is easy to
show that there exists a decision tree $\Gamma $ over $U_{4}$ which solves
the problem $z$ nondeterministically and for which $h(\Gamma )=1$. Therefore
$h_{U_{4}}^{a}(n)\leq 1$. Hence $h_{U_{4}}^{a}(n)=1$.

Since the function $%
h_{U_{4}}^{d}$ has the type of behavior \textrm{LIN}, the function $%
h_{U_{4}}^{a}$ has the type of behavior \textrm{CON}, and the functions $%
L_{U_{4}}^{d}$ and $L_{U_{4}}^{a}$ have the type of behavior \textrm{EXP},
the information system $U_{4}$ belongs to the class $W_{4}$ -- see  Table \ref{tab1}.
One can show that this information system satisfies the condition of
coverage with parameter $2$ and satisfies the condition of restricted
coverage with parameters $2$ and $1$.
\end{proof}

Define an information system $U_{5}=(A_{5},F_{5})$ as follows: $A_{5}$ is
the set of all infinite sequences $a_{1},a_{2},\ldots $, where $a_{i}\in
\{0,1\}$ for any $i\in \mathbb{N}$, and $F_{5}=\{f_{i}:i\in \mathbb{N}\}$,
where $f_{i}(a_{1},a_{2},\ldots )=a_{i}$ for any $a_{1},a_{2},\ldots \in
A_{5}$ and $i\in \mathbb{N}$.

\begin{lemma}
\label{L6.5}The information system $U_{5}$ belongs to the class $W_{5}$, $%
h_{U_{5}}^{d}(n)=n$, $h_{U_{5}}^{a}(n)=n$, $L_{U_{5}}^{d}(n)=2^{n+1}$, and $%
L_{U_{5}}^{a}(n)=2^{n+1}$ for any $n\in \mathbb{N}$. This information system
does not satisfy the condition of coverage and the condition of restricted
coverage, and has infinite I-dimension.
\end{lemma}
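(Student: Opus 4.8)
The plan is to follow the same scheme as in Lemmas \ref{L6.1}--\ref{L6.4}: determine the structural parameters of $U_{5}$ (its I-dimension and which coverage conditions it satisfies) and then read off the four complexity functions from Propositions \ref{P1}, \ref{P2}, \ref{P5}, and \ref{P6}.

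First I would show that $U_{5}$ has infinite I-dimension. For any pairwise distinct indices $i_{1},\ldots ,i_{m}$ and any $(\delta _{1},\ldots ,\delta _{m})\in \{0,1\}^{m}$, the sequence whose $i_{j}$-th entry equals $\delta _{j}$ for $j=1,\ldots ,m$ and whose remaining entries equal $0$ lies in $A_{5}$ and satisfies $f_{i_{1}}(x)=\delta _{1},\ldots ,f_{i_{m}}(x)=\delta _{m}$. Hence every finite set of pairwise distinct attributes is independent, so $F_{5}$ contains independent subsets of arbitrary cardinality and $I(U_{5})=\infty$. By Proposition \ref{P6} this gives $N_{U_{5}}(n)=2^{n}$, and then Proposition \ref{P5} yields $L_{U_{5}}^{d}(n)=L_{U_{5}}^{a}(n)=2N_{U_{5}}(n)=2^{n+1}$ for every $n\in \mathbb{N}$. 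Since the I-dimension is infinite, the case \textrm{LIN} of Proposition \ref{P1} immediately gives $h_{U_{5}}^{d}(n)=n$.

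The main step is to compute $h_{U_{5}}^{a}$, for which, by Proposition \ref{P2}, it suffices to verify that $U_{5}$ does \emph{not} satisfy the condition of coverage. The key observation is that every $(m,U_{5})$-set is exactly a subcube obtained by fixing at most $m$ coordinates to prescribed values and leaving the rest free (a nonempty solution set of $m$ equations never constrains more than $m$ distinct coordinates). Fix $m\in \mathbb{N}$ and consider the $(m+1,U_{5})$-set $B=\{x\in A_{5}:x_{1}=\cdots =x_{m+1}=0\}$. I claim no $(m,U_{5})$-set is contained in $B$: such a set fixes at most $m$ coordinates, so some index $c\in \{1,\ldots ,m+1\}$ stays free, and switching that coordinate to $1$ produces a sequence lying in the $(m,U_{5})$-set but not in $B$. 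Consequently $B$ is not a union of $(m,U_{5})$-sets, and since $m$ was arbitrary the condition of coverage fails. Proposition \ref{P2} (case \textrm{LIN}) then gives $h_{U_{5}}^{a}(n)=n$; and because coverage fails, the stronger condition of restricted coverage fails as well.

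Finally, collecting the behaviors \textrm{LIN}, \textrm{LIN}, \textrm{EXP}, \textrm{EXP} of the four functions and comparing with Table \ref{tab1} places $U_{5}$ in the class $W_{5}$. The only genuinely non-routine step is the failure of coverage; everything else is a direct substitution into the cited propositions. The insight that makes that step easy is that $(m,U_{5})$-sets are precisely subcubes of codimension at most $m$, so none of them can fit inside a subcube of codimension $m+1$ without escaping through one of its free coordinates.
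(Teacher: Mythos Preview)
Your proof is correct. The one place where your argument genuinely differs from the paper is the treatment of $h_{U_{5}}^{a}$ and coverage: the paper proves $h_{U_{5}}^{a}(n)=n$ directly (for the problem $z=(\nu,f_{1},\ldots,f_{n})$ with injective $\nu$, any realizable complete path in a nondeterministic tree solving $z$ must query every $f_{1},\ldots,f_{n}$, since a missing coordinate can be flipped without leaving $A_{5}(\xi)$), and only then invokes Proposition~\ref{P2} to conclude that coverage fails. You reverse this: you give a clean combinatorial proof that coverage fails (no subcube of codimension at most $m$ fits inside one of codimension $m+1$), and then read off $h_{U_{5}}^{a}(n)=n$ from the \textrm{LIN} case of Proposition~\ref{P2}. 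Both routes use the same underlying ``free coordinate'' observation; your version is arguably more self-contained because it establishes the structural property (non-coverage) directly rather than via the complexity function, while the paper's version has the minor advantage of making $h_{U_{5}}^{a}(n)=n$ explicit without appealing to the dichotomy in Proposition~\ref{P2}. Everything else---infinite I-dimension, $N_{U_{5}}(n)=2^{n}$, the values of $L_{U_{5}}^{d}$, $L_{U_{5}}^{a}$, $h_{U_{5}}^{d}$, and the placement in $W_{5}$---is handled identically in both arguments.
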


\begin{proof}
Let $n\in \mathbb{N}$. One can show that the equation system $%
\{f_{1}(x)=\delta _{1},\ldots ,f_{n}(x)=\delta _{n}\}$ has a solution from
the set $A_{5}$ for any $(\delta _{1},\ldots ,\delta _{n})\in \{0,1\}^{n}$.
Therefore the information system $U_{5}$ has infinite I-dimension. By
 Proposition \ref{P6}, $N_{U_{5}}(n)=2^{n}$ for any $n\in \mathbb{N}$. Using
 Proposition \ref{P5} we obtain $L_{U_{5}}^{d}(n)=L_{U_{5}}^{a}(n)=2^{n+1}$
for any $n\in \mathbb{N}$. Let $n\in \mathbb{N}$. By  Proposition \ref{P1}, $%
h_{U_{5}}^{d}(n)=n$. Consider a problem $z=(\nu ,f_{1},\ldots ,f_{n})$ over $%
U_{4}$ such that, for each $\bar{\delta}_{1},\bar{\delta}_{2}\in \{0,1\}^{n}$
with $\bar{\delta}_{1}\neq \bar{\delta}_{2}$, $\nu (\bar{\delta}_{1})\neq
\nu (\bar{\delta}_{2})$. Let $\Gamma $ be an arbitrary decision tree over $%
U_{5}$ that solves the problem $z$ nondeterministically and $\xi $ be an
arbitrary complete path of $\Gamma $. We denote by $G$ the set of attributes
attached to working nodes of $\xi $. One can show that, if $\{f_{1},\ldots
,f_{n}\}$ is not a subset of the set $G$, then the function $z(x)$ is not
constant on the set $A_{5}(\xi )$ but this is impossible. Therefore $%
h(\Gamma )\geq n$ and $h_{U_{5}}^{a}(n)\geq n$. It is clear that $%
h_{U_{5}}^{a}(n)\leq n$. Thus, $h_{U_{5}}^{a}(n)=n$.

Since the functions $%
h_{U_{5}}^{d}$ and $h_{U_{5}}^{a}$ have the type of behavior \textrm{LIN}
and the functions $L_{U_{5}}^{d}$ and $L_{U_{5}}^{a}$ have the type of
behavior \textrm{EXP}, the information system $U_{5}$ belongs to the class $%
W_{5}$ -- see  Table \ref{tab1}. By Proposition  \ref{P2} this information system does
not satisfy the condition of coverage and therefore does not satisfy the
condition of restricted coverage.
\end{proof}

\begin{proof}
[Proof of Theorem \ref{T1}] The statements of the theorem follow from Lemmas \ref%
{L5}-\ref{L6.5}.
\end{proof}

\section{Proofs of Theorems \ref{T2}-\ref{T6}} \label{S5}

First, we prove seven auxiliary statements.

\begin{lemma}
\label{L7}Let $U$ be an infinite binary information system such that $%
h_{U}^{d}(n)=n$ for any $n\in \mathbb{N}$. Then the information system $U$
is $d$-reachable.
\end{lemma}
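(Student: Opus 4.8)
The plan is to exhibit, for every problem $z$ over $U$, a single deterministic decision tree that simultaneously meets the depth bound $h_U^d(\dim z)$ and the node bound $L_U^d(\dim z)$; by definition this shows that $(h_U^d,L_U^d)$ is a boundary (hence optimal boundary) $d$-pair, i.e. that $U$ is $d$-reachable. The tree I would use is exactly the one already analysed in the proof of Proposition \ref{P5}: given a problem $z=(\nu ,f_1,\ldots ,f_n)$ with $n=\dim z$, take a decision tree $\Gamma $ over $U$ that solves $z$ deterministically, uses only the attributes $f_1,\ldots ,f_n$, and has the minimum number of nodes among all such trees.

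The node bound comes essentially for free from the analysis carried out for Proposition \ref{P5}, and does not use the hypothesis on $h_U^d$. As shown there (by the argument of Lemma \ref{L1}), $\Gamma $ lies in $G_d^2(U)$ and every node of $\Gamma $ lies on some realizable complete path; moreover distinct realizable complete paths of $\Gamma $ realize distinct tuples of values of $f_1,\ldots ,f_n$, so $L_t(\Gamma )\le N_U(f_1,\ldots ,f_n)\le N_U(n)$. By Lemma \ref{L3}(a), $L_w(\Gamma )=L_t(\Gamma )-1$, whence $L(\Gamma )=2L_t(\Gamma )\le 2N_U(n)=L_U^d(n)$ by Proposition \ref{P5}.

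The remaining, and slightly more delicate, step is the depth bound, and this is where the hypothesis $h_U^d(n)=n$ enters. I would argue that no attribute is tested twice along any path of $\Gamma $: if some attribute $f$ were tested at a node $v$ and again at a descendant $v'$, then since $f$ already has a fixed value on the realizable part of the path reaching $v'$, one of the two edges leaving $v'$ would lead into a subtree none of whose complete paths is realizable, contradicting the property that every node of $\Gamma $ lies on a realizable complete path. Since $\Gamma $ uses only the $n$ attributes $f_1,\ldots ,f_n$ and repeats none of them along a path, every complete path contains at most $n$ working nodes, so $h(\Gamma )\le n=h_U^d(n)$ by hypothesis. Combining the two bounds, $\Gamma $ witnesses the boundary $d$-pair condition for $z$, and since $z$ is arbitrary, $U$ is $d$-reachable. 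The main obstacle is precisely this no-repetition argument: it is what ties the depth of the node-optimal tree to the value $n$, and it relies essentially on the realizability property inherited from the minimality of $\Gamma $.
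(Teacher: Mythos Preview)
Your proof is correct, but it takes a slightly different route from the paper's. The paper builds the witnessing tree explicitly: it starts from the full depth-$n$ deterministic tree that queries $f_1,\ldots ,f_n$ in order along every path, then prunes all nodes not lying on a realizable complete path and contracts every working node left with a single outgoing edge. For this pruned tree $\Gamma'$ the depth bound $h(\Gamma')\le n$ is immediate from the construction, and the node bound follows exactly as you argue, via $\Gamma'\in G_d^2(U)$, $L_t(\Gamma')\le N_U(n)$, and Lemma~\ref{L3}. Your approach instead takes the node-minimal tree among those using only $f_1,\ldots ,f_n$ (the tree already analysed in Proposition~\ref{P5}), gets the node bound for free from that analysis, and then recovers the depth bound through the additional no-repetition observation. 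That observation is sound: a repeated attribute would force one child of the repeating node off every realizable path, contradicting the realizability property inherited from minimality. The trade-off is that the paper's construction makes the depth bound trivial at the cost of an explicit pruning step, while your argument reuses Proposition~\ref{P5} more directly at the cost of the extra structural lemma about repetitions.
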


\begin{proof}
Let $z=(\nu ,f_{1},\ldots ,f_{n})$ be a problem over $U$ and $\Gamma $ be a
decision tree that solves the problem $z$ deterministically and satisfies
the following conditions: the number of working nodes in each complete path
of $\Gamma $ is equal to $n$ and these nodes in the order from the root to a
terminal node are labeled with attributes $f_{1},\ldots ,f_{n}$. Remove from $%
\Gamma $ all nodes and edges that do not belong to realizable complete
paths. Let $w$ be a working node in the obtained tree that has only one
leaving edge $d$ entering a node $v$. We remove the node $w$ and edge $d$
and connect the edge entering $w$ to the node $v$. We do the same with all
working nodes with only one leaving edge. Denote by $\Gamma ^{\prime }$ the
obtained decision tree. It is clear that $\Gamma ^{\prime }$ solves the
problem $z$ deterministically, $\Gamma ^{\prime }\in G_{d}^{2}(U)$, and $%
L_{t}(\Gamma ^{\prime })\leq N_{U}(f_{1},\ldots ,f_{n})\leq N_{U}(n)$. By
 Lemma \ref{L3}, $L_{w}(\Gamma ^{\prime })=L_{t}(\Gamma ^{\prime })-1$.
Therefore $L(\Gamma ^{\prime })\leq 2N_{U}(n)$. Using  Proposition \ref{P5}
we obtain $L(\Gamma ^{\prime })\leq L_{U}^{d}(n)$. It is clear that $h(\Gamma
^{\prime })\le n=h_{U}^{d}(n)$. Therefore $U$ is $d$-reachable.
\end{proof}

\begin{lemma}
\label{L8}Let $U$ be an infinite binary information system such that $%
h_{U}^{a}(n)=n$ for any $n\in \mathbb{N}$. Then the information system $U$
is $a$-reachable.
\end{lemma}

\begin{proof}
Let $z=(\nu ,f_{1},\ldots ,f_{n})$ be a problem over $U$. Construct for this
problem the decision tree $\Gamma ^{\prime }$ as in the proof of  Lemma \ref%
{L7}. It is clear that $\Gamma ^{\prime }$ solves the problem $z$
nondeterministically. We know that $L(\Gamma ^{\prime })\leq 2N_{U}(n)$.
Using  Proposition \ref{P5} we obtain $L(\Gamma ^{\prime })\leq L_{U}^{a}(n)$
It is clear that $h(\Gamma ^{\prime })\le n=h_{U}^{a}(n)$. Therefore $U$ is $a$%
-reachable.
\end{proof}

\begin{lemma}
\label{L9}Let $U$ be an infinite binary information system which satisfies
the condition of coverage. Then the information system $U$ is not $a$-reachable.
\end{lemma}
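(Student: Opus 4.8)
The plan is to prove the contrapositive-free statement directly: I would exhibit a single problem $z$ whose dimension $n$ witnesses that the pair $(h_{U}^{a},L_{U}^{a})$ fails to be a boundary $a$-pair, i.e. no decision tree solving $z$ nondeterministically can meet \emph{both} the depth bound $h_{U}^{a}(n)$ and the node bound $L_{U}^{a}(n)$ at once. Producing such a $z$ establishes that $U$ is not $a$-reachable.

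First I would extract the two quantitative consequences of the hypothesis. Since $U$ satisfies the condition of coverage, Proposition~\ref{P2} (case \textrm{CON}) supplies a constant $c$ with $h_{U}^{a}(n)\le c$ for every $n$. By Proposition~\ref{P5}, $L_{U}^{a}(n)=2N_{U}(n)$, and because $N_{U}(n)\ge n+1$ for all $n$ (Proposition~\ref{P6} in either case), $N_{U}(n)\to\infty$. I would therefore fix an $n$ so large that $\log_{2}(n+1)>c$. As the witness problem I would take exactly the separating problem used in the lower-bound half of Proposition~\ref{P5}: choose $f_{1},\ldots,f_{n}\in F$ with $N_{U}(f_{1},\ldots,f_{n})=N_{U}(n)$ and a mapping $\nu$ that is injective on $\{0,1\}^{n}$, and set $z=(\nu,f_{1},\ldots,f_{n})$. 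The argument of Proposition~\ref{P5} gives $L_{U}^{a}(z)=2N_{U}(n)=L_{U}^{a}(n)$.

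The crux is then to show that every decision tree $\Gamma$ solving $z$ nondeterministically with $L(\Gamma)\le L_{U}^{a}(n)$ is forced to be a deterministic binary tree with exactly $N_{U}(n)$ terminal nodes. Since any such $\Gamma$ satisfies $L(\Gamma)\ge L_{U}^{a}(z)=2N_{U}(n)$, the bound $L(\Gamma)\le 2N_{U}(n)$ forces $L(\Gamma)=2N_{U}(n)$, so $\Gamma$ attains the minimum node count for $z$ and Lemma~\ref{L2} places it in $G_{a}^{f}(U)$. Writing $L(\Gamma)=1+L_{t}(\Gamma)+L_{w}(\Gamma)$, using $L_{t}(\Gamma)\ge N_{U}(n)$ (the $N_{U}(n)$ realizable tuples carry distinct decisions and hence require distinct terminal nodes) together with $L_{w}(\Gamma)\ge L_{t}(\Gamma)-1$ from Lemmas~\ref{L3} and~\ref{L4}, I would squeeze out the equalities $L_{t}(\Gamma)=N_{U}(n)$ and $L_{w}(\Gamma)=L_{t}(\Gamma)-1$; by Lemma~\ref{L4} the latter equality forces $\Gamma\in G_{d}^{2}(U)$. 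Finally the elementary bound $L_{t}(\Gamma)\le 2^{h(\Gamma)}$ for such a tree gives $h(\Gamma)\ge\log_{2}N_{U}(n)\ge\log_{2}(n+1)>c\ge h_{U}^{a}(n)$. Thus no tree solving $z$ meets both prescribed bounds, so $(h_{U}^{a},L_{U}^{a})$ is not a boundary $a$-pair and $U$ is not $a$-reachable.

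The step I expect to be the main obstacle is the structural one in the last paragraph: arguing that minimizing the number of nodes already forces full determinism of $\Gamma$ (membership in $G_{d}^{2}(U)$). This is where the interplay of Lemma~\ref{L2} with the counting inequalities of Lemmas~\ref{L3} and~\ref{L4} must be assembled carefully, since a priori a nondeterministic tree may have branching (multiple equally labeled edges) that could, in principle, let it be both shallow and node-cheap; the inequalities are precisely what rule this out once the node count is pinned to its minimum.
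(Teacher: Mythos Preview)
Your proposal is correct and follows essentially the same route as the paper's proof: both pick $n$ large enough (using Proposition~\ref{P2} for the depth bound $c$ and Proposition~\ref{P6} for the growth of $N_U$), take the separating problem $z$ with injective $\nu$ and $N_U(f_1,\ldots,f_n)=N_U(n)$, and then use the structural Lemmas~\ref{L2}--\ref{L4} together with $L_U^a(n)=2N_U(n)$ from Proposition~\ref{P5} to derive a contradiction between the depth and node constraints. The only cosmetic difference is the direction in which the contradiction is run: the paper fixes $h(\Gamma)\le c$, argues $\Gamma\in G_a^f(U)\setminus G_d^2(U)$, and concludes $L(\Gamma)>L_U^a(n)$; you fix $L(\Gamma)\le L_U^a(n)$, squeeze out $\Gamma\in G_d^2(U)$, and conclude $h(\Gamma)>c$. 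Your version has the small advantage that Lemma~\ref{L2} applies verbatim (your $\Gamma$ literally has $L(\Gamma)=L_U^a(z)$), whereas the paper must re-run the Lemma~\ref{L2} argument for a tree minimizing nodes subject to the extra depth constraint.
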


\begin{proof}
By Proposition \ref{P2}, the function $h_{U}^{a}(n)$ is bounded from above
by a positive constant $c$. By  Proposition \ref{P6}, the function $N_{U}(n)$
is not bounded from above by a constant. Choose $n\in \mathbb{N}$ such that $%
N_{U}(n)>2^{2c}$. Let $z=(\nu ,f_{1},\ldots ,f_{n})$ be a problem over $U$ such that $\nu (\bar{\delta}_{1})\neq \nu (\bar{\delta}_{2})$ for
any $\bar{\delta}_{1},\bar{\delta}_{2}\in \{0,1\}^{n}$, $\bar{\delta}%
_{1}\neq \bar{\delta}_{2}$, and $N_{U}(f_{1},\ldots ,f_{n})=N_{U}(n)$. Let $%
\Gamma $ be a decision tree over $U$ which solves the problem $z$
nondeterministically, for which $h(\Gamma )\leq h_{U}^{a}(n)\leq c$, and
which has the minimum number of nodes among such trees. In the same way as it
was done in the proof of  Proposition \ref{L2}, we can prove that $\Gamma \in
G_{a}^{f}(U)$. It is clear that $L_{t}(\Gamma )\geq N_{U}(f_{1},\ldots
,f_{n})=N_{U}(n)$. Let us assume that $\Gamma \in G_{d}^{2}(U)$. Then it is
easy to show that $h(\Gamma )\geq \log _{2}L_{t}(\Gamma )\geq \log
_{2}N_{U}(n)>2c$ which is impossible. Therefore $\Gamma \in
G_{a}^{f}(U)\setminus G_{d}^{2}(U)$. By  Lemma \ref{L4}, $L_{w}(\Gamma
)>L_{t}(\Gamma )-1\geq N_{U}(n)-1$. Using  Proposition \ref{P5} we obtain $%
L(\Gamma )>2N_{U}(n)=L_{U}^{a}(n)$. Therefore $U$ is not $a$-reachable.
\end{proof}

\begin{lemma}
\label{L10}Let $U$ be an infinite binary information system which does not
satisfy the condition of restricted coverage. Then $$(n,L_{U}^{a}(n))$$ is the
optimal boundary $a$-pair of the information system $U$.
\end{lemma}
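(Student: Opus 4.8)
The plan is to deduce the lemma from three claims that, together with the definition of optimality, are equivalent to it: (i) the pair $(n,L_{U}^{a}(n))$ is itself a boundary $a$-pair of $U$; (ii) every boundary $a$-pair $(\varphi ',\psi ')$ of $U$ satisfies $\psi '(n)\geq L_{U}^{a}(n)$ for all $n\in\mathbb{N}$; and (iii) every boundary $a$-pair $(\varphi ',\psi ')$ of $U$ satisfies $\varphi '(n)\geq n$ for all $n\in\mathbb{N}$. Only the last of these uses the hypothesis.

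For (i) I would recycle the construction from the proofs of Proposition \ref{P5} and Lemma \ref{L8}. Given a problem $z=(\nu ,f_{1},\ldots ,f_{n})$, build the deterministic tree all of whose complete paths query $f_{1},\ldots ,f_{n}$ in this order, then delete the non-realizable paths and contract every working node left with a single leaving edge. The resulting tree $\Gamma '\in G_{d}^{2}(U)$ solves $z$ deterministically, hence nondeterministically, with $h(\Gamma ')\leq n$; since distinct realizable paths realize distinct tuples of attribute values, $L_{t}(\Gamma ')\leq N_{U}(f_{1},\ldots ,f_{n})\leq N_{U}(n)$, whence by Lemma \ref{L3} and Proposition \ref{P5} we get $L(\Gamma ')=2L_{t}(\Gamma ')\leq 2N_{U}(n)=L_{U}^{a}(n)$. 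This step needs no hypothesis on $U$. Claim (ii) is immediate from the definition of $L_{U}^{a}(n)$: the proof of Proposition \ref{P5} exhibits a problem $z^{*}$ of dimension exactly $n$ with $L_{U}^{a}(z^{*})=2N_{U}(n)=L_{U}^{a}(n)$, and any tree witnessing the boundary $a$-pair on $z^{*}$ has at most $\psi '(n)$ nodes, so $\psi '(n)\geq L_{U}^{a}(z^{*})=L_{U}^{a}(n)$.

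The heart of the proof is claim (iii); the plan is to mirror the $\mathrm{LIN}$-part of the proof of Proposition \ref{P2}, but now to track node counts so as to upgrade ``coverage'' to ``restricted coverage''. Suppose, for contradiction, that $\varphi '(n_{0})\leq n_{0}-1$ for some $n_{0}$, and put $m=n_{0}-1$ and $t=\psi '(n_{0})$; here $t$ is a single fixed natural number serving all problems of dimension $n_{0}$. For $n_{0}\geq 2$ (so $m\geq 1$), take an arbitrary $(m+1,U)$-set $B$ given by $\{f_{1}(x)=\delta _{1},\ldots ,f_{m+1}(x)=\delta _{m+1}\}$ and form $z_{B}=(\nu ,f_{1},\ldots ,f_{m+1})$ with $\nu (\bar{\sigma})=1$ exactly when $\bar{\sigma}=(\delta _{1},\ldots ,\delta _{m+1})$ and $\nu (\bar{\sigma})=2$ otherwise. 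The boundary $a$-pair provides a tree $\Gamma _{B}$ solving $z_{B}$ nondeterministically with $h(\Gamma _{B})\leq m$ and $L(\Gamma _{B})\leq t$. Each realizable complete path $\xi$ of $\Gamma _{B}$ with terminal label $1$ has at most $m$ working nodes, so $A(\xi )$ is an $(m,U)$-set; and since $\Gamma _{B}$ solves $z_{B}$, a point lies in $B$ iff it lies in $A(\xi )$ for some such path. Hence $B$ is a union of these sets, of which there are at most $L_{t}(\Gamma _{B})\leq t$. As $B$ was arbitrary, $U$ satisfies the condition of restricted coverage with parameters $m$ and $t$, contradicting the hypothesis; thus $\varphi '(n)\geq n$.

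The main obstacle is the bookkeeping in (iii): one must check both inclusions showing that the label-$1$ realizable paths cover exactly $B$ (each follows from $\Gamma_B$ solving $z_B$ nondeterministically), and that the single bound $t=\psi '(n_{0})$ applies uniformly to every $(m+1,U)$-set (it does, since all the $z_{B}$ share the dimension $n_{0}$). The degenerate possibilities must also be ruled out: the case $n_{0}=1$ (i.e. $m=0$) is handled separately, since a system failing restricted coverage must contain a non-constant attribute—otherwise every $(m+1,U)$-set equals $A$ and restricted coverage holds trivially—and the associated non-constant dimension-$1$ problem forces $\varphi '(1)\geq 1$; and a label-$1$ path with no working node would force $B=A$, but then $A=\{f_{1}(x)=\delta _{1},\ldots ,f_{m}(x)=\delta _{m}\}$ is already an $(m,U)$-set, so the conclusion survives.
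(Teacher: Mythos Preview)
Your proposal is correct and follows essentially the same route as the paper's proof: establish that $(n,L_U^a(n))$ is a boundary $a$-pair by the construction of Lemma~\ref{L8}, note that $\psi'(n)\geq L_U^a(n)$ is automatic, and show $\varphi'(n)\geq n$ by assuming $\varphi'(m+1)\leq m$, building the two-valued problem $z_B$ from an arbitrary $(m+1,U)$-set $B$, and reading off a cover of $B$ by at most $\psi'(m+1)$ many $(m,U)$-sets from the label-$1$ realizable paths of the witnessing tree---thereby forcing restricted coverage. You are in fact more careful than the paper about the degenerate cases $n_0=1$ and label-$1$ paths with no working nodes; the paper's phrasing ``there exists $m\in\mathbb{N}$ with $q(m+1)\leq m$'' tacitly skips $n=1$, whereas you handle it explicitly.
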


\begin{proof}
The proof of the fact that $(n,L_{U}^{a}(n))$ is a boundary $a$-pair of the
system $U$ is similar to the proof of  Lemma \ref{L8}.

We now prove that this is the optimal boundary $a$-pair of the system $U$.
Let $(q,r)$ be a boundary $a$-pair of the system $U$. It is clear that $%
r(n)\geq L_{U}^{a}(n)$ for any $n\in \mathbb{N}$. Let us show that $q(n)\geq
n$ for any $n\in \mathbb{N}$. Assume the contrary: there exists $m\in
\mathbb{N}$ such that $q(m+1)\leq m$. Let us consider an arbitrary $(m+1,U)$%
-set $B$ given by $(m+1,U)$-system of equations
\[
\{f_{1}(x)=\delta _{1},\ldots ,f_{m+1}(x)=\delta _{m+1}\}.
\]%
Consider the problem $z=(\nu ,f_{1},\ldots ,f_{m+1})$ over $%
U$ such that, for any $\bar{\delta}\in \{0,1\}^{m+1}$, $\nu (\bar{\delta})\in
\{1,2\}$ and $\nu (\bar{\delta})=1$ if and only if $\bar{\delta}=(\delta
_{1},\ldots ,\delta _{m+1})$. Let $\Gamma $ be a decision tree over $U$ which
solves the problem $z$ nondeterministically and for which $h(\Gamma )\leq
q(m+1)\leq m$ and $L(\Gamma )\leq r(m+1)$. Let $\xi _{1},\ldots ,\xi _{t}$
be all realizable complete paths of $\Gamma $ in which terminal nodes are
labeled with the number $1$. It is clear that $A(\xi _{1}),\ldots ,A(\xi
_{t})$ are $(m,U)$-sets, $B=A(\xi _{1})\cup \ldots \cup A(\xi _{t})$, and $%
t\leq r(m+1)$. Since $B$ is an arbitrary $(m+1)$-set, we obtain that $U$
satisfies the condition of restricted coverage but this is impossible.
Therefore $(n,L_{U}^{a}(n))$ is the optimal boundary $a$-pair of the system $U$.
\end{proof}

Let $\mathbb{R}_{+}$ be the set of nonnegative real numbers. Define an infinite
binary information system $U_{6}=(A_{6},F_{6})$ as follows: $A_{6}=\mathbb{R}%
_{+}$ and $F_{6}=\{p_{i}:i\in \mathbb{N}\}\cup \{q_{i}:i\in \mathbb{N}\}$,
where, for any $a\in A_{6}$, $p_{i}(a)=1$ if and only if $a=i$, and $%
q_{i}(a)=1$ if and only if $a\geq i+\frac{1}{2}$.

\begin{lemma}
\label{L11}The information system $U_{6}$ belongs to the class $W_{1}$. This
information system is not $d$-reachable.
\end{lemma}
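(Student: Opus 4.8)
The plan is to prove the two assertions separately: membership in $W_{1}$, and failure of $d$-reachability. For the first, I would determine the two structural parameters that fix the type of $U_{6}$ through Propositions \ref{P1}--\ref{P4} (equivalently, through the rows of Table \ref{tab2}): its I-dimension and whether it satisfies restricted coverage. To compute $I(U_{6})$ I would check that no two-element subset of $F_{6}$ is independent. For $\{p_{i},p_{j}\}$ with $i\neq j$ the value $(1,1)$ is unsolvable; for $\{q_{i},q_{j}\}$ with $i<j$ the thresholds are nested, so $(0,1)$ is unsolvable; for $\{p_{i},q_{j}\}$ the value $p_{i}=1$ forces $x=i$, which already fixes $q_{j}(i)$, so one of $(1,0),(1,1)$ is unsolvable. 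Since every singleton is independent, $I(U_{6})=1$. For restricted coverage I would describe a nonempty $(k,U_{6})$-set: it is either a singleton $\{i\}$ (when a condition $p_{i}=1$ occurs) or of the form $I\setminus P$, where $I=\{x\in\mathbb{R}_{+}:L\le x<R\}$ is cut out by the extreme thresholds among the $q$-conditions (so describable by at most two of them) and $P$ is the finite set of integers removed by the $p_{i}=0$ conditions, with $|P|\le k$. Given such a set that is an $(m+1,U_{6})$-set, I would pick a threshold $T$ splitting $P$ into two halves and write $I\setminus P=(I_{1}\setminus P_{1})\cup(I_{2}\setminus P_{2})$ with each $I_{j}$ again an interval (at most two thresholds) and $|P_{j}|\le\lceil |P|/2\rceil$; choosing $m$ large enough that $2+\lceil (m+1)/2\rceil\le m$, each piece is an $(m,U_{6})$-set, so $U_{6}$ satisfies restricted coverage with $t=2$. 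Finite I-dimension together with restricted coverage forces the type $(\mathrm{LOG},\mathrm{CON},\mathrm{POL},\mathrm{POL})$, i.e. $U_{6}\in W_{1}$.

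Second, for the failure of $d$-reachability I would exhibit one problem defeating the pair $(h_{U_{6}}^{d},L_{U_{6}}^{d})$. For $n\in\mathbb{N}$ let $z_{n}=(\nu,p_{1},\ldots,p_{n})$ with $\nu$ injective on consistent tuples. Since at most one of $p_{1},\ldots,p_{n}$ can equal $1$, the consistent patterns are the all-zero tuple and the $n$ unit tuples, so $N_{U_{6}}(p_{1},\ldots,p_{n})=n+1$ and, by Proposition \ref{P5}, $L_{U_{6}}^{d}(z_{n})=2(n+1)$. The $n+1$ classes are $C_{i}=\{i\}$ for $i=1,\ldots,n$ and $C_{0}=\mathbb{R}_{+}\setminus\{1,\ldots,n\}$.

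The key step is to show that every tree solving $z_{n}$ with at most $2(n+1)$ nodes has depth at least $n$. Such a tree attains the minimum $L_{U_{6}}^{d}(z_{n})=2(n+1)$, so by Lemmas \ref{L1} and \ref{L3} it lies in $G_{d}^{2}(U_{6})$ and has exactly $n+1$ terminal nodes; as its realizable complete paths partition $\mathbb{R}_{+}$ into the $n+1$ classes, exactly one path $\xi_{0}$ satisfies $A(\xi_{0})=C_{0}$. Now $C_{0}$ contains $0$ and is unbounded, so no threshold condition $q_{j}=\delta$ can occur on $\xi_{0}$ (a value $q_{j}=1$ would delete $0$, a value $q_{j}=0$ would bound $A(\xi_{0})$ above); hence $\xi_{0}$ uses only conditions of the form $p_{i}=0$, and to delete each of the points $1,\ldots,n$ the condition $p_{i}=0$ must appear for every $i\in\{1,\ldots,n\}$. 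Thus $\xi_{0}$ has at least $n$ working nodes and the tree has depth at least $n$. Since $U_{6}\in W_{1}$, Proposition \ref{P1} gives $h_{U_{6}}^{d}(n)\le c(\log_{2}n)^{1+\varepsilon}+1<n$ for all large $n$; for such $n$ no tree solves $z_{n}$ with depth at most $h_{U_{6}}^{d}(n)$ and at most $L_{U_{6}}^{d}(n)=2(n+1)$ nodes simultaneously. Hence $(h_{U_{6}}^{d},L_{U_{6}}^{d})$ is not a boundary $d$-pair and $U_{6}$ is not $d$-reachable.

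The hard part is this last depth-forcing argument: showing rigorously that the single terminal node carrying the generic class $C_{0}$ cannot exploit the threshold attributes and therefore must sit at the bottom of a path of $n$ point-queries, which is what creates the time--space trade-off. The case analysis behind restricted coverage (reducing an arbitrary $(m+1,U_{6})$-set to the interval-minus-finite-set form and halving $P$) is the other careful, though routine, ingredient.
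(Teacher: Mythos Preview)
Your Part~2 tracks the paper's argument: same witness problem $z_n=(\nu,p_1,\dots,p_n)$, same observation that the unique path realizing $C_0=\mathbb{R}_+\setminus\{1,\dots,n\}$ cannot carry any $q_j$-condition (since $0\in C_0$ rules out $q_j=1$ and unboundedness rules out $q_j=0$) and must therefore contain all $n$ conditions $p_i=0$. Your use of Lemma~\ref{L1} to pin the terminal count at exactly $n+1$ is slightly tidier than the paper's two-case split, but the mechanism is the same.

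There is one genuine gap. Your conclusion needs $L_{U_6}^{d}(n)=2(n+1)$, i.e.\ $N_{U_6}(n)=n+1$, but you only computed $N_{U_6}(p_1,\dots,p_n)=n+1$; that gives $L_{U_6}^{d}(n)\ge 2(n+1)$, not equality. If $L_{U_6}^{d}(n)$ were strictly larger, a tree with $2(n+1)<L(\Gamma)\le L_{U_6}^{d}(n)$ would not be covered by your depth-forcing argument, and $d$-reachability would not be contradicted. (Relatedly, Proposition~\ref{P5} gives $L_U^{d}(n)=2N_U(n)$, not $L_U^{d}(z)=2N_U(f_1,\dots,f_n)$; your citation there does not say what you need.) The missing fact $N_{U_6}(n)\le n+1$ is easy---adding any single attribute from $F_6$, point or threshold, can refine at most one existing cell, so an induction on $n$ does it---but it has to be said, and it must hold for \emph{arbitrary} mixes of $p$'s and $q$'s.

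For Part~1 you take a genuinely different route. The paper proves $U_6\in W_1$ by constructing, for every problem, an explicit binary-search tree over the thresholds of depth at most $\lceil\log_2(n+1)\rceil+1$; you instead verify the structural hypotheses of Proposition~\ref{P1} (I-dimension $1$ and restricted coverage via the halve-$P$ trick). Both are valid. The paper's direct construction has the side benefit of delivering the exact values $N_{U_6}(n)=n+1$ and $h_{U_6}^{d}(n)\le\lceil\log_2(n+1)\rceil+1$ that feed straight into Part~2; your structural route yields only the asymptotic $\mathrm{LOG}$ behaviour, which is precisely why you are left without the equality $L_{U_6}^{d}(n)=2(n+1)$ when you need it.
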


\begin{proof}
It is easy to show that $N_{U_{6}}(n)=n+1$ for any $n\in \mathbb{N}$. Using
 Proposition \ref{P5} we obtain $L_{U_{6}}^{d}(n)=2(n+1)$ for any $n\in
\mathbb{N}$. We now show that $h_{U_{6}}^{d}(n)\leq \lceil \log
_{2}(n+1)\rceil +1$ for any $n\in \mathbb{N}$. Let $z=(\nu ,f_{1},\ldots
,f_{n})$ be an arbitrary problem over $U_{6}$. We  describe a decision
tree $\Gamma $ over $U_{6}$ that solves the problem  $z$ deterministically. This tree
will use attributes from a set $G$ containing all attributes $f_{1},\ldots
,f_{n}$ and, probably, some additional attributes from the set $\{q_{i}:i\in
\mathbb{N}\}$. Initially, $G=\{f_{1},\ldots ,f_{n}\}$. Let $p_{i_{1}},\ldots
,p_{i_{m}}$ be all attributes from the set $\{p_{i}:i\in \mathbb{N}\}\cap
\{f_{1},\ldots ,f_{n}\}$ ordered such that $i_{1}<\ldots <i_{m}$. For $%
t=1,\ldots ,m-1$, if the set $G$ does not contain any attribute $q_{j}$ such
that $i_t\leq j<i_{t+1}$, then we add to $G$ the attribute $q_{i_t}$. As a result,
we obtain a set of attributes $G$ that contains at most $n$ attributes from
the set $\{q_{i}:i\in \mathbb{N}\}$. Let there are attributes $%
q_{j_{1}},\ldots ,q_{j_{k}}$, $k \le n$. It is easy to construct a decision tree $%
\Gamma ^{\prime }$ that finds values of these attributes on a given element $%
a\in A_{6}$ and has depth at most $\lceil \log _{2}(n+1)\rceil $. If we know
the values of all attributes $q_{j_{1}},\ldots ,q_{j_{k}}$, then we know
values of all attributes $p_{i_{1}},\ldots ,p_{i_{m}}$ on $a$ with the
exception of at most one attribute. If we compute the value of this
attribute we will know the values of all attributes $f_{1},\ldots ,f_{n}$ on
$a$ and the value $z(a)$. So we can transform the decision tree $\Gamma
^{\prime }$ into a deterministic decision tree $\Gamma $ over $U_{6}$ which solves the
problem $z$ and which depth is at most $\lceil \log _{2}(n+1)\rceil +1$.
Therefore $h_{U_{6}}^{d}(n)\leq \lceil \log _{2}(n+1)\rceil +1$. Since the function $h_{U_{6}}^{d}$ has the type of behavior  \textrm{LOG}, the information system $U_6$ belongs to the class $W_1$ -- see Table \ref{tab1}.

We now show that the information system $U_{6}$ is not $d$-reachable. Assume
the contrary. Choose natural $n$ such that $\lceil \log _{2}(n+1)\rceil +1<n$%
. Consider a problem $z=(\nu ,p_{1},\ldots ,p_{n})$ such that, for any $\bar{%
\delta}_{1},\bar{\delta}_{2}\in \{0,1\}^{n}$, if $\bar{\delta}_{1}\neq \bar{%
\delta}_{2}$, then $\nu (\bar{\delta}_{1})\neq \nu (\bar{\delta}_{2})$. Let
for the definiteness, $z(i)=i$ for $i=1,\ldots ,n$ and $z(a)=n+1$ for any $a\in
A_{6}\setminus \{1,\ldots ,n\}$. According to the assumption, there exists a deterministic
decision tree $\Gamma $ over $U_{6}$ which solves the problem $z$ and for
which $h(\Gamma )\leq \lceil \log _{2}(n+1)\rceil +1$ and $L(\Gamma )\leq
2(n+1)$. It is clear, that for each $i\in \{1,\ldots ,n,n+1\}$, there is at
least one complete path $\xi $ of $\Gamma $ in which the terminal node is
labeled with the number $i$. Let us assume that there exists exactly one
complete path $\xi $ of $\Gamma $ in which the terminal node is labeled with
the number $n+1$. In this case, $A_{6}(\xi )=A_{6}\setminus \{1,\ldots ,n\}$.
It means that, for each $i\in \{1,\ldots ,n\}$, there exists a working node
of $\xi $ that is labeled with the attribute $p_{i}$. Hence $h(\Gamma
)\geq n$ which is impossible. Therefore, $L_{t}(\Gamma )\geq n+2$. Using
 Lemma \ref{L3} we obtain $L_{w}(\Gamma )\geq L_{t}(\Gamma )-1\geq n+1$. Thus,
$L(\Gamma )\geq 2(n+2)$ but this is impossible. Therefore $U_{6}$ is not $d$%
-reachable.
\end{proof}

Let $\mathbb{Z}$ be the set of integers and $\mathbb{Z}_{-}=\mathbb{Z}%
\setminus \mathbb{N}$. Define an information system $U_{7}=(A_{7},F_{7})$ as
follows: $A_{7}=\mathbb{Z}$ and $F_{7}=\{p_{i}:i\in \mathbb{N}\}\cup
\{l_{2^{i}}:i\in \mathbb{N}\}\cup F_{0}$. The functions $p_{i}$ and $%
l_{2^{i}}$ have value $0$ on the set $Z_{-}$, and $F_{0}$ is the set of all
functions from $\mathbb{Z}$ to $\{0,1\}$ that are equal to $0$ on the set $%
\mathbb{N}$.

\begin{lemma}
\label{L12}The information system $U_{7}$ belongs to the class $W_{4}$ and
does not satisfy the condition of restricted coverage.
\end{lemma}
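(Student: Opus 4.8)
The plan is to prove three things about $U_7$: (1) $I(U_7)=\infty$; (2) $U_7$ satisfies the condition of coverage; and (3) $U_7$ does not satisfy the condition of restricted coverage. Facts (1) and (2), fed into Propositions \ref{P1}--\ref{P4}, determine the type of $U_7$: infinite I-dimension gives $h_{U_7}^d(n)=n$ by Proposition \ref{P1} and $L_{U_7}^d(n)=L_{U_7}^a(n)=2^{n+1}$ by Propositions \ref{P3} and \ref{P4}, while coverage makes $h_{U_7}^a$ bounded by Proposition \ref{P2}. The resulting type $(\mathrm{LIN},\mathrm{CON},\mathrm{EXP},\mathrm{EXP})$ is precisely the row of Table \ref{tab1} defining $W_4$, so $U_7\in W_4$; fact (3) is then the remaining assertion.

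The common tool for (2) and (3) is a decomposition of solution sets. Given an $(n,U_7)$-system $S$ with solution set $B$, I would classify its equations by whether the attribute is a $p_i$, an $l_{2^i}$, or lies in $F_0$, and write $B^+=B\cap\mathbb{N}$, $B^-=B\cap\mathbb{Z}_-$. Because every function in $F_0$ vanishes on $\mathbb{N}$ and every $p_i,l_{2^i}$ vanishes on $\mathbb{Z}_-$, one checks that $B^+$ is nonempty only if all $F_0$-equations have right-hand side $0$, in which case $B^+$ is the solution set in $\mathbb{N}$ of the $\{p_i,l_{2^i}\}$-equations; symmetrically $B^-$ is nonempty only if all $p_i,l_{2^i}$-equations have right-hand side $0$, and then it is the solution set in $\mathbb{Z}_-$ of the $F_0$-equations.

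For (1), I would repeat the argument of Lemma \ref{L6.4} with the infinite set $\mathbb{Z}_-$ in place of $\mathbb{N}$ and the family $F_0$ in place of $F_4$: for each $m$, choosing $2^m$ distinct points of $\mathbb{Z}_-$ indexed by $\{0,1\}^m$ and $g_1,\dots,g_m\in F_0$ realizing all sign patterns on them yields an independent set of size $m$, so $I(U_7)=\infty$. For (2), I would show that every $(n,U_7)$-set is a finite union of $(1,U_7)$-sets. By the decomposition, $B^-$ (when nonempty) equals $\{g^*(x)=1\}$ for the function $g^*\in F_0$ that is $1$ exactly on $B^-$, hence is a $(1,U_7)$-set; and $B^+$ (when nonempty) is a $\{p_i,l_{2^i}\}$-definable subset of $\mathbb{N}$, i.e. an interval with power-of-two endpoints minus finitely many points, or a singleton. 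Such a set is the union of finitely many singletons $\{p_i(x)=1\}$ together with, when it is infinite, a single half-line $\{l_{2^a}(x)=1\}$ with $2^a$ beyond all removed points --- all $(1,U_7)$-sets. This establishes coverage.

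For (3), the main obstacle, I would exploit the coarse granularity of the $l_{2^i}$ on a flat interval $R_T=(2^T,2^{T+1}]$: since no power of two lies strictly between $2^T$ and $2^{T+1}$, every $l_{2^i}$ is constant on $R_T$, so the only $\{p_i,l_{2^i}\}$-definable subsets of $R_T$ are $R_T$ minus finitely many points and singletons. Fixing $m,t$, I would pick $T$ with $2^T>t+m-1$, choose distinct $a_1,\dots,a_{m-1}\in R_T$, and set $B_T=R_T\setminus\{a_1,\dots,a_{m-1}\}$; the system $\{l_{2^T}(x)=1,\,l_{2^{T+1}}(x)=0,\,p_{a_1}(x)=0,\dots,p_{a_{m-1}}(x)=0\}$ exhibits $B_T$ as an $(m+1,U_7)$-set lying in $\mathbb{N}$. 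By the decomposition, any $(m,U_7)$-set $D\subseteq B_T\subseteq\mathbb{N}$ is $\{p_i,l_{2^i}\}$-definable using at most $m$ equations; confining it to $R_T$ already requires the two equations $l_{2^T}=1$ and $l_{2^{T+1}}=0$, so $D$ is either a singleton or $R_T$ with at most $m-2$ points deleted. The latter still contains one of $a_1,\dots,a_{m-1}$ and so fails to lie in $B_T$; hence $B_T$ can only be covered by singletons, of which at least $2^T-(m-1)>t$ are needed. As $m,t$ were arbitrary, restricted coverage fails. (This is exactly the transfer of the failure of restricted coverage for $U_2$ from Lemma \ref{L6.2}, via the remark that an $(m,U_7)$-set contained in $\mathbb{N}$ is an $(m,U_2)$-set.) The delicate point throughout is keeping track of how the two attribute families interact on the halves $\mathbb{N}$ and $\mathbb{Z}_-$, which is precisely what the decomposition is built to control.
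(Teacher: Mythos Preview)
Your approach is correct and matches the paper's: both establish infinite I-dimension via $F_0$, obtain the CON behavior by decomposing every definable set into singletons $\{p_i(x)=1\}$, half-lines $\{l_{2^i}(x)=1\}$, and sets $\{g(x)=1\}$ with $g\in F_0$, and refute restricted coverage with the same counterexample $B=(2^r,2^{r+1}]$ minus $m-1$ points. One small imprecision: your claim that ``confining $D$ to $R_T$ already requires the two equations $l_{2^T}=1$ and $l_{2^{T+1}}=0$'' is only literally true once $T$ is large enough that the overflow $|I\setminus R_T|\ge 2^{T-1}$ exceeds the at most $m$ available $p$-equations---so you should pick $T$ with $2^{T-1}>m$ as well as $2^T>t+m-1$ (the paper sidesteps this by simply asserting ``one can show that $|C|=1$'').
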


\begin{proof}
Let $n\in \mathbb{N}$. One can show that there are attributes $g_{1},\ldots
,g_{n}\in F_{0}$ such that $N_{U_{7}}(g_{1},\ldots ,g_{n})=2^{n}$. Therefore
the information system $U_{7}$ has infinite I-dimension. Using  %
Proposition \ref{P3} we obtain that the function $L_{U_{7}}^{d}$ has the type of
behavior \textrm{EXP}. We now show that $h_{U_{7}}^{a}(n)\leq 1$.
Let $z=(\nu ,f_{1},\ldots ,f_{n})$ be an arbitrary problem over $U_{7}$. The
attributes $f_{1},\ldots ,f_{n}$ divide the set $A_{7}$ into finite number
of nonempty domains in each of which these attributes have fixed values. One
can show that each domain can be represented as a union of finite number of
subdomains such that each subdomain is the set of solutions on $A_{7}$ of
equation system of the kind $\{f(x)=1\}$, where $f\in F_{0}$, or $%
\{p_{i}(x)=1\}$, or $\{l_{2^{i}}(x)=1\}$. Using these facts it is easy to
show that there exists a decision tree $\Gamma $ over $U_{7}$ which solves
the problem $z$ nondeterministically and for which $h(\Gamma )=1$. Therefore
$h_{U_{7}}^{a}(n)\leq 1$. Hence the function $h_{U_{7}}^{a}$ has the type of
behavior \textrm{CON}. Taking into account that the function $L_U^d$ has the type of behavior \textrm{EXP}, we obtain that the information system $U_7$ belongs to the class $W_{4}$ -- see Table  \ref{tab1}.

Let us assume that the information system $U_{7}$ satisfies the condition of
restricted coverage with parameters $m$ and $t$. Let $r$ $be$ a natural
number such that $t<2^{r}-m+1$. Let $B$ be the set of
solutions on $A_{7}$ of the equation system
\[
\{l_{2^{r}}(x)=1,p_{2^{r}+1}(x)=0,\ldots
,p_{2^{r}+m-1}(x)=0,l_{2^{r+1}}(x)=0\}.
\]%
It is clear that $B=\{2^{r}+m,2^{r}+m+1,\ldots ,2^{r+1}\}$ and $%
|B|=2^{r}-m+1 $. Let us assume that $B$ is a union of at most $t$ $(U_{7},m)$%
-sets. Let $C$ be a $(U_{7},m)$-set such that $C\subseteq B$. One can show
that $|C|=1$. Therefore $t\geq 2^{r}-m+1$ but this is impossible. Therefore
the information system $U_{7}$ does not satisfy the condition of restricted
coverage.
\end{proof}

\begin{lemma}
\label{L13}Let $U=(A,F)$ be an information system from the class $W_{4}$ which
satisfies the condition of restricted coverage. Then there exist positive
constants $c_{1}$ and $c_{2}$ such that $(c_{1},c\,_{2}^{n})$ is a boundary $%
a$-pair of the system $U$.
\end{lemma}

\begin{proof}
Let $U$ satisfy the condition of restricted coverage with parameters $m$
and $t$: any $(m+1,U)$-set is a union of at most $t$ $(m,U)$-sets. From
 Lemma \ref{L0} it follows that, for any $n\in \mathbb{N}$, any $(n,U)$-set
is a union of at most $t^{n}$ $(m,U)$-sets.

Let $z=(\nu ,f_{1},\ldots ,f_{n})$ be a problem over $U$. We now show that
there exists a decision tree $\Gamma $ over $U$ which solves the problem $z$
nondeterministically and for which $h(\Gamma )\leq m$ and $L(\Gamma )\leq (m+2)2^nt^n$.
Let $\bar{\delta}=(\delta _{1},\ldots ,\delta _{n})$ be a tuple from $%
\{0,1\}^{n}$ such that the equation system
\[
S(\bar{\delta})=\{f_{1}(x)=\delta _{1},\ldots ,f_{n}(x)=\delta _{n}\}
\]%
has a solution from $A$. For each solution $a\in A$ of this system, we have $%
z(a)=\nu (\bar{\delta})$. The set of solutions of $S(\bar{\delta})$ is a
union of $(m,U)$-sets $D_{1},\ldots ,D_{s}$, where $s\leq t^{n}$. Each of
these sets $D_{i}$ is the set of solutions of an $(m,U)$-system of
equations. For the considered $(m,U)$-system of
equations, we construct a complete path $\xi _{i}$ with $m$ working
nodes such that $A(\xi _{i})=D_{i}$ and the terminal node of $\xi _{i}$ is
labeled with the number $\nu (\bar{\delta})$. Denote $\Sigma (\bar{\delta}%
)=\{\xi _{1},\ldots ,\xi _{s}\}$. Let $\Sigma =\bigcup \Sigma (\bar{\delta}%
) $, where the union is considered among all tuples $\bar{\delta}\in
\{0,1\}^{n}$ such that the set of equations $S(\bar{\delta})$ has a solution
from $A$ (the number of such tuples is at most $2^{n})$. We identify initial
nodes of all paths from $\Sigma $. As a result, we obtain a decision tree $%
\Gamma $ over the information system $U$ which solves the problem $z$
nondeterministically and for which $h(\Gamma )\leq m$ and $L(\Gamma )\leq
(m+2)2^{n}t^{n}$. Denote $c_{1}=m$ and $c_{2}=(m+2)2t$. Then $h(\Gamma )\leq
c_{1}$ and $L(\Gamma )\leq c_{2}^{n}$. Taking into account that $z$ is an
arbitrary problem over $U$ with $\dim z =n$, we obtain that $(c_{1},c\,_{2}^{n})$ is a
boundary $a$-pair of the system $U$.
\end{proof}

\begin{proof}
[Proof of Theorem \ref{T2}] Each information system from the class $W_{1}$
satisfies the condition of coverage and the condition of restricted coverage,
and has finite I-dimension (see Table \ref{tab2}).

(a) The existence of both $d$-reachable and not $d$-reachable information
systems in the class $W_{1}$ follows from Lemmas  \ref{L6.1} and \ref{L11}.
Let $U$ be an information system from $W_{1}$ which is not $d$-reachable.
Then $U$ satisfies the condition of restricted coverage and has finite $I$%
-dimension. From here and from Theorem 2.1 \cite{Moshkov03} it follows that, for
each $\varepsilon $, $0<\varepsilon <1$, there exists a positive constant $c$
such that $(c(\log _{2}n)^{1+\varepsilon }+1,2^{c(\log _{2}n)^{1+\varepsilon
}+2})$ is a boundary $d$-pair of the system $U$.

(b) Let $U=(A,F)$ be an information system from the class $W_{1}$. Then $U$
satisfies the condition of coverage. Using Lemma \ref{L9} we obtain that the
system $U$ is not $a$-reachable. We know that the system $U$ satisfies the
condition of restricted coverage and has finite I-dimension. Using
Theorem 2.2 \cite{Moshkov03} we obtain that, for each $\varepsilon $, $0<\varepsilon
<1$, there exist positive constants $c_{1}$, $c_{2}$ such that, for each
problem $z$ over $U$, there exists a system $\Delta $ of decision rules of
the kind
\[
(f_{1}(x)=\delta _{1})\wedge \ldots \wedge (f_{m}(x)=\delta _{m})\rightarrow
z(x)=\sigma ,
\]%
where $f_{1},\ldots ,f_{m}\in F$, $\delta _{1},\ldots ,\delta _{m}\in
\{0,1\} $, and $\sigma \in \mathbb{N}$, that satisfies the following
conditions: (i) each rule is true for the problem $z$, (ii) for each $a$ $%
\in A$, there exists a rule from $\Delta $ that accepts $a$, (iii) the
number of conditions in the left-hand side of each rule is at most $c_{1}$,
and (iv) the number of riles in $\Delta $ is at most $2^{c_{2}(\log
_{2}n)^{1+\varepsilon }+1}$ where $n=\dim z$. One can transform the system
of decision rules $\Delta $ into a decision tree $\Gamma $ over $U$ which
solves the problem $z$ nondeterministically and for which $h(\Gamma )\leq
c_{1}$ and $L(\Gamma )\leq (c_{1}+2)2^{c_{2}(\log _{2}n)^{1+\varepsilon
}+1}=  2^{c_{2}(\log _{2}n)^{1+\varepsilon }+c_{3}}$, where $c_{3}=\log_2(c_1+2)+1$.
Note that complete paths
in $\Gamma $ correspond to rules from the system $\Delta $. Thus, $%
(c_{1},2^{c_{2}(\log _{2}n)^{1+\varepsilon }+c_{3}})$ is a boundary $a$-pair
of the system $U$.
\end{proof}

\begin{proof}
[Proof of Theorem \ref{T3}] Each information system from the class $W_{2}$
satisfies the condition of coverage, does not satisfy the condition of
restricted coverage, and has finite I-dimension (see Table  \ref{tab2}).

(a) Let $U$ be an information system from the class $W_{2}$. Then $U$ does
not satisfy the condition of restricted coverage. By Proposition  \ref{P1}, $%
h_{U}^{d}(n)=n$ for any $n\in \mathbb{N}$. Using Lemma  \ref{L7} we obtain
that the system $U$ is $d$-reachable.

(b) Let $U$ be an information system from the class $W_{2}$. Then $U$
satisfies the condition of coverage. Using Lemma \ref{L9} we obtain that the
system $U$ is not $a$-reachable. We know that $U$ does not satisfy the
condition of restricted coverage. Using Lemma \ref{L10} we obtain that $%
(n,L_{U}^{a}(n))$ is the optimal boundary $a$-pair of the system $U$.
\end{proof}

\begin{proof}
[Proof of Theorem \ref{T4}] Each information system from the class $W_{3}$ does
not satisfy the condition of coverage and the condition of restricted
coverage, and has finite I-dimension (see Table  \ref{tab2}).

(a) Let $U$ be an information system from the class $W_{3}$. Then $U$ does
not satisfy the condition of restricted coverage. By Proposition \ref{P1}, $%
h_{U}^{d}(n)=n$ for any $n\in \mathbb{N}$. Using Lemma  \ref{L7} we obtain
that the system $U$ is $d$-reachable.

(b) Let $U$ be an information system from the class $W_{3}$. Then $U$ does
not satisfy the condition of coverage. By  Proposition \ref{P2}, $%
h_{U}^{a}(n)=n$ for any $n\in \mathbb{N}$. Using Lemma \ref{L8} we obtain
that the system $U$ is $a$-reachable.
\end{proof}

\begin{proof}
[Proof of Theorem \ref{T5}] Each information system from the class $W_{4}$
satisfies the condition of coverage and has infinite I-dimension (see Table
 \ref{tab2}). From Lemmas  \ref{L6.4} and \ref{L12} it follows that the class $%
W_{4}$ contains both information systems that satisfy the
condition of restricted coverage and information systems that do not satisfy this condition.

(a) Let $U$ be an information system from the class $W_{4}$. Then $U$ has
infinite I-dimension. By  Proposition \ref{P1}, $h_{U}^{d}(n)=n$ for any $%
n\in \mathbb{N}$. Using  Lemma \ref{L7} we obtain that the system $U$ is $d$%
-reachable.

(b) Let $U$ be an information system from the class $W_{4}$. Then $U$
satisfies the condition of coverage. Using  Lemma \ref{L9} we obtain that the
system $U$ is not $a$-reachable. Let $U$ do not satisfy the condition of
restricted coverage. Using  Lemma \ref{L10} we obtain that $(n,L_{U}^{a}(n))$ is
the optimal boundary $a$-pair of the system $U$. Let $U$ satisfy the
condition of restricted coverage. From  Lemma \ref{L13} it follows that there
exist positive constants $c_{1}$ and $c_{2}$ such that $(c_{1},c\,_{2}^{n})$
is a boundary $a$-pair of the system $U$.
\end{proof}

\begin{proof}
[Proof of Theorem \ref{T6}] Each information system from the class $W_{5}$ does
not satisfy the condition of coverage and the condition of restricted
coverage, and has infinite I-dimension (see  Table \ref{tab2}).

(a) Let $U$ be an information system from the class $W_{5}$. Then $U$ does
not satisfy the condition of restricted coverage. By  Proposition \ref{P1}, $%
h_{U}^{d}(n)=n$ for any $n\in \mathbb{N}$. Using  Lemma \ref{L7} we obtain
that the system $U$ is $d$-reachable.

(b) Let $U$ be an information system from the class $W_{5}$. Then $U$ does
not satisfy the condition of coverage. By  Proposition \ref{P2}, $%
h_{U}^{a}(n)=n$ for any $n\in \mathbb{N}$. Using  Lemma \ref{L8} we obtain
that the system $U$ is $a$-reachable.
\end{proof}


\subsection*{Acknowledgements}
Research reported in this publication was supported by King Abdullah University of Science and Technology (KAUST).
\bibliographystyle{spmpsci}

\bibliography{time-space}

\end{document}